%% file: main.tex
\documentclass[a4paper,UKenglish,cleveref, autoref, thm-restate]{lipics-v2021}

\hideLIPIcs  

\title{
    FO and MSO Model Checking on Temporal Graphs
}

\author{Michelle D\"oring}
    {Hasso Plattner Institute, University of Potsdam, Potsdam, Germany \and \url{https://www.notion.so/Michelle-D-ring-1dd43d6e8b7c800eadbdd32d73e21b72?pvs=25}}
    {michelle.doering@hpi.de}
    {https://orcid.org/0000-0001-7737-3903}
    {The project on which this report is based was funded by the Federal Ministry of Research, Technology and Space under the funding code “KI-Servicezentrum Berlin-Brandenburg” 16IS22092. Responsibility for the content of this publication remains with the author.}
    
\author{Jessica Enright}
    {School of Computing Science, University of Glasgow, UK}
    {jessica.enright@glasgow.ac.uk}
    {https://orcid.org/0000-0002-0266-3292}
    {}

\author{Laura Larios-Jones}
    {School of Computing Science, University of Glasgow, UK}
    {l.larios-jones.1@research.gla.ac.uk}
    {https://orcid.org/0000-0003-3322-0176}
    {} 
    
\author{George Skretas}
    {Hasso Plattner Institute, University of Potsdam, Potsdam, Germany}
    {georgios.skretas@hpi.de}
    {https://orcid.org/0000-0003-2514-8004}
    {}
\makeatletter
\newcommand{\DescLabel}[1]{\textcolor{gray}{\descriptionlabel{#1}}}
\makeatother

\authorrunning{M. Döring, J. Enright, L. Larios-Jones, G. Skretas} 

\Copyright{Jane Open Access and Joan R. Public} 

\ccsdesc[500]{Theory of computation~Logic}
\ccsdesc[500]{Mathematics of computing~Graph algorithms}
\ccsdesc[300]{Theory of computation~Fixed parameter tractability}

\keywords{temporal graphs, dynamic graphs, time-varying graphs, parameterized complexity, logic, meta-theorems, monadic second-order logic, first-order logic, treewidth, planar, nowhere dense, lifetime, temporal degree, vertex-interval membership width, tree-interval membership width} 

\category{} 

\relatedversion{} 

\nolinenumbers 

\EventEditors{John Q. Open and Joan R. Access}
\EventNoEds{2}
\EventLongTitle{42nd Conference on Very Important Topics (CVIT 2016)}
\EventShortTitle{CVIT 2016}
\EventAcronym{CVIT}
\EventYear{2016}
\EventDate{December 24--27, 2016}
\EventLocation{Little Whinging, United Kingdom}
\EventLogo{}
\SeriesVolume{42}
\ArticleNo{23}
\usepackage[dvipsnames]{xcolor}
\usepackage{xspace}
\usepackage{fancyhdr}
\usepackage{tcolorbox}
\usepackage{graphicx}
\usepackage{enumitem}

\usepackage{amssymb}
\usepackage{amsthm}
\usepackage{etoolbox}

\usepackage{amsmath,mathtools}
\usepackage{bbm}

\usepackage{stmaryrd} 
\usepackage{stackengine}
\usepackage[english,ruled,vlined,algo2e,linesnumbered]{algorithm2e}
\usepackage{bm}
\usepackage{thm-restate}

\usepackage{multirow}
\usepackage{tabularray}
\usepackage{tabularx}
\usepackage{colortbl}

\usepackage{anyfontsize}
\usepackage[percent]{overpic}
\usepackage{etoc}    
\usepackage{xparse}

\usepackage{todonotes}

\input{probenv}

\definecolor{darkgray}{gray}{0.25}
    \newenvironment{construction*}{%
        \par\vspace{0.25\baselineskip}
        \pushQED{\qed}%
        \noindent\textcolor{darkgray}{$\triangleright$}\;%
        \noindent\textbf{\textcolor{darkgray}{Construction~\theconstruction.}}\ %
    }{%
      \popQED\par\vspace{0.25\baselineskip}
    }
\theoremstyle{remark}

    \declaretheoremstyle[
      headfont=\bfseries,
      headformat=$\lozenge$\ \NAME\NOTE, 
      headpunct={.},
      notefont=\bfseries,
      bodyfont=\normalfont,
      spaceabove=6pt, spacebelow=6pt
    ]{cleancons}
    \theoremstyle{cleancons}
    
\newcommand{\ie}{i.\,e.,\xspace}
\newcommand{\eg}{e.\,g.,\xspace}

\newcommand{\scal}{\ensuremath{\mathcal{S}}\xspace}

\newcommand{\gcal}{\ensuremath{\mathcal{G}}\xspace}
\newcommand{\ecal}{\ensuremath{\mathcal{E}}\xspace}

\newcommand{\tuple}[1]{\ensuremath{\langle {#1} \rangle}\xspace}

\newcommand{\smallparagraph}[1]{\vspace{0.3em}\noindent\emph{#1}\quad}
\newcommand{\bigparagraph}[1]{\vspace{0.4em}\noindent\textbf{#1}}

\newcommand{\tg}{temporal graph\xspace}
\newcommand{\tgs}{temporal graphs\xspace}

\newcommand{\tdegree}{\ensuremath{\Delta^t}\xspace}
\newcommand{\sdegree}{\ensuremath{\Delta^s}\xspace} 

\newcommand{\lifetime}{\ensuremath{\Lambda}\xspace}

\newcommand{\NP}{\ensuremath{\mathtt{NP}}\xspace}

\newcommand{\paraNP}{\ensuremath{\mathtt{paraNP}}\xspace}

\newcommand{\Wone}{\ensuremath{\mathtt{W}[1]}\xspace}

\newcommand{\FPT}{\ensuremath{\mathtt{FPT}}\xspace}
\newcommand{\XP}{\ensuremath{\mathtt{XP}}\xspace}

\newcommand{\poly}{\ensuremath{\mathtt{poly}}\xspace}

\newcommand{\bigoh}{\mathcal{O}}

    \newcommand{\tw}{\ensuremath{\mathsf{tw}}\xspace}
    \newcommand{\pw}{\ensuremath{\mathsf{pw}}\xspace}

    \newcommand{\densef}{\ensuremath{\mathsf{d}\xspace}}
    \newcommand{\vim}{\ensuremath{\mathsf{vim}}\xspace}
    \newcommand{\vimText}{VIM\xspace}
    \newcommand{\tim}{\ensuremath{\mathsf{tim}}\xspace}
    \newcommand{\timText}{TIM\xspace}
    
    \newcommand{\twlifetime}{\ensuremath{\tw+\lifetime}\xspace}
    \newcommand{\twdegree}{\ensuremath{\tw+\tdegree}\xspace}

\newcommand{\MC}{\textsc{MC}\xspace}
\newcommand{\MCFO}{\textsc{FO MC}\xspace}
\newcommand{\MCMSO}{\textsc{MSO MC}\xspace}

\newcommand{\timbagTEXT}{TIM-bag\xspace}
\newcommand{\timbagTEXTs}{TIM-bags\xspace}
\newcommand{\vtimbag}
    {\ensuremath{\pi}\xspace}
\newcommand{\vtimbagset}
    {{\ensuremath{\Pi}}\xspace}

\newcommand{\inc}{\ensuremath{\mathsf{inc}}\xspace}
\newcommand{\edge}{\ensuremath{\mathsf{edge}}\xspace}
\newcommand{\Sedgesource}{\ensuremath{\mathsf{source}}\xspace}

\newcommand{\pres}{\ensuremath{\mathsf{pres}}\xspace}
\newcommand{\timebefore}{\ensuremath{<^T}\xspace}
\newcommand{\Sedgetarget}{\ensuremath{\mathsf{target}}\xspace}
\newcommand{\Sedgetime}{\ensuremath{\mathsf{pres}}\xspace}

\newcommand{\psuc}{\ensuremath{\mathsf{psuc}\xspace}}

\newcommand{\bagbefore}{\ensuremath{\mathsf{next}\xspace}}
\newcommand{\bag}{\ensuremath{\mathsf{bag}}\xspace}

\newcommand{\relGraph}[1]{\ensuremath{\lfloor #1 \rfloor}\xspace}
\newcommand{\relGraphLifetime}[1]{\ensuremath{\lfloor #1 \rfloor_{\lifetime}}\xspace}
\newcommand{\relGraphDegree}[1]{\ensuremath{\lfloor #1 \rfloor_{\tdegree}}\xspace}
\newcommand{\relGraphVim}[1]{\ensuremath{\lfloor #1 \rfloor_{\vim}}\xspace}
\newcommand{\relGraphTim}[1]{\ensuremath{\lfloor #1 \rfloor_{\tim}}\xspace}

\DeclareMathOperator{\gaifman}{Gf}
\newcommand{\eps}{\varepsilon}
\usepackage{pgf}
\makeatletter
\newdimen\@widthOfTo%
\newdimen\@widthOfLand%
\newdimen\@widthOfImplies%
\pgfmathsetmacro{\@scaleFactorImplies}{\@widthOfTo/\@widthOfImplies}%
\pgfmathsetmacro{\@scaleFactorTo}{\@widthOfLand/\@widthOfTo}%
    \renewcommand{\implies}{\mathrel{\raisebox{0.3ex}{\scalebox{\@scaleFactorImplies}{\ensuremath{\Longrightarrow}}}}}%
\pgfmathsetmacro{\@scaleFactorTo}{\@widthOfLand/\@widthOfTo}%
    \newcommand{\biimplies}{\mathrel{\raisebox{0.3ex}{\scalebox{\@scaleFactorImplies}{\ensuremath{\Longleftrightarrow}}}}}%
\makeatother

\input{_macros-mso-list}

\newif\iflong
\newif\ifshort
\longtrue 

\iflong
\else
\shorttrue
\fi

\begin{document}

\maketitle

\begin{abstract}
    Algorithmic meta-theorems provide an important tool for showing tractability
    of graph problems on graph classes defined by structural restrictions. While such results are well established for static graphs, corresponding frameworks for temporal graphs are comparatively limited.
    
    In this work, we revisit past applications of logical meta-theorems to temporal graphs and develop an extended unifying logical framework.
    Our first contribution is the introduction of logical encodings for the parameters vertex-interval-membership (VIM) width and tree-interval-membership (TIM) width, parameters which capture the signature of vertex and component activity over time. 
      Building on this, we extend existing monadic second-order (MSO) meta-theorems for bounded lifetime and temporal degree to the parameters VIM and TIM width, and establish novel first-order (FO) meta-theorems for all four parameters.
    
    Finally, we signpost a modular lexicon of reusable FO and MSO formulas for a broad range of temporal graph problems, and give an example. This lexicon allows new problems to be expressed compositionally and directly yields fixed-parameter tractability results across the four parameters we consider. \footnote{For the purpose of open access, the author(s) has applied a Creative Commons Attribution (CC BY) licence to any Author Accepted Manuscript version arising from this submission. No data were created or analysed in this work.}
    
\end{abstract}

\newpage
\section{Introduction}

\input{_intro-mi-2}

\subsection{Our contribution}
    
    We provide a unifying framework including the pre-existing encodings for lifetime (\lifetime) and temporal degree (\tdegree), and introduce novel encodings for the parameters \vim and \tim. These encodings are given in \Cref{sec:graph_encodings}.
    
    In \Cref{sec:temporal-meta}, we show that all four encodings admit the application of FO and MSO meta-theorems. Concretely, we analyse the \emph{Gaifman graph} induced by each encoding and prove that its structure is bounded by the respective parameter.
    This allows us to lift the static meta-theorems to temporal graphs, yielding the following temporal meta-theorems: any property expressible in FO/MSO logic over the corresponding encoding yields an \FPT algorithm parameterised by the respective parameter.
    \begin{restatable}{theorem}{temporalMSOmeta}
    \label{thm:temporal-mso-meta-short}
        \textsc{MSO Model Checking} on a temporal graph $\gcal=(G,\lambda)$ is \FPT when parameterised by (i)~$\twlifetime$, (ii)~$\twdegree$, (iii)~$\vim$, or (iv)~$\tim$, where \tw is the treewidth of $G$.
    \end{restatable}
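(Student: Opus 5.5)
The plan is to reduce each case to Courcelle's theorem. Courcelle's theorem says that MSO model checking on a relational structure $\mathcal{A}$ is \FPT when parameterised by $|\varphi|$ plus the treewidth of the Gaifman graph $\gaifman(\mathcal{A})$. It therefore suffices, for each parameter $p \in \{\twlifetime, \twdegree, \vim, \tim\}$, to do two things. First, encode $\gcal=(G,\lambda)$ as a relational structure $\relGraph{\gcal}_p$ over a fixed vocabulary, computable in polynomial time. Second, show that $\tw(\gaifman(\relGraph{\gcal}_p)) \le f(p)$ for some computable $f$. A temporal MSO property $\varphi$ is read as a formula over the encoding from \Cref{sec:graph_encodings}, so $\gcal \models \varphi$ iff $\relGraph{\gcal}_p \models \varphi$. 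Courcelle's theorem then gives running time $g(|\varphi|, f(p)) \cdot |\relGraph{\gcal}_p|$. To invoke it we also need a tree decomposition of width $f(p)$. Bodlaender's algorithm supplies one, or we construct it directly from the decompositions witnessing $p$.

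For (i) $\twlifetime$, the universe will be the vertices, the edges (as incidence elements), and the time steps $1,\dots,\lifetime$. The relations are $\inc$ between edges and their endpoints, $\pres(e,t)$, and the order $\timebefore$ on times. Take a tree decomposition of the incidence graph of $G$ of width $O(\tw)$ and add all $\lifetime$ time elements to every bag. All Gaifman edges are then covered: time–time edges lie in every bag, and edge–time edges lie in any bag containing $e$. The width is $O(\tw + \lifetime)$.

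For (ii) $\twdegree$, the time points cannot all be placed in every bag. Instead, each edge appearance $(e,t)$ becomes its own element, linked by $\inc$ to $e$ (or to the endpoints) and carrying unary or local relations describing relative order. Since every vertex has at most $\tdegree$ temporal edges, each edge has at most $\tdegree$ appearances. So we replace $e$ in every bag by $e$ together with all its appearance elements. The order relation between appearances at a common vertex has only $O(\tdegree)$ participants per vertex, so adding these next to the vertex keeps the width $O(\tw\cdot\tdegree)$.

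For (iii) \vim and (iv) \tim, the universe will contain the vertices, the edge appearances, and one element per bag of the vertex-interval membership sequence (respectively the \timbagTEXTs $\vtimbag\in\vtimbagset$). The relations are $\bag(v,b)$, $\bagbefore(b,b')$ linking consecutive bags, and $\pres$ linking each edge appearance at time $t$ to bag $t$. The tree decomposition will be a path (for \vim) or a tree following the bag structure (for \tim). Its node $t$ contains bag elements $b_{t}, b_{t+1}$, the at most $\vim$ (resp.\ \tim) vertices active at $t$, and attached leaf nodes for each edge appearance at time $t$, each containing that appearance together with its endpoints and $b_t$. The interval property of membership guarantees that every vertex occupies a contiguous set of nodes, so this is a valid decomposition of width $O(\vim)$ (resp.\ $O(\tim)$).

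I expect the main obstacle to be (iv). There one must check that the Gaifman edges induced by component-level membership still respect a bounded-width tree layout, especially where components merge or split between consecutive time steps. I would first try to bundle all \timbagTEXTs at time $t$ together with the vertices of the components involved into a single node. That node has size bounded by the \tim width. Contiguity would then follow from the definition of TIM-width.
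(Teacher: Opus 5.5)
Your cases (i)--(iii) are sound and follow the paper's route: encode, bound the treewidth of the Gaifman graph by a function of the parameter, then apply Courcelle. Your constructions even give slightly better width bounds, namely $\mathcal{O}(\tw+\lifetime)$ and $\mathcal{O}(\vim)$.

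The gap is case (iv). You do not actually carry it out, and the fix you sketch for it fails.

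Your joint paragraph for (iii)/(iv) does not transfer to \tim. A \timText decomposition has many \timbagTEXTs with the same label $t$, so there is no single ``node $t$'' with bag elements $b_t,b_{t+1}$. Moreover, the number of vertices alive at $t$ is bounded by $\vim$, not by $\tim$.

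Your proposed remedy makes the problem explicit. Bundling all \timbagTEXTs at time $t$ with the vertices of their components into one node gives a node containing at least every vertex alive at $t$, so its size is bounded only by $\vim(\gcal)$. Whenever many small components are alive simultaneously---precisely the situation \timText width is designed for---$\tim$ stays small while this node is large. In effect you have rebuilt the \vimText path decomposition and proved (iii) again.

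The missing idea is that no bundling is needed: merges and splits are already absorbed by $T$ being a tree. Index the tree decomposition of $\gaifman(\relGraphTim{\gcal})$ by $T$ itself, as the paper does in \Cref{lem:tim_enc-preserves-tw}. Root $T$, and let $B_i$ consist of:
\begin{itemize}
\item the vertices of $\vtimbag_i$;
\item the bag element $\vtimbag_i$ and the bag element of its parent;
\item the temporal edges at time $\tau(i)$ with both endpoints in $\vtimbag_i$.
\end{itemize}
This is a valid tree decomposition:
\begin{itemize}
\item Every $\bagbefore$-pair corresponds to an edge of $T$, i.e.\ a child--parent pair, so it is covered by the child's bag.
\item Each bag element occurs only in its own node and its children's nodes, which form a star.
\item Each temporal edge occurs in exactly one node.
\item The nodes containing a vertex $v$ are connected, because $v$ lies in exactly one \timbagTEXT per time step and consecutive ones are adjacent in $T$ by condition (iii) of \Cref{def:TIMW}.
\end{itemize}
This gives width $\mathcal{O}(\tim^2)$. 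Attaching the temporal edges as leaves, as you did for \vim, improves this to $\mathcal{O}(\tim)$.

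Alternatively, you could put all next-time neighbours of $i$ into $B_i$. This also works: the \timbagTEXTs at a fixed time are disjoint, so at most $|\vtimbag_i|\le\tim$ of them meet $\vtimbag_i$. Either way, you need to make one such choice explicitly and justify it.
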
\vspace{-0.8em}
    \begin{restatable}{theorem}{temporalFOmeta}
    \label{thm:temporal-fo-meta-short}
        \textsc{FO Model Checking} on a temporal graph $\gcal=(G,\lambda)$ is \FPT when parameterised by (i)~$\lifetime$ if the footprint $G$ is nowhere dense, (ii)~$\tdegree$, (iii)~$\vim$, or (iv)~$\tim$.
    \end{restatable}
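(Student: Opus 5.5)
The statement is the FO meta-theorem, \Cref{thm:temporal-fo-meta-short}. I will prove it by reduction to known static FO meta-theorems applied to the Gaifman graph of the relational encoding of $\gcal$.

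For each parameter $p\in\{\lifetime,\tdegree,\vim,\tim\}$, let $\relGraph{\gcal}_p$ be the corresponding relational structure from \Cref{sec:graph_encodings}. Its universe consists of vertices, edges, time steps (and, for \vim and \tim, bag elements). Its relations are $\inc$, $\pres$, $\timebefore$ and, for \vim/\tim, $\bag$ and $\bagbefore$. Given an FO sentence $\varphi$ over the temporal vocabulary, I interpret it on $\relGraph{\gcal}_p$. The encoding is computable in polynomial time and has size polynomial in $|\gcal|$. The formula is unchanged apart from a bounded-size rewriting that depends only on $\varphi$.

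It then suffices to show that the Gaifman graph $\gaifman(\relGraph{\gcal}_p)$ lies in a class on which FO model checking is \FPT. For cases (ii)--(iv), I aim to show bounded degree. In that case, Seese's theorem (linear-time FO model checking on bounded-degree structures) gives \FPT in $|\varphi|+p$.

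\textbf{(ii) \tdegree.} A vertex is incident to at most $\tdegree$ temporal edges, each edge has boundedly many incidences, and each time element touches only its own edges. However, the total order $\timebefore$ over time steps would create unbounded degree. So the encoding must replace it by a successor relation $\psuc$ and keep times local to edges, so that each element has degree $f(\tdegree)$. Any FO property using the order must then be expressible without it, and I would check that the encoding in \Cref{sec:graph_encodings} indeed does this.

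\textbf{(iii) \vim and (iv) \tim.} Each vertex (respectively each component) belongs to consecutive bags. Each bag contains at most $\vim$ (respectively $\tim$) elements, and bags are linked only by $\bagbefore$. This yields bounded degree within each bag. The main risk is a vertex that is active over a long interval: it would be adjacent to many bags. I would handle this by linking a vertex only to the first and last bag of its interval, or by representing membership via per-bag copies, with copies in consecutive bags connected by an identity relation. Equality of copies is then FO-definable only through a chain of unbounded length, which is a problem. To avoid it, I would show that the local portion of the Gaifman graph has bounded degree, or alternatively that it has bounded treewidth $\times$ bounded degree, and thus bounded expansion. Frick--Grohe's bounded-local-treewidth result or the bounded-expansion theorem of Dvo\v{r}\'ak--Kr\'al'--Thomas would then apply.

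\textbf{(i) \lifetime with $G$ nowhere dense.} The Gaifman graph is obtained from $G$ by subdividing each edge (edge elements) and adding $\lifetime$ time elements adjacent to edges. This is a bounded-depth operation, but the time elements are universal hubs, and attaching them destroys nowhere density. Instead I would encode time via $\lifetime$ unary predicates $\pres_t$ on edge elements, which is allowed since $\lifetime$ is a parameter. I would rewrite $\varphi$ by expanding each time quantifier into a disjunction or conjunction over the $\lifetime$ values, which increases $|\varphi|$ by a factor of only $f(\lifetime,|\varphi|)$. The Gaifman graph is then the $1$-subdivision of $G$, which is nowhere dense whenever $G$ is. Grohe--Kreutzer--Siebertz then gives FO model checking in time $f(|\varphi|,\lifetime,\eps)n^{1+\eps}$.

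I expect the main obstacle to be (iii)/(iv): proving that the \vim/\tim encoding has Gaifman graph of degree bounded by a function of the width, while long-lived vertices and the ordering relations remain FO-expressible.
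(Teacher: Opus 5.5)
There is a genuine gap in (iii)/(iv). Bounded degree is the wrong target there, and it fails for these encodings however the bags are linked, because \tdegree\ is not bounded by \vim\ or \tim. Take a centre $c$ with an edge $cu_i$ at time $i$ for every $i\in[\lifetime]$. Every VIM bag is $\{c,u_t\}$, so $\vim=2$ (and $\tim\le 2$). Yet $c$ is adjacent via $\inc$ to $\lifetime$ temporal-edge elements and via $\bag$ to $\lifetime$ bag elements. Linking $c$ only to its first and last bag, or splitting it into per-bag copies, leaves the incidence edges untouched and, as you note, breaks FO-expressibility. Your closing fallback (bounded local treewidth, or bounded treewidth together with bounded degree) is only asserted.

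The missing idea is that long-lived vertices are harmless for \emph{treewidth}. Index a decomposition of $\gaifman(\relGraphVim{\gcal})$ by the VIM path, with bags $B_t=\vtimbag_t\cup\{(e,t)\in\ecal\}\cup\{\vtimbag_t,\vtimbag_{t+1}\}$, where the middle set is all temporal edges at time $t$. For TIM, index it by the TIM tree, with $B_i=\vtimbag_i\cup\{\text{temporal edges at time }\tau(i)\text{ inside }\vtimbag_i\}\cup\{\vtimbag_i,p(\vtimbag_i)\}$. Each vertex then occupies a contiguous set of bags. Every pair related by $\inc$, $\bag$, $\pres$ or $\bagbefore$ is covered, and the width is $O(\vim^2)$, resp.\ $O(\tim^2)$. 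Bounded treewidth implies nowhere denseness, so Grohe--Kreutzer--Siebertz applies. Alternatively, since every FO sentence is an MSO sentence, (iii) and (iv) follow from Courcelle directly.

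Case (ii) reaches the right conclusion for the wrong reasons. The degree encoding has no time elements, and $\psuc$ is not a successor relation on times: it links temporal edges that share an endpoint, in time order. A direct count gives maximum degree at most $2\tdegree$.

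For (i) your route differs from the paper's and can be made to work, but its premise is false: attaching the $\lifetime$ time elements does not destroy nowhere denseness. A set of $\lifetime$ extra vertices, even universal ones, can host at most $\lifetime$ branch vertices and lie inside at most $\lifetime$ model paths. Discarding those branch vertices and the endpoints of those paths leaves a clique model avoiding $L$, so the excluded clique grows only additively. This is \Cref{lem:lifetime_enc-preserves-nwdense}, with $g(r)=\densef(\lceil r/2\rceil)+2\lifetime$.

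Your elimination of the time elements is sound, with blow-up $\lifetime^{O(|\varphi|)}$: split every quantifier according to whether it ranges over $L$, replace the $L$-case by the $\lifetime$ constants, turn $\pres(\cdot,t)$ into a unary $\mathsf{pres}_t$, and evaluate atoms involving $\timebefore$. It even buys something. The resulting Gaifman graphs form one fixed nowhere dense class independent of $\lifetime$, so $\lifetime$ enters only through the formula length. The paper's route instead needs the Grohe--Kreutzer--Siebertz running time to depend on the class only via $\densef$ at radii bounded in terms of $\varphi$.

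You still need the paper's Step~1, however. That graph is not the $1$-subdivision of $G$, since an edge with several labels yields parallel paths of length~$2$. You must check these cannot help a depth-$r$ clique model, which gives denseness function $\densef(\lceil r/2\rceil)$.
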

    
    Finally, 
    \iflong
    in \Cref{sec:cookbook}, 
    \fi
    we provide the \textit{Logic Cookbook}%
    \ifshort
     ~(see Section 5 in the full version):
    \fi a collection of FO and MSO formulas for over a dozen temporal graph problems studied in the literature.
    Besides serving as a reusable reference for future work, this section highlights the simplicity and modularity of this logical tool. 
    For example, in each of \cite{bumpus_edge_2023,cauvi_parameterized_2025,chakraborty_algorithms_2024,enright_families_2025,herrmann_timeline_2025}, a substantial contribution is an \FPT result proven via a specialised dynamic program and lengthy correctness arguments; in our cookbook, the same \FPT result follows from a compact formula combined with the meta-theorem. 
    
    \ifshort Some results and proofs are omitted due to space constraints and can be found in the full version of the paper appended at the end of the short version. \fi


\subsection{Related Work} \label{sec:related_work}

\input{_sec--related-work} 

\section{Preliminaries}
    A \textit{static graph} $G = (V, E)$ is a pair composed of a vertex set $V$ and an edge set $E$, where every element of the edge set is a pair of vertices, and two vertices in an edge are said to be \textit{adjacent}.  A \emph{clique} on $r$ vertices, denoted $K_{r}$ is a graph in which every pair of vertices is adjacent.

    \iflong
    \subsection{Temporal Graphs} \fi
    A \textit{\tg} $\gcal=(V,E,\lambda)$ consists of a static graph $G=(V,E)$, known as the \textit{footprint}, and a \textit{labelling} function $\lambda$. 
    The largest value of $\lambda$ is referred to as the \textit{lifetime} \lifetime.
    Alternatively, a temporal graph can be defined as a sequence $\mathcal{G} = \big( G_t = (V, E_t) \big)_{t \in \Lambda}$ of static \textit{snapshots}~$G_t$.
    \iflong The~two representations are equivalent via: $E_t := \{ e \in E : t \in \lambda(e) \}$ and $\lambda(e) := \{ t \in \Lambda : e \in E_t \}$. \fi   
    A~\tg is \emph{(un)directed} if the footprint is (un)directed.
    A pair $(e, t)$, where $e \in E$ and $t \in \lambda(e)$, is a \textit{temporal edge} with \textit{label} $t$. We denote the set of all temporal edges of \gcal by $\ecal$.
    A \textit{temporal path} is a sequence of temporal edges $\tuple{(e_i,t_i)}$ where $\tuple{e_i}$ forms a path in the footprint and the time labels $\tuple{t_i}$ are non-decreasing. If the time labels are strictly increasing, the path is \textit{strict}; otherwise, it is \textit{non-strict}. 
    If there is a temporal path from $u$ to $v$, we say \emph{$u$ reaches $v$}.
    A temporal graph is \textit{temporally connected} if for all $u,v\in V$, $u$ reaches $v$ and vice versa. 
    A temporal graph where reachability is considered exclusively using (non-)strict paths is called a \emph{(non-)strict \tg}.
    
    \iflong
    \subsection{Parameterised Complexity}
            We refer to the standard books for a basic overview of parameterised complexity theory~\cite{cygan_parameterized_2015,downey_fundamentals_2013,fomin_kernelization_2019}. At a high level, parameterised complexity studies the complexity of a problem with respect to its input size $n$  and the size of a parameter $k$.
            A problem is \emph{fixed-parameter tractable} (\FPT) by $k$ if it can be solved in time $f(k) \cdot \poly(n)$, where $f$ is a computable function.
            Showing that a problem is $\Wone$-hard parameterised by $k$ rules out the existence of such an \FPT algorithm under the assumption $\Wone \neq \FPT$.
            A less favourable, but still positive, outcome is an algorithm with an exponential running time $\bigoh(n^{f(k)})$ for some computable function~$f$; problems admitting such algorithms belong to the class~$\XP$.
            A problem is $\paraNP$-hard if it remains \NP-hard even when the parameter $k$ is constant. Thus, $\paraNP$-hardness excludes both \FPT and \XP algorithms under standard complexity assumptions.
        \fi

    \iflong \subsection{Parameters for Temporal Graphs}\fi
        We distinguish two families of parameters for temporal graphs: \emph{static} parameters which constrain the structure  of the (undirected) footprint and \emph{temporal} parameters which bound the temporal structure locally or the activity over time. See \Cref{fig:parameter-table} for an overview.

        \iflong\paragraph*{Static graph parameters}\else
        \bigparagraph{Static graph parameters.} \quad\fi
        We consider the maximum static degree $\sdegree$, the pathwidth $\pw$, the treewidth $\tw$, and (for graph classes) nowhere denseness.
        
        The \textit{static degree} of a vertex $v$ is defined as $\delta^s(v) = \lvert \{e\in E \colon v\in e\}\rvert$ and $\sdegree=\max_{v\in V} \delta(v)$ 
         denotes the \textit{maximum static degree} of \gcal.  \iflong
        
        \fi The \textit{treewidth} of a graph measures how structurally close it is to being a tree. Small treewidth means that the graph can be organised into small, overlapping pieces with a tree-like skeleton.
        \iflong Trees have treewidth~$1$, an $m\times n$ grid has treewidth $\max(m,n)$, and a complete graph on $k$ vertices has treewidth~$k-1$. \fi
        The treewidth of a \tg always refers to the \emph{undirected footprint}; for directed \tgs we take the treewidth of the underlying undirected graph of the directed footprint.
        \begin{definition}[Tree Decomposition, Treewidth~\cite{cygan_parameterized_2015}]
            Let $G=(V,E)$ be an undirected static graph.  
            A \emph{tree decomposition} of $G$ is a pair 
            $(T,\{B_u \colon u \in V(T)\})$ consisting of a tree $T$ and a family of bags $B_u \subseteq V$ such that
            \vspace{-0.5em}
            \begin{enumerate}[label=(\roman*)]
                \item $\bigcup_{u \in V(T)} B_u = V$,
                \item for every $e \in E$ there exists $u \in V(T)$ with $e \subseteq B_u$, and
                \item for every $v \in V$, the set $\{u \in V(T) \colon v \in B_u\}$ induces a connected subtree of~$T$.
            \end{enumerate}
            \vspace{-0.4em}
            The \emph{width} of $T$ is defined as $width(T) := \max_{u \in V(T)} \lvert B_u\rvert - 1$.
            The \emph{treewidth} of $G$ is 
            $
                \tw(G) := \min\{width(T) \colon T \text{ is a tree decomposition of } G\}.
            $
        \end{definition}
        
        The \textit{pathwidth} $\pw(G)$ is defined analogously to treewidth, with the restriction that the \emph{path decomposition} is required to be a path instead of a tree. Hence $\pw(G) \ge \tw(G)$ for all graphs $G$.
        A class of graphs has \textit{bounded treewidth} (respectively \textit{bounded pathwidth}) if there exists a constant $k$ such that every graph in the class satisfies $\tw(G)< k$ (resp. $\pw(G)<k$).
        \begin{table}[t]
            \begin{tabularx}{\linewidth}{c l l X}
                \hline
                Symbol & Name & Family & What it controls \\
                \hline
                $\sdegree$ & static degree & static & \# static edges locally \\
                $\pw$ & pathwidth & static & footprint path-likeness \\
                $\tw$ & treewidth & static & footprint tree-likeness \\
                $\densef$ & denseness function & static & sparsity measure \\
                \hline
                $\Lambda$ & lifetime & temporal & largest time label \\
                $\tdegree$ & temporal degree & temporal & \# temporal edges locally \\
                $\vim$ & vertex-interval membership & temporal & \# relevant vertices at a time\\
                $\tim$ & tree-interval membership & temporal & \# relevant, connected vertices at a time \\
                \hline
            \end{tabularx}
            \caption{Overview of the parameters considered in this paper.}
            \label{fig:parameter-table}
        \end{table}
        
        We use standard notions from topological minor theory; for details we refer to the book \cite{nesetril_sparsity_2012}. 
        Two graphs $G$ and $H$ are \emph{isomorphic} if there is a bijection between $V(G)$ and $V(H)$ that preserves adjacency.
        For $r\in\mathbb{N}$, an \textit{$r$-subdivision} of $H$ is obtained by selecting an arbitrary subset of edges and replacing each edge of that set by a path of length at most~$r + 1$ such that all paths are pairwise internally vertex-disjoint.
        The graph $H$ is a \textit{depth-$r$ topological minor}\footnote{In the literature, $\preceq^t_r$ is used to distinguish topological minors from classic minors; we will omit the $t$.} of a graph $G$, denoted $H\preceq_r G$, if some $r$-subdivision of $H$ is isomorphic to a subgraph of $G$.
        \begin{definition}[nowhere dense \cite{siebertz_nowhere_2016}]
            A graph class $\mathcal{C}$ is \emph{nowhere dense} if there exists a function $\densef\colon\mathbb{N}\to\mathbb{N}$ such that for every $r\in \mathbb{N}$ and every $G\in\mathcal{C}$ we have $K_{\densef(r)}\not\preceq_r G$.
            We call $\densef$ a \emph{denseness-function} of $\mathcal{C}$.
        \end{definition}
        Nowhere denseness of a graph class captures many common \textit{sparsity} notions on static graphs, including: bounded maximum degree, bounded treewidth/pathwidth, and planar graphs classes (classes for which we can draw graphs in a plane without intersecting edges).
        
        \iflong\paragraph*{Temporal graph parameters}\else \bigparagraph{Temporal graph parameters.}\quad\fi
        We consider the lifetime $\lifetime$, the maximum temporal degree $\tdegree$, the vertex-interval-membership width $\vim$, and the tree-interval-membership width $\tim$.

        The \textit{lifetime} \lifetime of a temporal graph is its largest time label. \iflong

        \fi The \textit{temporal degree} of a vertex $v$ at time $t$ is defined as $\delta^t(v) = \lvert \{(e,t) \colon v\in e,\; t\in\lambda(e)\}\rvert$ and $\tdegree=\max_{v\in V} \delta^t(v)$ denotes the \textit{maximum temporal degree} of~\gcal.\iflong 
        
        \fi The \emph{vertex-interval-membership (\vimText) width} and \emph{tree-interval-membership (\timText) width} capture the maximum number of vertices that are simultaneously `participating' in the temporal graph.
        Both measures use the \textit{activity-interval} of a vertex $v$, which starts at the first time at which $v$ has an incident edge and ends after the last time at which $v$ has an incident edge. For every snapshot outside this interval, $v$ does not interact with the rest of the graph.
        \newcommand{\alive}{alive\xspace}
        \begin{definition}[activity-interval]
            For $v\in V$, let $t_{min}(v)=\min \{\lambda(e) \colon e\in E, v\in e\}$ and $t_{max}(v)=\max \{\lambda(e) \colon e\in E, v\in e\}$. 
            The \emph{activity-interval of vertex} $v$ is defined as $A(v)=\left[t_{min}(v),t_{max}(v)\right]$.
            Similarly, we define the activity-interval of a static edge $e\in E$ as $A(e)=\left[\min\lambda(e), \max\lambda(e)\right]$.
            We say $v$ or $e$ is \emph{\alive} at time $t$ if $t\in A(v)$ or $t\in A(e)$, respectively.
        \end{definition}
        
        The \vimText decomposition consists of a path with one bag per time step containing exactly the vertices which are \alive at that time. 
        Note that every temporal graph has a unique \vimText decomposition and that a temporally connected graph has \vimText width exactly $n$. See \Cref{fig:def-vim-tim} left for an illustration of a \vimText decomposition.
        \begin{definition}[Vertex-Interval-Membership Width (Bumpus and Meeks \cite{bumpus_edge_2023})]
            The \emph{vertex-interval-membership (\vimText) decomposition} of a temporal graph \gcal is a sequence $(\vtimbag_t)_{t\in[\lifetime]}$ of \emph{bags} with $\vtimbag_t = \{v \in V : t\in A(v) \}$.
            The \emph{vertex-interval-membership width} of a temporal graph \gcal is $\mathsf{vim}(\gcal) = \max_{t \in [\lifetime]} \,\lvert \vtimbag_t\rvert$.
        \end{definition}
        
        The \timText width parameter generalises \vimText width from a decomposition over a path over time to a tree. This is motivated by the fact that for many temporal problems, vertices which are \alive at the same time but far from one another in the graph can be considered independently. The bags of a \timText decomposition have the property that any connected component of a snapshot is entirely contained in a bag at that time. Each bag at time $t$ is connected to the bags at time $t-1$ and $t+1$ with which it shares vertices. This forms a tree-shape (see \Cref{fig:def-vim-tim} right). 
        \begin{figure}[t]
            \centering
            \includegraphics[width=0.8\linewidth]{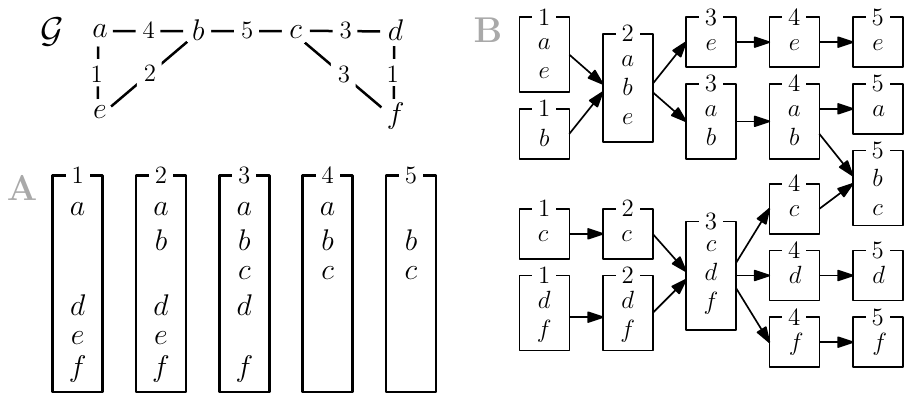}
            \caption{Example from~\cite{enright_families_2025} of a \vimText decomposition (\textcolor{darkgray}{A}) and a \timText decomposition (\textcolor{darkgray}{B}) of a temporal graph $\gcal$.}
            \label{fig:def-vim-tim}
        \end{figure}
        \begin{definition}[Tree-Interval-Membership Width \cite{enright_families_2025}]\label{def:TIMW}
            A \emph{tree-interval-membership (\timText) decomposition} of a temporal graph \gcal is a triple $(T,\vtimbagset,\tau)$ consisting of a labelled directed tree~$T$, a family of bags $\vtimbagset = \{\vtimbag_i \colon i\in V(T)\}$, and a node labelling function $\tau\colon V(T)\to [\Lambda]$ such that:
            \begin{enumerate}[label=(\roman*)]
                \item for every $v\in V(\gcal)$ and $t\in [\lifetime]$ there exists a unique $i\in V(T)$ with $\tau(i)=t$ and $v\in \vtimbag_i$,
                \item for every $(uv,t)\in\ecal(\gcal)$ there exists an $i\in V(T)$ with $\tau(i)=t$ and $\{u,v\}\subseteq \vtimbag_i$, and
                \item the directed edges of $T$ are $E(T)=\{(i,j) : \vtimbag_i\cap \vtimbag_j\neq \emptyset \text{ and } \tau(i)=\tau(j)+ 1\}$.
            \end{enumerate}    
            The \emph{width} of a \timText decomposition is defined as $width(T):=\max\{|\vtimbag_i|\colon i\in V(T)\}$.
            The \emph{\timText width} of a temporal graph $\mathcal{G}$ is $\tim(\gcal):=\min \{width(T)\colon T \text{ is a \timText decomposition of } \gcal\}$. 
        \end{definition}
        Since the VIM bags form a (not necessarily optimal) path decomposition of the undirected footprint, it follows that $\pw(\gcal) \le \vim(\gcal)$ (and hence $\tw(\gcal) \le \pw(\gcal) \le \vim(\gcal)$). Analogously, the TIM bags form a tree decomposition of the undirected footprint, and thus $\tw(\gcal) \le \tim(\gcal)$.
     
    \subsection{Logic in Static Graphs} \label{sec:static-logic}
        \textit{First-order logic\ifshort (FO)\fi} and \textit{monadic second-order logic\ifshort (MSO)\fi} provide a uniform language to express graph properties and, in turn, define whole graph classes by single descriptive sentences.
        \ifshort
        For a detailed treatment of FO and MSO, refer to \cite{courcelle_graph_2012,ebbinghaus_mathematical_1994,ebbinghaus_finite_1999}.
        
            A graph in the context of logic is encoded as a \emph{relational structure}: An undirected graph $G=(V,E)$ has the universe $U=V\cup E$ together with the incidence relation $\inc\subseteq E\times V$ such that $\inc(uv,x)$ for $x=u$ or $x=v$.
            A directed graph $D=(V,A)$ has universe $U=V\cup A$ with the relations $\Sedgesource,\Sedgetarget\subseteq A\times V$ such that $\Sedgesource((u,v),x)$ for $x=u$ and $\Sedgetarget((u,v),x)$ for $x=v$.
            
            An \textit{FO formula} uses element variables $x,y,\dots$ ranging over the universe $U$ and builds formulas from atomic predicates ($x=y$, $R(x_1,x_2,\dots)$) via the logical connectives $\neg, \wedge, \vee, \implies, \biimplies$, the existential quantifier $\exists$ and the universal quantifier $\forall$.
            An \textit{MSO formula} is an FO formula which can additionally use \emph{set variables} $X,Y,\dots\subseteq U$, the membership relation $x\in X$, and is allowed to quantify over sets.
            The algorithmic task of deciding whether a  graph satisfies a specific formula is called the \textit{model checking} problem. 
            
            \begin{problem}{Model Checking (MC)}
                \Input &A static graph $G$ and a logical formulas $\varphi$.\\
                \Prob & {Does $G$ satisfy $\varphi$?}
            \end{problem}
        \else
        
            These logics are interpreted over \emph{relational structures}. A relational structure $\mathcal{S}$ consists of a finite universe $U$ together with a finite set of relations $\mathfrak{R}$ on $U$. Before formally introducing FO and MSO, we specify how a static graph $G=(V,E)$ is represented as such a structure.
            \begin{definition} \label{def:static_encoding}
                Let $G=(V,E)$ and $D=(V,A)$ be an undirected and a directed static graph, respectively. The undirected relational structure \relGraph{G} is defined as: \vspace{-0.5em}
                \begin{description}
                    \item[universe:] $U=V\cup E$
                    \item[unary relations:] $V\subseteq U$, $E\subseteq U$
                    \item[relations:] \hphantom{.}
                    \vspace{-0.4em}
                    \begin{itemize}
                        \item $\inc\subseteq E\times V$, where $\inc(e,v) \Leftrightarrow v\in e$.
                    \end{itemize}\vspace{-0.4em}
                \end{description}\vspace{-0.4em}
                In the directed relational structure \relGraph{D}, $E$ is replaced by $A$ in the universe and unary relations, and $\inc$ is replaced by (i) $\Sedgesource\subseteq A\times V$, where $\Sedgesource((x,y),v) \Leftrightarrow v=x$,
                and (ii) $\Sedgetarget\subseteq A\times V$, where $\Sedgetarget((x,y),v) \Leftrightarrow v=y$. 
            \end{definition}
            \begin{definition}
            \label{def:mso}
                Let $\mathcal{S}$ be a relational structure with universe $U$ and relations $\mathfrak{R}$.
                
                Formulas in \emph{monadic second-order logic (MSO)} use two types of \emph{variables}: first-order variables $x,y,z,\dots$ ranging over elements of $U$, and second-order variables $X,Y,Z,\dots$ ranging over subsets of $U$.
                An \emph{MSO formula} is built inductively as follows: 
                \vspace{-0.4em}
                \begin{description}
                    \item[atomic formulas:]\hphantom{.}
                    \vspace{-0.4em}
                    \begin{itemize}
                        \item $x=y$ or $X=Y$ (equality);
                        \item $R(x_1,\dots,x_k)$ for $R \in \mathfrak{R}$;
                        \item $x \in X$ for a set variable $X$ and element $x$.
                    \end{itemize} 
                    \item[formulas:] If $\varphi$ and $\psi$ are MSO formulas, then so are: \vspace{-0.4em}
                    \begin{itemize}
                        \item $\neg \varphi$, $\varphi \vee \psi$, $\varphi \wedge \psi$, $\varphi \to \psi$;
                        \item $\exists x (\,\varphi(x)\,)$, $\forall x (\, \varphi(x)\,)$ (first-order quantification);
                        \item $\exists X (\, (\varphi(X)\,)$, $\forall X (\, \varphi(X)\,)$ (second-order quantification over sets).
                    \end{itemize}
                \end{description}
                \emph{First order logic (FO)} formulas neither contain second-order variables nor allow second-order quantification over sets. Note that every FO formula is also an MSO formula. 
            \end{definition}
            For more details on FO and MSO, refer to \cite{courcelle_graph_2012,ebbinghaus_mathematical_1994,ebbinghaus_finite_1999}.

        The algorithmic task of deciding whether a given graph satisfies a specific formula is the\textit{ model checking} problem. 
        
        \begin{problem}{Model Checking (MC)}
            \Input &A static graph $G$ and a logical formulas $\varphi$.\\
            \Prob & {Does $G$ satisfy $\varphi$?}
        \end{problem}
        \fi
        
        \noindent
        For FO and MSO there exist general meta-theorems that characterise the static graph classes on which their respective \MC problem is tractable.
        \begin{theorem}[Grohe--Kreutzer--Siebertz: \MCFO on nowhere dense \cite{grohe_deciding_2017}]
        \label{thm:FO-MC}
            Let $\mathcal{C}$ be a graph class with denseness function $\densef\colon\mathbb{N}\to\mathbb{N}$,
            i.e., $K_{\densef(r)}\not\preceq_r G$ for all $r\in\mathbb{N}$ and all $G\in\mathcal{C}$.
            Then for every $\varepsilon>0$ there exists a computable function $f$ and a computable radius function
            $r(\cdot)$ such that, given a graph $G\in\mathcal{C}$ and an FO formula $\varphi$,
            one can decide whether $G$ satifies $\varphi$ in time
            \ifshort $F\bigl(|\varphi|,\varepsilon,\densef(r(\varphi))\bigr)\cdot |V(G)|^{1+\varepsilon}$.
            \else
            \[
            F\bigl(|\varphi|,\varepsilon,\densef(r(\varphi))\bigr)\cdot |V(G)|^{1+\varepsilon}.
            \]
            (In particular, the dependence on the class $\mathcal{C}$ appears only through the values of $\densef$ at radii
            bounded in terms of the formula $\varphi$.)\fi
        \end{theorem}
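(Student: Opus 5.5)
The plan follows the original strategy of Grohe, Kreutzer and Siebertz. It combines three ingredients: Gaifman locality of FO, sparse neighbourhood covers of nowhere dense classes, and the \emph{splitter game}, whose length bounds the recursion depth. Since the statement is cited, I take standard model-theoretic tools as given. These are Gaifman's theorem, Ehrenfeucht--Fra\"iss\'e games, and the characterisation of nowhere denseness via $K_{\densef(r)}\not\preceq_r G$. I also work throughout with a fixed formula $\varphi$ of quantifier rank $q$.

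\textbf{Step 1 (locality).} By Gaifman's theorem, $\varphi$ is equivalent to a Boolean combination of basic local sentences. Each such sentence asserts that there exist $k$ elements that are pairwise at distance $>2r$ and each satisfy an $r$-local formula $\psi(x)$, with $r$ and $k$ bounded by a function $r(\varphi)$. Evaluating $\varphi$ therefore reduces to two tasks. First, compute for every vertex $v$ whether $G\models\psi(v)$, which is a question about the ball $N_r(v)$ only. Second, solve a ``scattered set'' problem for the vertices so marked.

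\textbf{Step 2 (sparse covers and the splitter game).} Nowhere dense classes admit $r$-neighbourhood covers of radius $2r$ and degree $|V(G)|^{\varepsilon}$, computable in time $\bigoh(|V(G)|^{1+\varepsilon})$. Consequently each $N_r(v)$ lies inside some cluster $X$, and the total size of all clusters is at most $|V(G)|^{1+\varepsilon}$. Within a cluster of bounded radius, the Splitter player has a winning strategy in the $(\ell,r)$-splitter game with $\ell$ bounded in terms of $\densef(r)$. Removing the vertex Splitter chooses (and recording the removal with a new unary colour, so that $\psi$ can be rewritten accordingly) yields a graph on which the game length has decreased by one. I would therefore evaluate the local formulas by recursion on the remaining game length $\ell$. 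For each cluster, compute Splitter's answer, delete it, rewrite the formula, and recurse on the new neighbourhoods. Once the game is won the graph is empty, so the recursion has depth at most $\ell$. The scattered-set part of Step 1 is handled by a greedy-plus-bounded-search argument on the marked vertices, again using the bounded radius.

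\textbf{Main obstacle.} The delicate part is the running-time bookkeeping. A naive recursion multiplies the cover overhead $|V(G)|^{\varepsilon}$ at each of the $\ell$ levels. This gives $|V(G)|^{1+\ell\varepsilon}$ and, worse, blows up the radius, since the radius at depth $i$ grows as $r\cdot 2^{i}$. I would use a rank-preserving locality theorem, which keeps the quantifier rank from growing when passing to a neighbourhood. With it, radius and rank depend only on $q$ and $\ell$, so $\ell$ is a function of $\densef$ evaluated at a radius $r(\varphi)$. Then choosing $\varepsilon' = \varepsilon/\ell$ in the cover construction yields the claimed $F(|\varphi|,\varepsilon,\densef(r(\varphi)))\cdot|V(G)|^{1+\varepsilon}$ bound. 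Getting the rank-preserving normal form right, including the rewriting of formulas after Splitter's deletions, is where I expect most of the technical work to lie.
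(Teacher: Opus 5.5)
The paper does not prove this statement; it quotes it from Grohe, Kreutzer and Siebertz. Your architecture is the one of that source: locality, sparse neighbourhood covers, Splitter's deletions recorded by colours, and recursion on the game length. Two steps, however, would not go through as you describe them.

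\textbf{The scattered-set step.} A ``greedy-plus-bounded-search'' treatment of the independence (scattered-set) sentences fails in nowhere dense classes. If greedy stops with $m<k$ centres, you only learn that the marked set $P$ lies in $m$ balls of radius $O(r)$. These balls have unbounded size, so there is nothing bounded to search. For example, take a subdivided star, let $P$ consist of the centre and all leaves, and ask for $k$ marked vertices at pairwise distance $>2$. Greedy may pick the centre, whose $2$-ball contains every leaf, even though the leaves are pairwise at distance $4$. The residual question is itself an FO question about a region of radius $O(kr)$ around the greedy centres. It has to be fed back into the splitter-game recursion, so the game radius must also cover these regions. This is also not only a top-level issue. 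After Splitter deletes $w$, the rewritten formula is no longer $r$-local in $G[X]-w$, because distances through $w$ have grown. The normal form must therefore be re-applied at every level, so independence sentences arise at every level of the recursion and must be included in the running-time bound.

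\textbf{The parameter dependencies.} The game length $\ell$ is determined by the game radius through $d$, so the radius must not depend on $\ell$ or on the recursion depth. Your sentence ``radius and rank depend only on $q$ and $\ell$'' states exactly the circularity that the rank-preserving normal form is meant to break. The correct order is as follows. First fix a locality radius $r(q)$ and a game radius $R(q)$ that depend on $q$ alone. Then obtain $\ell$ from $d$ at radii bounded in terms of $R(q)$. Only after that choose $\varepsilon'=\varepsilon/\ell$. Also, the naive blow-up comes from the quantifier rank growing under the standard Gaifman normal form, not from the covers doubling radii.

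To make the deletions rank-preserving as well, distance atoms must be atomic. You must then record $\mathrm{dist}_G(\cdot,w)\le i$ by a separate colour for every $i$ up to the distance bound. A single colour for the neighbours of $w$ forces extra quantifiers whenever a distance atom is rewritten through $w$.
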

            
        \begin{theorem}[Courcelle: \MCMSO on bounded treewidth, \cite{courcelle_graph_2012}]\label{thm:MSO-MC}
            There exists a computable function~$f$ such that, given a graph $G$ of treewidth $\tw$ and an MSO formula $\varphi$, one can decide whether
            $G$ satisfies $\varphi$ in time $f(\lvert\varphi\rvert,\tw)\cdot (|V|+|E|)$.
        \end{theorem}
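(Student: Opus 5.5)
The plan is the standard one via bounded-width decompositions and finite-state types; nothing from this paper beyond \Cref{def:static_encoding} is needed. Since $\varphi$ is evaluated on the incidence structure \relGraph{G} with universe $V\cup E$, I first note that \relGraph{G} has Gaifman graph equal to the incidence graph of $G$. From a tree decomposition of $G$ of width $\tw$ I obtain one of the incidence graph of width at most $\tw+1$: for each edge $e=uv$, pick a node whose bag contains $\{u,v\}$ and hang a new leaf with bag $\{u,v,e\}$ below it. So it suffices to treat relational structures of bounded treewidth.

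\textbf{Step 1 (decomposition).} Using Bodlaender's linear-time algorithm, I compute a tree decomposition of width $\tw$ in time $g(\tw)\cdot|V|$. I then convert it, in linear time, into a \emph{nice} decomposition (leaf, introduce, forget and join nodes) with $\bigoh(|V|+|E|)$ nodes. Each bag vertex is assigned a port number from $\{1,\dots,\tw+2\}$.

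\textbf{Step 2 (types and compositionality).} Let $q$ be the quantifier rank of $\varphi$. For a node $u$, consider the boundaried structure $\mathcal{S}_u$ induced by everything introduced in the subtree of $u$, with the bag elements named by constants. Its rank-$q$ MSO type is the set of rank-$q$ formulas (up to equivalence) it satisfies, including formulas with free port constants. The two key lemmas are as follows.
\begin{enumerate}[label=(\alph*)]
  \item For fixed $q$ and $\tw$ there are only finitely many such types, bounded by a computable function $h(q,\tw)$, since there are finitely many inequivalent formulas of rank $q$ over a fixed finite signature.
  \item A Feferman--Vaught style composition lemma: the type of $\mathcal{S}_u$ is determined by the operation at $u$ and the types of its children. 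The operations are gluing along ports at join nodes, adding a fresh port element with prescribed incidences at introduce nodes, and forgetting a port at forget nodes.
\end{enumerate}
Lemma (b) is proved by an Ehrenfeucht--Fra\"iss\'e game argument for MSO. Duplicator combines her winning strategies on the two parts, splitting set moves $X$ into $X\cap\mathcal{S}_{u_1}$ and $X\cap\mathcal{S}_{u_2}$ and keeping the shared boundary consistent.

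\textbf{Step 3 (algorithm).} The lemmas yield a finite table, computable from $(\varphi,\tw)$, describing how each operation maps child types to the parent type. Equivalently, this is a deterministic bottom-up tree automaton obtained by the classical MSO-to-tree-automata translation: encode each set variable as a bit in the node labels, then handle $\exists X$ by projection with determinisation and $\neg$ by complementation. I run this automaton bottom-up over the nice decomposition in constant time per node and accept iff the root type contains $\varphi$. The total running time is $f(|\varphi|,\tw)\cdot(|V|+|E|)$.

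\textbf{Main obstacle.} I expect the main difficulty to be making Step 2 effective, i.e.\ proving the composition lemma rigorously with boundary constants and showing that the type table is computable and not merely existent. I would handle this through the automaton construction, since there computability is explicit even though $f$ becomes non-elementary due to repeated determinisation. The translation from $\relGraph{G}$ to a labelled tree over a finite alphabet of port configurations must also be checked carefully. This requires that every incidence $\inc(e,v)$ is witnessed inside a single bag, which the leaf construction above guarantees.
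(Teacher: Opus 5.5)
The paper gives no proof of this statement; it imports it from Courcelle--Engelfriet. Your sketch is the standard correct proof and matches the cited source in substance: Bodlaender's algorithm on the incidence structure, finitely many rank-$q$ MSO types combined by a Feferman--Vaught composition lemma, made effective through the MSO-to-tree-automata translation. The source phrases the same idea as recognizability of MSO properties over terms built from gluing, introduce and forget operations. Your reduction to relational structures whose Gaifman graph has bounded treewidth is also exactly the form in which the paper later uses the theorem, namely on the Gaifman graphs of its temporal encodings.
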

            
        \iflong
        For optimization problems, there is an optimization variant of Courcelle’s meta-theorem which additionally requires a function $\alpha$ over which the property is optimised. Note that a variable is \emph{free} if it is not in the scope of a quantifier.
        \begin{theorem}[Arnborg, Lagergren, Seese: Extended MSO MC on bounded treewidth, \cite{arnborg_easy_1991}] \label{thm:MSO-MC-optimization}
            There exists an algorithm that, given an MSO formula $\varphi$ with free monadic variables $X_1,\dots,X_r$, an affine function $\alpha(x_1,,\dots,x_r)$, and a graph $G$ of treewidth \tw, computes the minimum (or maximum) of $\alpha(\lvert X_1\rvert, \dots, \lvert X_r\rvert)$ over all evaluations of $X_1,\dots,X_r$ that satisfy $\varphi$ on $G$, in time $f(\lvert \varphi\rvert,\tw) \cdot (|V|+|E|)$, where $f$ is a computable function.
        \end{theorem}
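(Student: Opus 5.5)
The plan is to reduce to Courcelle's automata-theoretic proof of \Cref{thm:MSO-MC} and add a weight-carrying dynamic program on top of the tree automaton.

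First, use Bodlaender's linear-time algorithm to compute a tree decomposition of $G$ of width $\bigoh(\tw)$ in time $f(\tw)\cdot|V|$. Convert it into a nice tree decomposition with introduce, forget and join nodes and $\bigoh(|V|)$ nodes. Then attach every element of the universe $U=V\cup E$ to exactly one node of the decomposition:
\begin{itemize}
    \item a vertex $v$ is attached to the unique topmost node whose bag contains $v$, i.e.\ the child of the node forgetting~$v$, or the root;
    \item an edge $uv$ is attached to a fixed node whose bag contains both $u$ and~$v$.
\end{itemize}
With each node labelled by its bag type (bag positions and which attached edges join which positions), $G$ becomes a labelled rooted tree over a finite alphabet $\Sigma_{\tw}$ depending only on~$\tw$. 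Extending each label by $r$ bits per attached element, recording membership in $X_1,\dots,X_r$, gives an extended alphabet.

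Second, invoke the standard translation from MSO over bounded-width decompositions to tree automata. The formula $\varphi(X_1,\dots,X_r)$ is translated, by induction on formulas, into an MSO formula over the labelled tree, and then, by Thatcher--Wright/Doner, into a deterministic bottom-up tree automaton $\mathcal{A}_\varphi$ over the extended alphabet. Its size depends only on $|\varphi|$ and $\tw$. The defining property is that $\mathcal{A}_\varphi$ accepts an annotated tree iff the encoded assignment satisfies $\varphi$ on $G$. This step is taken as known from the proof of \Cref{thm:MSO-MC}.

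Third, replace acceptance by optimisation. Write $\alpha(x_1,\dots,x_r)=c_0+\sum_i c_i x_i$. Since $|X_i|$ is the number of elements assigned to $X_i$, and every element is attached to exactly one node, the objective splits as $c_0$ plus a sum over nodes of local contributions: the sum of $c_i$ over attached elements placed in $X_i$. For each node $u$ and each state $q$ of $\mathcal{A}_\varphi$, compute $\mathrm{opt}(u,q)$, the minimum (or maximum) local-contribution sum over the subtree of $u$, taken over annotations of that subtree on which the automaton reaches $q$ at~$u$. The recurrences are:
\begin{itemize}
    \item at a leaf or unary node, range over the $2^{r\cdot\bigoh(\tw^2)}$ local annotations and apply the transition function;
    \item at a join node, take all pairs of child states $q_1,q_2$, add $\mathrm{opt}(u_1,q_1)+\mathrm{opt}(u_2,q_2)$ plus the local term, and keep the best value for the resulting state.
\end{itemize}
The answer is $c_0$ plus the best $\mathrm{opt}(\mathrm{root},q)$ over accepting states $q$, or ``no solution'' if none is reachable. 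Correctness follows from determinism and the unique attachment of elements, which prevents double counting. Negative coefficients cause no difficulty because every optimum is over a finite set. Each node costs $f(|\varphi|,\tw)$ time, so the total is $f(|\varphi|,\tw)\cdot(|V|+|E|)$.

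The main obstacle is the second step: proving that MSO with free set variables over $G$ is faithfully captured by a finite automaton over the decomposition tree. Edge-set quantification (MSO$_2$) must be handled by encoding edges in the labels, and the automaton size is non-elementary in $|\varphi|$. I would import this from Courcelle's theorem rather than reprove it. The optimisation layer itself is routine once the additivity of $\alpha$ is matched to the unique-attachment convention.
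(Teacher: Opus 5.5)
The paper gives no proof of this statement; it imports the theorem from Arnborg, Lagergren and Seese and uses it as a black box, so there is no in-paper argument to compare against. Your sketch is correct and is essentially the standard ALS argument:
\begin{enumerate}
  \item interpret $G$ in a labelled tree obtained from a (nice) tree decomposition, so that each element of $V\cup E$ sits in a bounded set of slots at exactly one node;
  \item compile $\varphi(X_1,\dots,X_r)$ into a finite tree automaton over the alphabet extended by membership bits;
  \item replace the acceptance test by a dynamic program storing, for every node and state, the optimal partial value of the linear part of $\alpha$.
\end{enumerate}
Unique attachment is exactly what makes $|X_i|$ additive over the tree and prevents double counting. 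Importing the MSO-to-automaton translation (with free set variables) from Courcelle's proof is legitimate.

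Two details to tighten:
\begin{itemize}
  \item At a join node you must also range over the local annotations of the elements attached there, not just over pairs of child states. In a nice decomposition, both a vertex whose forget node is the join node's parent and edges can be attached at a join node.
  \item To attach every edge $uv$ in linear time, take the lower of the topmost nodes of $u$ and of $v$. These two nodes are comparable, and the lower one lies in both subtrees, so its bag contains both endpoints.
\end{itemize}
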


        In addition, Courcelle's theorem can be extended to find the \emph{number} of satisfying assignments of an MSO formula. This allows us to solve enumeration problems consisting of a property expressible in MSO. 
        \begin{theorem}[Courcelle: Counting MSO MC on bounded treewdith, \cite{courcelle_graph_2012}]
            Let $\psi(X_1,\ldots,X_j,x_1,\ldots,x_{\ell})$ be an MSO-formula with set variables $X_1,\ldots,X_j$ and individual variables $x_1,\ldots,x_{\ell}$. Let $\relGraph{G}$ be a relational structure with universe $U$. Given a tree decomposition of $\relGraph{G}$ of width \tw, the cardinality of the set $\psi(\relGraph{G})$ can be computed in time $f(\tw,|\psi|)\cdot |\relGraph{G}|$ for a computable function $f$.
        \end{theorem}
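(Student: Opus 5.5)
The plan is the standard automata-theoretic, or equivalently type-based, proof of Courcelle's theorem, with the decision dynamic program replaced by a counting one. The key invariant is that every satisfying assignment must correspond to \emph{exactly one} accepting computation, so that summing counts is exact.

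\emph{Step 1: normalise the input.} First we turn the free individual variables $x_1,\dots,x_\ell$ into set variables. Replace each $x_i$ by a set variable $X'_i$ and add the MSO conjunct ``$X'_i$ is a singleton''. This gives an equivalent formula $\psi'$ with set variables only, and its satisfying assignments are in bijection with those of $\psi$. Next, from the given tree decomposition of width $\tw$ we compute in linear time a nice tree decomposition of the same width with $\bigoh(|\relGraph{G}|)$ nodes. We make it nice in the relational sense: every element of $U$, and every tuple of every relation, is \emph{introduced} at exactly one node. Elements that are edges (or tuples) are attached to a unique node whose bag contains all of their endpoints. This uniqueness is exactly what will rule out double counting later.

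\emph{Step 2: reduce to a tree automaton.} Colour each node with the bounded-size information it carries: the isomorphism type of the bag, with at most $\tw+1$ boundary labels, the element or tuple introduced there, and the node type (introduce, forget, join). Now also record, for each introduced element, which of the $j+\ell$ set variables contain it. Call the resulting labelled tree $T_{\bar X}$; its alphabet has size bounded by a function of $\tw$ and $j+\ell$. By the Feferman--Vaught / Büchi--Doner--Thatcher--Wright translation (the core of Courcelle's theorem), there is a \emph{deterministic} bottom-up tree automaton $\mathcal{A}_{\psi'}$, of size $f(\tw,|\psi|)$, that accepts $T_{\bar X}$ if and only if $\relGraph{G}\models\psi'(\bar X)$. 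Its states are the rank-$q$ MSO types of the boundaried substructure below a node, together with the restriction of $\bar X$ to it.

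\emph{Step 3: count by dynamic programming.} For each node $u$ and state $s$ we compute $N_u(s)$. This is the number of assignments of $\bar X$, restricted to the elements introduced in the subtree of $u$, under which the deterministic run reaches state $s$ at $u$. The recurrences are as follows.
\begin{itemize}
\item At an introduce node, $N_u(s)=\sum N_{c}(s')$, where the sum ranges over the child state $s'$ and the $2^{j+\ell}$ membership choices for the new element whose transition leads to $s$.
\item At a forget node, we sum over the child states that map to $s$.
\item At a join node, $N_u(s)=\sum_{\delta(s_1,s_2)=s}N_{c_1}(s_1)N_{c_2}(s_2)$.
\end{itemize}
Correctness of the join case is where Step 1 matters. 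Each element is introduced in exactly one subtree, so assignments on the two sides combine independently and bijectively. Determinism gives that each full assignment yields exactly one run. The answer is $\sum_{s\text{ accepting}}N_{\mathrm{root}}(s)$. There are $\bigoh(|\relGraph{G}|)$ nodes and $f(\tw,|\psi|)$ work per node, giving the claimed bound in the unit-cost arithmetic model. The counts have at most $(j+\ell)|U|$ bits, so in the bit model one loses only a polynomial factor.

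\emph{Main obstacle.} I expect the main obstacle to be Step 2: constructing $\mathcal{A}_{\psi'}$ to be genuinely deterministic, and making the encoding of free-variable memberships local. Each element's membership must be decided exactly once, at its introduce node; boundary vertices that sit in many bags must not be re-guessed at join nodes. I would first try to handle this by letting states carry the membership bits of current bag vertices, and by requiring join transitions to check that these bits agree, with the bits chosen only at introduction. This makes the join-node counting a product over compatible states, with no overcount.
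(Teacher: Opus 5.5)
The paper does not prove this statement; it quotes it as a black box from Courcelle, so there is no in-paper proof to compare against. Your route is the standard proof of the counting version. You compile $\psi$ into a deterministic bottom-up automaton (equivalently, a composition of rank-$q$ types) over a nice tree decomposition, and replace the acceptance test by a dynamic program that counts labellings reaching each state. Exactness then comes from determinism together with a bijection between assignments and labellings. That approach is sound, and your caveat that linear time holds only in the unit-cost arithmetic model is correct.

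There is one false claim that your join argument relies on. In Step~1 you ask for a nice tree decomposition in which every element of $U$ is introduced at exactly one node. In Step~3 you justify the join recurrence by ``each element is introduced in exactly one subtree''. This cannot be arranged. If $v$ lies in the bag $B$ of a join node, both children have bag $B$, so each child subtree has its own lowest occurrence (an introduce node or a leaf) of $v$. The node that is unique per element is the forget node (with empty root bag), not the introduce node. As written, the product $N_{c_1}(s_1)N_{c_2}(s_2)$ combines membership bits for $B$ chosen independently in the two subtrees. It therefore overcounts and even pairs assignments that disagree on $B$.

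The repair is what you sketch under ``Main obstacle'', but it must replace the uniqueness claim rather than supplement it. First, let each state fix the membership bits of the current bag elements; the rank-$q$ type over the boundary constants already contains the atoms saying whether the $i$-th boundary element lies in $X_k$. Second, let $N_u(s)$ count assignments to all elements occurring in the subtree of $u$ whose restriction to the bag is the one fixed by $s$. Third, define the join transition only on pairs of states agreeing on these bits. The element sets of the two child subtrees intersect exactly in $B$, where agreeing states coincide, so the product counts each assignment on the union exactly once. Each global assignment then corresponds to exactly one consistent labelling, namely the one giving all introduce nodes of $v$ the same bits. Step~2 should be read accordingly: the automaton rejects inconsistent labellings, and accepts a consistent one iff $\relGraph{G}\models\psi'(\bar X)$.

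A minor point: in $\relGraph{G}$ the edges are elements of $U$, so they already sit in bags of a decomposition of the Gaifman graph. Only relation tuples need a canonical node. Since tuples carry no free-variable bits, where you place them affects how the structure is represented but not the count.
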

        \fi
        
\section{Temporal Graph Encodings}
\label{sec:graph_encodings}
    \ifshort
    To apply logical meta-theorems to \tgs, we encode them as relational structures that extend the standard static graph encoding with time information. Different encodings make different aspects of the temporal structure explicit and directly influence the structure of the resulting representation.
    We present four relational encodings—two known and two novel—each tailored to a specific notion of static or temporal sparsity. These encodings serve as the foundation for the temporal meta-theorems in \Cref{sec:temporal-meta}, yielding tractability of FO and MSO model checking under bounded
    (i)~$\lifetime$, (ii)~$\tdegree$, (iii)~$\vim$, and (iv)~$\tim$, possibly combined with structural static parameters.
    All encodings are presented for strict, undirected temporal graphs.
    Directed variants can be obtained by replacing the incidence relation with source and target relations. Non-strict reachability is obtained via adapting the successor relation in the degree encoding, or on the formula level for the lifetime, VIM, and TIM encoding.
    \else
    To apply logical meta-theorems to \tgs, we encode them as relational structures resembling the static graph encoding with additional structure to represent time.
    Different encodings make different aspects of the temporal structure explicit and, crucially, affect the structure of the resulting representation.
    
    We present four relational encodings---two known and two novel---each tailored to a specific notion of static and temporal sparsity. These encodings serve as the foundation for the temporal meta-theorems in \Cref{sec:temporal-meta}, yielding tractability of FO/MSO model checking under bounded (i)~$\lifetime$, (ii)~$\tdegree$, (iii)~$\vim$, and (iv)~$\tim$ possibly combined with structural static parameters.

    \bigparagraph{(Non)strict, (un)directed variants.}\quad
    We present all encodings explicitly for \emph{strict, undirected} \tgs.
    Encodings for \textit{directed} temporal graphs can be obtained by replacing the incidence relation with source and target relations
    $\Sedgesource,\Sedgetarget \subseteq E \times V$, where $\Sedgesource(e,v)$ (resp., $\Sedgetarget(e,v)$) holds if and only if $v$ is the tail (resp., head) of $e$.
    The distinction between strict and non-strict temporal reachability is handled as follows. In the degree encoding (\Cref{def:degree_encoding}), strictness is enforced at the structural level via the possible-successor relation (using $t' < t$ for strict and $t' \le t$ for non-strict reachability). In all other encodings, strictness is enforced at the formula level (e.\,g., when defining a logical formula for temporal paths).\fi

    \medskip
    \iflong\paragraph*{Lifetime encoding}\else 
    \fi
    \label{subsec:lifetime_encoding}
        The lifetime encoding represents a temporal graph as a time-labelled static graph, mirroring the standard mathematical definition.
        The universe contains the vertices, temporal edges, and time steps. An incidence relation identifies the endpoints of each temporal edge, while a presence relation records when a temporal edge is present. A total linear order on the time steps encodes the global timeline.
        \begin{definition}[lifetime encoding]
        \label{def:lifetime_encoding}
            Let $\gcal=((V,E),\lambda)=(V,\mathcal{E})$ be an undirected strict \tg. The \emph{lifetime encoding} \relGraphLifetime{\gcal} is the relational structure defined as:
            \vspace{0.4em}

            \noindent
            \begin{minipage}[t]{0.2\linewidth}
            \DescLabel{universe}
            \end{minipage}%
            \hspace{0.5em}
            \begin{minipage}[t]{0.8\linewidth}
            $U=V\cup \mathcal{E}\cup L$
            \end{minipage}

            \noindent
            \begin{minipage}[t]{0.2\linewidth}
            \DescLabel{unary relations}
            \end{minipage}%
            \hspace{0.5em}
            \begin{minipage}[t]{0.8\linewidth}
            $V(\cdot),\ecal(\cdot), L(\cdot)$
            \end{minipage}

            \noindent
            \begin{minipage}[t]{0.2\linewidth}
            \DescLabel{binary relations}
            \end{minipage}%
            \hspace{0.5em}
            \begin{minipage}[t]{0.8\linewidth}
            \begin{itemize}[leftmargin=*, topsep=0pt, itemsep=0pt, parsep=0pt, partopsep=0pt]
                \item $\inc\subseteq \mathcal{E}\times V$
                where $\inc((uv,t),x)\Leftrightarrow x=u \text{ or } x=v$,
                \item $\mathsf{pres}\subseteq \mathcal{E}\times L$
                where $\mathsf{pres}((uv,t),t')\Leftrightarrow t'=t$,
                \item $\timebefore\subseteq L\times L$
                where $t_1\timebefore t_2\Leftrightarrow t_1<t_2$.
            \end{itemize}
            \end{minipage}
        \end{definition}
    \iflong
    \begin{figure}[h]
        \centering
        \includegraphics[width=0.75\linewidth]{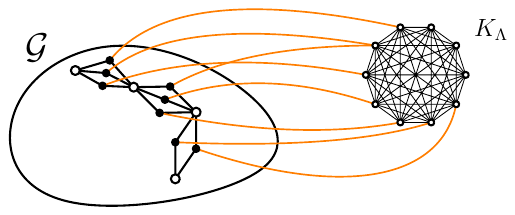}
        \caption{
            Illustration of the lifetime encoding $\relGraphLifetime{\gcal}$ of a temporal graph $\gcal$.
            Each static edge between two vertices is subdivided by temporal-edge-nodes (filled circles), which are connected via the presence relation ($\mathsf{pres}$, shown in orange) to the time-node $t \in L$ for which $\eps=(e,t)\in\ecal$.
            The total order $\timebefore$ on the time-nodes induces a clique $K_\Lambda$ of size \lifetime.}
        \label{fig:gaifman-lifetime_encoding}
    \end{figure}
    \else
    \begin{figure}[t]
        \centering
        \begin{minipage}[t]{0.48\linewidth}
            \centering
            \includegraphics[width=\linewidth]{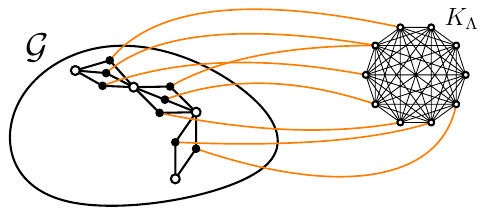}
            \caption{Illustration of the lifetime encoding $\relGraphLifetime{\gcal}$.
            Each temporal edge $(e,t) \in \ecal$ between two vertices (empty circles) is represented by a temporal-edge node (filled circles), which is connected via the presence relation (orange) to the time-node $t \in L$ for which $\eps=(e,t)\in\ecal$.
            The total order $\timebefore$ induces a clique $K_\Lambda$ of size \lifetime.}
            \label{fig:gaifman-lifetime_encoding}
        \end{minipage}\hfill
        \begin{minipage}[t]{0.48\linewidth}
            \centering
            \includegraphics[width=0.6\linewidth]{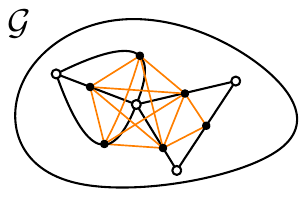}
            \caption{Illustration of the degree encoding $\relGraphDegree{\gcal}$.
            Each temporal edge $(e,t) \in \ecal$ between two vertices (empty circles) is represented by a temporal-edge node (filled circles), which is connected via the possible-successor relation (orange) to every temporal-edge-node with whom it shares an endpoint and satisfies the temporal order.
            }
        \label{fig:gaifman-degree_encoding}
        \end{minipage}
    \end{figure}
    \fi
    
    \iflong\paragraph*{Degree encoding}\else
    \fi
    \label{subsec:degree_encoding}
        The degree encoding abstracts away from the concrete time labels of edges and instead focuses on the \emph{relative order} of neighbouring temporal edges.
        Rather than representing when an edge appears, the encoding captures which temporal edges can be taken consecutively along a temporal path.
        In contrast to static graphs -- where succession is determined simply by shared endpoints -- this relation additionally depends on the order at which the edges are present.
        As a result, the encoding directly represents temporal reachability via a partial order on temporal edges.
        \begin{definition}[degree encoding]
        \label{def:degree_encoding}
            Let $\gcal=(V,\mathcal{E})$ be an undirected strict \tg. The \emph{degree encoding} \relGraphDegree{\gcal} is the relational structure defined as:
            \vspace{0.4em}
            
            
            \noindent
            \begin{minipage}[t]{0.2\linewidth}
            \DescLabel{universe}
            \end{minipage}%
            \hspace{0.5em}
            \begin{minipage}[t]{0.8\linewidth}
            $U = V \cup \mathcal{E}$
            \end{minipage}

            \noindent
            \begin{minipage}[t]{0.2\linewidth}
            \DescLabel{unary relations}
            \end{minipage}%
            \hspace{0.5em}
            \begin{minipage}[t]{0.8\linewidth}
            $V(\cdot),\ecal(\cdot)$
            \end{minipage}

            \noindent
            \begin{minipage}[t]{0.2\linewidth}
            \DescLabel{binary relations}
            \end{minipage}%
            \hspace{0.5em}
            \begin{minipage}[t]{0.8\linewidth}
            \begin{itemize}[leftmargin=*, topsep=0pt, itemsep=0pt, parsep=0pt, partopsep=0pt]
                \item $\inc \subseteq \ecal \times V$, where
                $\inc((uv,t),x) \Leftrightarrow x=u \text{ or } x=v$,
                \item $\psuc \subseteq \ecal \times \ecal$, where
                $\psuc((e_1,t_1),(e_2,t_2)) \Leftrightarrow
                t_1 < t_2 \text{ and there exists } v \in V \text{ such that }
                \inc(e_1,v) \text{ and } \inc(e_2,v)$.
            \end{itemize}
            \end{minipage}
        \end{definition}
    \iflong
    \begin{figure}[h]
        \centering
        \includegraphics[width=0.75\linewidth]{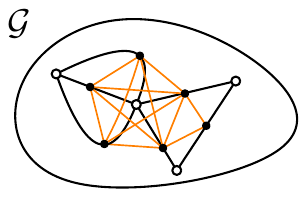}
        \caption{Illustration of the degree encoding $\relGraphDegree{\gcal}$ of a temporal graph $\gcal$.
        Each temporal edge $(e,t) \in \ecal$ between two vertices (empty circles) is represented by a temporal-edge node (filled circles); this is illustrated as multiple parallel edges, one for each temporal presence of a static edge.
        Two such temporal-edge-nodes are related in the possible-successor relation whenever the corresponding temporal edges share an endpoint and satisfy the temporal order.
        \iflong This may induce a clique among all temporal edges incident to the same vertex.\fi
        }
        \label{fig:gaifman-degree_encoding}
    \end{figure}
    \fi
    
    \iflong\paragraph*{\vimText encoding}\else
    \fi
    \label{subsec:vim_encoding}
        The \vimText encoding directly represents the \vimText decomposition of the temporal graph at the level of the relational structure.
        The encoding contains a sequence of \emph{bag elements}, one for each bag of the \vimText decomposition.
        Vertices and temporal edges are related to the bag elements in which they are \alive, and successive bags are connected by a linear successor relation.
        In contrast to the lifetime encoding, the bags are not \textit{pairwise} adjacent; this avoids introducing a clique over time and ensures that the density of the structure is bounded by the \vimText width of the graph.        
        \begin{definition}[\vimText encoding]
        \label{def:vim_encoding}
            Let $\gcal=(V,\mathcal{E})$ be a directed strict \tg with \vimText decomposition $(\vtimbag_t)_{t\in[\lifetime]}$. The \emph{relational \vimText structure} \relGraphVim{\gcal} is defined as:
            \vspace{0.4em}
            
            
            \noindent
            \begin{minipage}[t]{0.2\linewidth}
            \DescLabel{universe}
            \end{minipage}%
            \hspace{0.5em}
            \begin{minipage}[t]{0.8\linewidth}
            $U=V \cup \mathcal{E}\cup \vtimbagset$
            \end{minipage}

            \noindent
            \begin{minipage}[t]{0.2\linewidth}
            \DescLabel{unary relations}
            \end{minipage}%
            \hspace{0.5em}
            \begin{minipage}[t]{0.8\linewidth}
            $V(\cdot),\ecal(\cdot), \vtimbagset(\cdot)$
            \end{minipage}

            
            \noindent
            \begin{minipage}[t]{0.2\linewidth}
            \DescLabel{binary relations}
            \end{minipage}%
            \hspace{0.5em}
            \begin{minipage}[t]{0.8\linewidth}
            \begin{itemize}[leftmargin=*, topsep=0pt, itemsep=0pt, parsep=0pt, partopsep=0pt]
                \item $\inc\subseteq \mathcal{E}\times V$
                where $\inc((uv,t),x)\Leftrightarrow x=u \text{ or } x=v$,
                \item $\bag\subseteq V\times \vtimbagset$
                where $\bag(v,\vtimbag_t)\Leftrightarrow v\in \vtimbag_t$,
                \item $\pres\subseteq \mathcal{E}\times \vtimbagset$
                where $\pres((uv,t),\vtimbag_i)\Leftrightarrow
                \tau(i)=t \text{ and } u,v\in \vtimbag_i$,
                \item $\bagbefore\subseteq \vtimbagset\times \vtimbagset$
                where $\bagbefore(\vtimbag_i,\vtimbag_j)\Leftrightarrow
                \tau(i)+1=\tau(j)$.
            \end{itemize}
            \end{minipage}
        \end{definition}
        \ifshort
        \begin{figure}[t]
            \centering
            \iflong
            \includegraphics[width=0.85\linewidth]{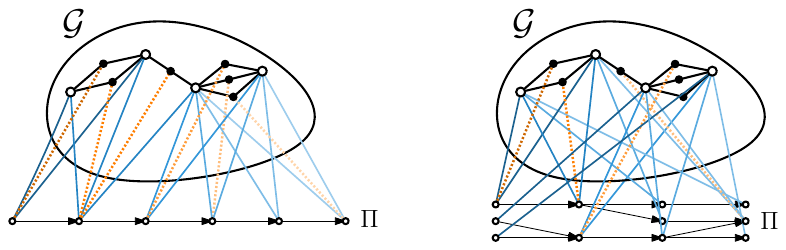}
            \else
            \includegraphics[width=0.75\linewidth]{figures/GF-vim_tim-TE-centered.pdf}
            \fi
            \caption{
            Illustration of the \vimText encoding \relGraphVim{\gcal} (left) and the \timText encoding \relGraphTim{\gcal} (right).
            Each temporal edge $(e,t) \in \ecal$ between two vertices (empty circles) is represented by a temporal-edge node (filled circles).            
            Vertices are connected via the bag relation (\iflong$\bag$, shown in \fi{}solid blue) to every bag in which they are \alive.
            Temporal-edge-nodes are connected analogously via the presence relation (\iflong$\mathsf{pres}$, shown in \fi{}dotted orange).
            The relation~$\bagbefore$ induces a path (left), resp. tree (right), over the bag-nodes \vtimbagset.
            }
            \label{fig:gaifman-vim_encoding}\label{fig:gaifman-tim_encoding}
        \end{figure}
        \fi
    
    \iflong\paragraph*{\timText encoding}\else
    \fi
    \label{subsec:tim_encoding}
        \iflong
        Just as a \timText decomposition generalises the \vimText decomposition of a temporal graph by allowing multiple bags per time, the \timText encoding is a generalisation of the \vimText encoding. 
        While a \vimText decomposition induces a path over the bags, a \timText decomposition organizes the bags in a tree structure.  \else 
        The \timText encoding is a generalisation of the \vimText encoding.
        \fi
        In a \timText decomposition, vertices in the same connected component of a snapshot must be in the same bag at that time, while those in different connected components may be in different bags. 
        \iflong
        Note that, since times are not part of the universe of this encoding, we cannot determine the time associated to a given bag without a formula (of length possibly lifetime) which finds its distance from a leaf bag labelled with time 1. This forces us to focus on the \emph{relative} times of pairs of bags, as we do in the degree encoding.
        \fi
        \begin{definition}[\timText encoding]
        \label{def:tim_encoding}
            Let $\gcal=(V,\mathcal{E})$ be a directed strict \tg with \timText decomposition $(T,\vtimbagset,\tau)$. The \emph{relational \timText structure} \relGraphTim{\gcal} is defined as:
            \vspace{0.4em}
            
            \noindent
            \begin{minipage}[t]{0.2\linewidth}
            \DescLabel{universe}
            \end{minipage}%
            \hspace{0.5em}
            \begin{minipage}[t]{0.8\linewidth}
            $U=V \cup \mathcal{E}\cup \vtimbagset$
            \end{minipage}

            \noindent
            \begin{minipage}[t]{0.2\linewidth}
            \DescLabel{unary relations}
            \end{minipage}%
            \hspace{0.5em}
            \begin{minipage}[t]{0.8\linewidth}
            $V(\cdot),\ecal(\cdot), \vtimbagset(\cdot)$
            \end{minipage}
            \noindent
            \begin{minipage}[t]{0.2\linewidth}
            \DescLabel{binary relations}
            \end{minipage}%
            \hspace{0.5em}
            \begin{minipage}[t]{0.8\linewidth}
            \begin{itemize}[leftmargin=*, topsep=0pt, itemsep=0pt, parsep=0pt, partopsep=0pt]
                \item $\inc\subseteq \mathcal{E}\times V$
                where $\inc((uv,t),x)\Leftrightarrow x=u \text{ or } x=v$,
                \item $\bag\subseteq V\times \vtimbagset$
                where $\bag(v,\vtimbag_i)\Leftrightarrow v\in \vtimbag_i$,
                \item $\pres\subseteq \mathcal{E}\times \vtimbagset$
                where $\pres((uv,t),\vtimbag_i)\Leftrightarrow
                \tau(i)=t \text{ and } u,v\in\vtimbag_i$,
                \item $\bagbefore\subseteq \vtimbagset\times \vtimbagset$
                where $\bagbefore(\vtimbag_i,\vtimbag_j)\Leftrightarrow
                \tau(i)+1=\tau(j) \text{ and } (\vtimbag_i,\vtimbag_j)\in E(T)$.
            \end{itemize}
            \end{minipage}
        \end{definition}
        \iflong
        \begin{figure}[t]
            \centering
            \iflong
            \includegraphics[width=0.85\linewidth]{figures/GF-vim_tim-TE-centered.pdf}
            \else
            \includegraphics[width=0.75\linewidth]{figures/GF-vim_tim-TE-centered.pdf}
            \fi
            \caption{
            Illustration of the \vimText encoding \relGraphVim{\gcal} (left) and the \timText encoding \relGraphTim{\gcal} (right).
            Each temporal edge $(e,t) \in \ecal$ between two vertices (empty circles) is represented by a temporal-edge node (filled circles).            
            Vertices are connected via the bag relation (\iflong$\bag$, shown in \fi{}solid blue) to every bag in which they are \alive.
            Temporal-edge-nodes are connected analogously via the presence relation (\iflong$\mathsf{pres}$, shown in \fi{}dotted orange).
            The relation~$\bagbefore$ induces a path (left), resp. tree (right), over the bag-nodes \vtimbagset.
            }
            \label{fig:gaifman-vim_encoding}\label{fig:gaifman-tim_encoding}
        \end{figure}
        \fi
\section{Temporal Meta--Theorems}
\label{sec:temporal-meta}
    \ifshort
    In this section, we establish meta-theorems for FO and MSO model checking on temporal graphs.
    Using the relational encodings from \Cref{sec:graph_encodings}, we show that FO/MSO-definable temporal properties are fixed-parameter tractable under bounded (i)~$\lifetime$, (ii)~$\tdegree$, (iii)~$\vim$, or (iv)~$\tim$, possibly in combination with structural assumptions on the footprint.

    \temporalMSOmeta*
    \temporalFOmeta*

    Both results are obtained by analysing the \textit{Gaifman graph} $\gaifman(\scal)$ of a relational structure $\scal\in\{\relGraphLifetime{\gcal}, \relGraphDegree{\gcal}, \relGraphVim{\gcal}, \relGraphTim{\gcal}\}$ with universe $U$ and a set of relations $\mathfrak{R}$. The Gaifman graph is a static, simple, and undirected graph with vertex set $U$ and an edge between two elements $a,b\in U$ whenever they are in a relation $R\in\mathfrak{R}$.
    \else
    The temporal relational encodings from \Cref{sec:graph_encodings} allow the application of the FO and MSO model checking meta-theorems.
    Concretely, we analyse the structure of these encodings via the \emph{Gaifman graph} of each relational structure.
    This yields temporal meta-theorems linking logical expressibility to algorithmic tractability: once a temporal property is expressed in FO/MSO over a suitable encoding, fixed-parameter tractability follows from the sparsity of the corresponding Gaifman graphs.
    \begin{definition}[Gaifman graph]
        Let $\scal$ be any finite relational structure with universe $U$ and some set of relations $\mathfrak{R} =\{R_1,R_2,\dots\}$.
        The \emph{Gaifman graph} $\gaifman(\scal)$ is a static, simple, undirected graph with vertex set $U$ and an edge $\{a,b\}$ if and only if (i) $a\neq b$ and (ii) there exist a relation $R\in\mathfrak{R}$ with $k\ge 2$ arguments and a tuple $(a_1,\dots,a_k)\in R$ such that $a,b\in\{a_1,\dots,a_k\}$.
    \end{definition}
    The illustrations in \Cref{fig:gaifman-degree_encoding,fig:gaifman-lifetime_encoding,fig:gaifman-vim_encoding,fig:gaifman-tim_encoding} depict the Gaifman graphs of our four encodings.

    In the next two subsections, we prove that for each of the parameters \lifetime, \tdegree, \vim, and \tim, the Gaifman graph of the corresponding encoding preserves bounded treewidth (for MSO) and nowhere denseness (for FO), up to a factor depending only on the parameter.
    \fi

    \subsection{Temporal Encodings preserve Treewidth}
        MSO model checking is fixed-parameter tractable on graph classes of bounded treewidth (\Cref{thm:MSO-MC}).
        To apply the MSO meta-theorem to temporal graphs, it suffices to show that the Gaifman graph of each encoding has bounded treewidth whenever the corresponding parameter \lifetime, \tdegree, \vim or \tim is bounded.
        \iflong
        This has previously been established for the lifetime encoding~\cite{haag_feedback_2022} and the degree encoding~\cite{enright_deleting_2021}; we provide proofs for completeness and unified notation. For the \vimText encoding and \timText encoding introduced in this work, we give novel proofs.
        
        \fi
        \ifshort
        This treewidth-preservation bound has been established for the lifetime and degree encodings by \cite{haag_feedback_2022,enright_deleting_2021}, while the results for the VIM and TIM encodings are novel.

        All results in this section are proven with a similar technique. Starting from a tree, \vimText, or \timText decomposition of the temporal graph \gcal, we locally extend the bags with temporal elements specific to each encoding to obtain a tree decomposition of the Gaifman graph. Crucially, the size increase for each bag is always in relation to the parameter corresponding to the encoding.
        \else        
        All proofs in this section follow the same structural pattern. We start from a (tree, \vimText, or \timText) decomposition and locally adjust its bags to obtain a tree decomposition of the Gaifman graph. The proofs differ only in the underlying decomposition and in the resulting blow-up of the bag sizes. Since the individual arguments are largely parallel, readers familiar with one proof may safely skim the others, focusing on the encoding-specific bag constructions and width bounds.
        \fi
        \begin{lemma}[\cite{haag_feedback_2022}]
        \label{lem:lifetime_enc-preserves-tw}
            Let \gcal be a \tg with treewidth $\tw(\gcal)$ and lifetime~\lifetime. 
            Then
            \ifshort
            $\tw\big(\gaifman(\relGraphLifetime{\gcal})\big)\ \le\ \Lambda+\Lambda\tw(\gcal)^2+\tw(\gcal)$ and, consequently, $\tw\big(\gaifman(\relGraphLifetime{\gcal})\big)\in\mathcal{O}(\tw(\gcal)^2\lifetime)$.
            \else
            \[
            \tw\big(\gaifman(\relGraphLifetime{\gcal})\big)\ \le\ \Lambda+\tw(\gcal)^2\quad \text{and, consequently,}\quad \tw\big(\gaifman(\relGraphLifetime{\gcal})\big)\in\mathcal{O}(\tw(\gcal)^2 + \lifetime).
            \]
            \fi
        \end{lemma}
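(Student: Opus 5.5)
We prove the statement by turning a tree decomposition $(T,\{B_x\}_{x\in V(T)})$ of the footprint $G$, of width $k=\tw(\gcal)$, directly into a tree decomposition of $\gaifman(\relGraphLifetime{\gcal})$. We first list the edges of the Gaifman graph. They come from three sources: $\inc$ joins each temporal-edge element $\eps=(uv,t)$ to $u$ and $v$; $\mathsf{pres}$ joins $\eps$ to the time element $t\in L$; and $\timebefore$ joins every pair of time elements, so $L$ induces a clique $K_\Lambda$. In any tree decomposition some bag must contain the whole clique $L$. The most natural choice is to add all of $L$ to every bag, which costs an additive $\Lambda$ in the width.

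\textbf{Construction.} For every $x\in V(T)$ set $B'_x:=B_x\cup L$. For every temporal edge $\eps=(uv,t)\in\ecal$, fix one node $x_\eps$ of $T$ with $\{u,v\}\subseteq B_{x_\eps}$; it exists by property (ii) of tree decompositions, since $uv\in E$. Then attach a new leaf node $y_\eps$ to $x_\eps$ with bag $B'_{y_\eps}:=B_{x_\eps}\cup L\cup\{\eps\}$.

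\textbf{Verification.}
\begin{itemize}
\item Every universe element is covered: vertices by the original bags, $L$ by every bag, and each $\eps$ by its leaf bag $B'_{y_\eps}$.
\item Every Gaifman edge is covered. The pairs $\{\eps,u\}$, $\{\eps,v\}$ and $\{\eps,t\}$ all lie in $B'_{y_\eps}$, and the pairs inside $L$ lie in every bag.
\item Connectivity holds. Each $\eps$ occurs in exactly one bag. Each $\ell\in L$ occurs in all bags, so its occurrence set is all of the tree. For a vertex $v$, the new leaf $y_\eps$ contains $v$ only if its neighbour $x_\eps$ does, so attaching it preserves connectedness of the subtree of $v$.
\end{itemize}

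\textbf{Width.} Every bag has size at most $(k+1)+\Lambda+1$, so the width is at most $k+\Lambda+1$. This is at most $\Lambda+\tw(\gcal)^2$ whenever $\tw(\gcal)\ge 2$, and in every case it lies in $\mathcal{O}(\tw(\gcal)^2+\Lambda)$. For the degenerate small-treewidth cases, or if the exact constant in the short-version bound is wanted, we can instead use the cruder variant closer to \cite{haag_feedback_2022}. There we put into each $B'_x$ all temporal edges whose static edge has both endpoints in $B_x$, giving size at most $(k+1)+\Lambda+\Lambda\binom{k+1}{2}$. That matches $\Lambda+\Lambda\tw(\gcal)^2+\tw(\gcal)$ up to the usual $\pm1$ conventions. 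Its connectivity check for $\eps$ uses that the nodes whose bags contain both $u$ and $v$ form the intersection of two subtrees, which is again a subtree.

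\textbf{Main obstacle.} The argument is routine; the only step needing care is reconciling constants with the precise bound stated, in particular the $\pm1$ between bag size and width and the small cases $\tw(\gcal)\in\{0,1\}$. We handle this by presenting the leaf-bag construction as the main argument and noting that it implies the asymptotic claim $\mathcal{O}(\tw(\gcal)^2+\Lambda)$ in all cases.
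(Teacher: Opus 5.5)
Your argument is correct, and it differs from the paper's exactly where it matters. The paper keeps the tree $T$ unchanged and sets $B'_i := B_i\cup L\cup\{(uv,t)\in\ecal : u,v\in B_i\}$. Every bag thus absorbs all temporal edges it spans, and connectivity for $(uv,t)$ comes from intersecting the subtrees of $u$ and $v$. This is exactly your ``cruder variant''. Such a bag can hold $\Lambda\binom{\tw(\gcal)+1}{2}$ temporal edges. The paper's proof therefore ends at width $\Lambda+\Lambda\tw(\gcal)^2+\tw(\gcal)\in\mathcal{O}(\tw(\gcal)^2\Lambda)$ and never reaches the additive $\Lambda+\tw(\gcal)^2$, i.e.\ $\mathcal{O}(\tw(\gcal)^2+\Lambda)$, displayed in the statement.

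Your private leaf per temporal edge changes the tree but turns the multiplicative blow-up into an additive $+1$, giving width $\tw(\gcal)+\Lambda+1$. This already implies the short-version bound in every case, so the fallback is never needed:
\begin{itemize}
\item for $\tw(\gcal)\ge 1$, because $1\le\Lambda\tw(\gcal)^2$;
\item for $\tw(\gcal)=0$, because there are no temporal edges, hence no leaves, and the width is at most $\Lambda$.
\end{itemize}
It also gives $\Lambda+\tw(\gcal)^2$ whenever $\tw(\gcal)\neq 1$.

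The one loose end is $\tw(\gcal)=1$, where your bags give width $\Lambda+2$ against the claimed $\Lambda+1$. Your fallback does not rescue this case: the cruder variant gives $2\Lambda+1$ there, which is worse. The case is easily closed by using the leaf bag $\{u,v,\eps,t\}$ instead of $B_{x_\eps}\cup L\cup\{\eps\}$. Connectivity for $t$ still holds, since $t$ lies in every main bag. This gives width $\max(\tw(\gcal)+\Lambda,3)$, which equals $\Lambda+1$ when $\tw(\gcal)=1$ and $\Lambda\ge 2$. If $\Lambda=1$, each static edge carries at most one temporal edge. Then $\gaifman(\relGraphLifetime{\gcal})$ is a forest plus a single time node, so its treewidth is at most $2=\Lambda+1$.
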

        \iflong
        \begin{proof}
            This result can be found in \cite[Theorem 23]{haag_feedback_2022}. For completeness, we give a proof here.
            Let $G$ be the footprint of $\gcal$ and let $(T,B)$ be an optimal tree decomposition of $G$, \ie $\max_{i\in V(T)}|B_i|-1=\tw(\gcal)$.
            We build a tree decomposition $(T,\{B'_i:i\in V(T)\})$ of $\gaifman(\relGraphLifetime{\gcal})$ as follows. Note that both tree decompositions are indexed by the same tree $T$.
            
            For each $i\in V(T)$, define the bag
            \[
              B'_i \ :=\ B_i\ \cup\ \{(uv,t)\in \mathcal{E}: u\in B_i \wedge v\in B_i\wedge t\in L\}\ \cup\ L,\ 
            \]
            where $B_i$ is the bag of the tree decomposition of $G$ indexed by $i$.
            
            We show that this forms a tree decomposition. Recall that the universe of $\relGraphLifetime{\gcal}$ is $U = V \cup E \cup L$.
            \vspace{-0.5em}\begin{enumerate}[label=(\roman*)]
                \item First, we show $\bigcup_{i\in V(T)} B'_i = U$:
                \begin{itemize}
                    \item A vertex $v\in V$ lies in every $B'_i$ where $v\in B_i$ in $(T,B)$. Since $(T,B)$ is a tree decomposition, there is at least one such bag.
                    \item A temporal edge $\eps=(uv,t)\in \mathcal{E}$ lies in every $B'_i$ such that $u\in B_i$ and $v\in B_i$. Since $(T,B)$ is a tree decomposition of $G$, there exists a bag containing both endpoints of $e$.
                    \item A time $t\in L$ lies in every bag of the decomposition.
                \end{itemize}
                \item
                Every edge of the Gaifman graph (induced by a binary relation of the encoding) is covered by some bag:
                \vspace{-0.5em}\begin{itemize}
                    \item $\inc\subseteq \mathcal{E}\times V$: if $v$ is an endpoint of $\eps$ then there exists a bag containing both $v$ and $\eps$ by above reasoning. Therefore we have $\eps,v\in B'_i$ for some $i\in V(T)$.
                    \item $\pres\subseteq \mathcal{E}\times L$: since each time $t$ appears in every bag and each temporal edge appears in at least one bag, $\eps,t\in B'_i$ for some $i\in V(T)$.
                    \item $\timebefore\subseteq L \times L$: all times are in every bag. As a result $t_1,t_2\in B'_i$ for all $i\in V(T)$.
                \end{itemize}
                \item 
                For each element of $U$, the indices of bags containing it form a connected subtree of $T$:
                \vspace{-0.5em}\begin{itemize}
                  \item $v\in V$ lies in exactly the same bags as in $(T,B)$. This forms a connected subtree of~$T$.
                  \item $\eps\in \mathcal{E}$ lies in every bag containing both its endpoints. Since $(T,B)$ is a tree decomposition, the indices of bags containing each endpoint must form a connected subtree of~$T$. Thus, the intersection of these subtrees is also a connected subtree of $T$.
                  \item $t\in L$ lies in all bags, thus the subtree is $T$ itself.
                \end{itemize}
            \end{enumerate}
            Lastly, we show the treewidth bound. For $i\in V(T)$, we have $|B_i'|\le \Lambda + \Lambda|B_i|^2 + |B_i|$, and hence $\tw(\gaifman(\relGraphLifetime{\gcal}))\in O(\tw(\gcal)^2\lifetime)$.
        \end{proof}
        \fi

        \begin{lemma}[\cite{enright_deleting_2021}]
        \label{lem:tdegree_enc-preserves-tw}
            Let $\gcal$ be a temporal graph with treewidth $\tw(\gcal)$ and maximum temporal degree $\tdegree$.
            Then \ifshort $\tw\big(\gaifman(\relGraphDegree{\gcal})\big)\ \le\ (\tdegree+1)(\tw(\gcal)+1)$ and, consequently, $\tw\big(\gaifman(\relGraphDegree{\gcal})\big)\in\mathcal{O}(\tw(\gcal)\cdot \tdegree)$.
            \else
            \[\tw\big(\gaifman(\relGraphDegree{\gcal})\big)
            \ \le\ (\tdegree+1)(\tw(\gcal)+1)
            \quad \text{and, consequently,}\quad
            \tw\big(\gaifman(\relGraphDegree{\gcal})\big)\in\mathcal{O}(\tdegree\cdot\tw(\gcal)).
            \]
            \fi
        \end{lemma}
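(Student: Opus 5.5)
We start from an optimal tree decomposition $(T,\{B_i : i\in V(T)\})$ of the footprint $G$, so $|B_i|\le \tw(\gcal)+1$, and keep the same index tree $T$. The only elements of $\gaifman(\relGraphDegree{\gcal})$ not yet covered are the temporal edges. For each $i\in V(T)$ we set
\[
B'_i \ :=\ B_i \ \cup\ \{(uv,t)\in\ecal : u\in B_i \text{ or } v\in B_i\}.
\]
Each vertex of $B_i$ has at most $\tdegree$ incident temporal edges (reading $\tdegree$ as a bound on the number of temporal edges at a vertex). Hence $|B'_i|\le (\tdegree+1)|B_i|\le(\tdegree+1)(\tw(\gcal)+1)$, which is the claimed bound.

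Next we check the three tree-decomposition axioms, in the same way as in the proof of \Cref{lem:lifetime_enc-preserves-tw}.

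\emph{Coverage.} Each vertex lies in some $B_i$. Each temporal edge $(uv,t)$ lies in every bag that contains $u$ or $v$, and such a bag exists.

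\emph{Gaifman edges.} An $\inc$-edge between $(uv,t)$ and $u$ is covered by any bag containing $u$. A $\psuc$-edge between $\eps_1=(e_1,t_1)$ and $\eps_2=(e_2,t_2)$ requires a common endpoint $v$ of $e_1$ and $e_2$. Any bag with $v\in B_i$ then contains both $\eps_1$ and $\eps_2$. This is exactly why we add a temporal edge to every bag containing \emph{either} endpoint, rather than only bags containing both.

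\emph{Connectivity.} For a vertex, the set of bags containing it is unchanged. For a temporal edge $(uv,t)$, the set of bags containing it is $T_u\cup T_v$, where $T_x$ denotes the subtree of bags containing $x$. This is the union of two connected subtrees, and the two intersect because some bag of $(T,B)$ contains the static edge $uv$. Therefore the union is connected.

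The main obstacle is this last connectivity step together with the $\psuc$ coverage. The "either endpoint" choice is forced by $\psuc$ coverage, so connectivity must be argued through the nonempty intersection $T_u\cap T_v$ rather than through intersecting subtrees as in the lifetime proof. One further point needs care. If $\tdegree$ is interpreted per time step, the number of temporal edges at a vertex is only bounded by $\tdegree$ times the number of time steps, not by $\tdegree$ alone. For the stated bound we therefore take $\tdegree$ to count all temporal edges at a vertex, consistent with the definition of $\delta^t(v)$ as the total number of incident temporal edges. The final asymptotic statement $\mathcal{O}(\tdegree\cdot\tw(\gcal))$ then follows immediately.
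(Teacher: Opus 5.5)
Your proposal is correct and follows the paper's proof essentially step for step. It uses the same bags $B_i$ augmented by every temporal edge incident to a vertex of $B_i$, covers $\psuc$ through the shared endpoint, argues connectivity via the union of the two intersecting subtrees $T_u\cup T_v$, and obtains the same size bound $(\tdegree+1)|B_i|$. Your reading of $\tdegree$ as the total number of temporal edges at a vertex is also the one the paper's proof relies on.
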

        \iflong
        \begin{proof}
            This result can be found in \cite[Lemma 5.3]{enright_deleting_2021}. For completeness, we give a proof here.
            Let $G$ be the footprint of $\gcal$ and let $(T,B)$ be an optimal tree decomposition of $G$, \ie $\max_{i\in V(T)}|B_i|-1=\tw(\gcal)$.
            We build a tree decomposition $(T,\{B'_i:i\in V(T)\})$ of $\gaifman(\relGraphDegree{\gcal})$ as follows. Note that both tree decompositions are indexed by the same tree $T$.
            
            For each $i\in V(T)$, define the bag
            \[
              B'_i \ :=\ B_i\ \cup\ \{(e=uv,t) \in \mathcal{E}: u\in B_i \vee v\in B_i\},\ 
            \]
            where $B_i$ is the bag of the tree decomposition of $G$ indexed by $i$. That is, $B'_i$ contains all vertices of $B_i$ and all temporal edges incident to them.
            
            We show that this forms a tree decomposition. Recall that the universe of $\relGraphDegree{\gcal}$ is $U = V \cup \ecal$.
            \begin{enumerate}[label=(\roman*)]
                \item First, we show $\bigcup_{i\in V(T)} B'_i = U$:
                \begin{itemize}
                    \item  A vertex $v\in V$ lies in every $B'_i$ where $v\in B_i$ in $(T,B)$. Since $(T,B)$ is a tree decomposition, there is at least one such bag.
                    \item A temporal edge $(e=uv,t)\in \ecal$ lies in every $B'_i$ such that $u\in B_i$ or $v\in B_i$. Since $(T,B)$ is a tree decomposition, there is at least one such bag.
                \end{itemize}
                \item 
                Every edge of the Gaifman graph (induced by a binary relation of the encoding) is covered by some bag:
                \vspace{-0.5em}\begin{itemize}
                    \item $\inc\subseteq \mathcal{E}\times V$: if $v$ is an endpoint of $(e,t)$ then there exists a bag containing both $v$ and $(e,t)$ by above reasoning. Therefore we have $(e,t),v\in B'_i$ for some $i\in V(T)$.
                    \item $\psuc\subseteq \mathcal{E}\times \mathcal{E}$: if $(e_1,t_1)$ is a possible successor of $(e_2,t_2)$, there must be a vertex $v$ which is a shared endpoint of the edges. Since both temporal edges are contained in all bags containing $v$, they must be in a bag together.
                \end{itemize}
                \item 
                For each element of $U$, the indices of bags containing it form a connected subtree of $T$:
                \vspace{-0.5em}\begin{itemize} 
                    \item $v\in V$ lies in exactly the same bags as in $(T,B)$. This forms a connected subtree of $T$.
                    \item $(e,t)\in\mathcal{E}$ lies in every bag containing one or both of its endpoints. 
                    Since $(T,B)$ is a tree decomposition, the indices of bags containing an endpoint form a connected subtree of $T$. Thus, the bags containing at least one endpoint form the union of two connected subtrees with nonempty intersection, and hence a connected subtree.
                \end{itemize}
            \end{enumerate}
            Lastly, we show the treewidth bound.
            For $i\in V(T)$, we have $|B_i'|\le |B_i| + |B_i|\tdegree = (\tdegree+1)|B_i|$, hence $\tw(\gaifman(\relGraphDegree{\gcal}))\in\mathcal{O}(\tw(\gcal)\cdot\tdegree)$.
            \end{proof}
        \fi

        \newcommand{\vimbagset}{\ensuremath{{\bf\Gamma}}\xspace}
        \newcommand{\vimbag}{\ensuremath{\Gamma}\xspace}
        \iflong
        We now turn to the bounds for \vimText and \timText. First, we show the statement for the \vimText encoding. Recall that $\vim(\gcal) \ge \pw(\gcal) \ge \tw(\gcal)$.\fi
        \begin{lemma}
        \label{lem:vim_enc-preserves-tw}
            Let $\gcal$ be a temporal graph with \vimText width $\vim(\gcal)$.
            Then 
            \ifshort $\tw\big(\gaifman(\relGraphVim{\gcal})\big)\ \le\ \vim(\gcal)+2\vim(\gcal)^2$ and, consequently, $\tw\big(\gaifman(\relGraphVim{\gcal})\big)\in\mathcal{O}(\vim(\gcal)^2)$.
            \else
            \[
            \tw\big(\gaifman(\relGraphVim{\gcal})\big)
            \ \le\ \vim(\gcal)+2\vim(\gcal)^2
            \quad \text{and, in particular,}\quad
            \tw\big(\gaifman(\relGraphVim{\gcal})\big)\in\mathcal{O}(\vim(\gcal)^2).
            \]
            \fi
        \end{lemma}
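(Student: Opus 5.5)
We build a path decomposition of $\gaifman(\relGraphVim{\gcal})$ directly from the \vimText decomposition $(\vtimbag_t)_{t\in[\lifetime]}$, following the same pattern as in the proofs of \Cref{lem:lifetime_enc-preserves-tw,lem:tdegree_enc-preserves-tw}. The index structure is a path with nodes $1,\dots,\lifetime$. For each $t$ we define
\[
  B'_t \ :=\ \vtimbag_t\ \cup\ \{\vtimbag_t,\vtimbag_{t+1}\}\ \cup\ \{(uv,t')\in\ecal : u,v\in\vtimbag_t,\ t'\in\{t,t+1\}\},
\]
with $\vtimbag_{\lifetime+1}$ omitted. Each bag holds the \alive vertices, the bag element of time $t$ together with that of the next time, and the temporal edges among \alive vertices present at time $t$ or $t+1$.

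First, coverage of the universe. A vertex $v$ appears in $B'_t$ for every $t\in A(v)$. Every bag element appears trivially. A temporal edge $(uv,t')$ appears in $B'_{t'}$, since $t'\in A(u)\cap A(v)$.

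Second, coverage of Gaifman edges. An $\inc$-edge between $(uv,t')$ and $u$ lies in $B'_{t'}$. A $\bag$-edge between $v$ and $\vtimbag_t$ lies in $B'_t$. A $\pres$-edge between $(uv,t')$ and $\vtimbag_{t'}$ lies in $B'_{t'}$. A $\bagbefore$-edge between $\vtimbag_t$ and $\vtimbag_{t+1}$ lies in $B'_t$.

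Third, connectivity. Vertex $v$ lies in the interval $A(v)$. Bag element $\vtimbag_t$ lies in $B'_{t-1}$ and $B'_t$, which are consecutive nodes.

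The main obstacle is the connectivity of temporal-edge elements. The edge $(uv,t')$ lies in $B'_{t'}$, and possibly in $B'_{t'-1}$ if $u,v\in\vtimbag_{t'-1}$. These two nodes are consecutive, so the set of nodes containing the edge is contiguous. In fact, including $t+1$ edges is not even needed; taking only edges with $t'=t$ is simpler and also works. I would use that simpler version:
\[
  B'_t=\vtimbag_t\cup\{\vtimbag_t,\vtimbag_{t+1}\}\cup\{(uv,t)\in\ecal\}.
\]
With this choice, each temporal edge occurs in exactly one bag.

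Finally, the width bound. Every temporal edge at time $t$ has both endpoints in $\vtimbag_t$, and there are at most $\binom{\vim}{2}\le\vim^2$ such pairs. Each pair contributes at most one temporal edge at time $t$ (two if we count direction, in the directed setting stated in \Cref{def:vim_encoding}). Hence
\[
  |B'_t|\le \vim+2+2\vim^2,
\]
which gives the claimed $\vim(\gcal)+2\vim(\gcal)^2$ up to the additive constant absorbed by the $-1$ in the width and the trivial case $\vim\ge1$. In particular, $\tw\in\mathcal{O}(\vim(\gcal)^2)$.
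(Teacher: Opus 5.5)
Your proposal is correct and is essentially the paper's proof: the paper uses the same time-indexed path decomposition $B'_t=\vtimbag_t\cup\{\vtimbag_t,\vtimbag_{t+1}\}\cup\{(e,t)\in\ecal\}$ (your simpler version), and verifies it with the same coverage, edge-covering and contiguity arguments. The only slip is arithmetic: $|B'_t|\le\vim+2+2\vim^2$ gives width $\vim+1+2\vim^2$, one more than claimed. Your own sharper count of at most $2\binom{\vim}{2}=\vim^2-\vim$ temporal edges per time step fixes this, giving $|B'_t|\le\vim^2+2$ and hence width at most $\vim^2+1\le\vim+2\vim^2$.
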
 
        \iflong
            \begin{proof}
            Let $\vimbagset=(\vimbag_t)_{t\in[\Lambda]}$ be the \vimText decomposition of \gcal, so $\vimbag_t=\{v\in V:t\in A(v)\}$ and
            $\max_t|\vimbag_t|=\vim$. Recall that the \vimText decomposition forms a (not necessarily optimal) path decomposition of the footprint~$G$ of \gcal.
            We build a \emph{path} decomposition $(T,\{B_t:t\in[\Lambda]\})$ of $\gaifman(\relGraphVim{\gcal})$ as follows. 
            To avoid confusion, we will refer to the bags $\vimbag_t$ of the \vimText decompositions of \gcal as \textit{\vimText-bags} and the bags $B_t$ of the path decomposition of $\gaifman(\relGraphVim{\gcal})$ as simply \textit{bags}. 
            
            For notational convenience, let $\vimbag_{\Lambda+1} := \emptyset$. For each $t\in V(T) = [\Lambda]$, define the bag
            \begin{align*}
                 B_{t} \ :=\ &  
                    \underbrace{\{v:v\in \vimbag_t\}}_{\text{$t$-\alive vertices}} \ \cup\ 
                    \underbrace{\{\eps\in \mathcal{E}: \eps = (e,t)\}}_{\text{temporal edges in $G_t$}}\ \cup \ 
                    \underbrace{\{\vimbag_t,\vimbag_{t+1}\}}_{\text{current and next VIM-bag}}.
            \end{align*}
            We show that this forms a path decomposition. Recall that the universe of \relGraphVim{\gcal} is $U=V\cup \mathcal{E}\cup \vimbagset$.
            \vspace{-0.5em}\begin{enumerate}[label=(\roman*)]
                \item First, we show $\bigcup_{t\in [\lifetime]} B_t = U$:
                \begin{itemize}
                    \item A vertex $v\in V$ lies in every $B_t$ with $t\in A(v)$.
                    \item A temporal edge $\eps\in \mathcal{E}$ lies in $B_t$ where $\eps= (e,t)$ for some static edge $e$, hence in some bag.
                    \item A \vimText-bag $\vimbag_t\in \vimbagset$ lies in the bags $B_t$ and $B_{t-1}$.
                \end{itemize}
                \item
                Every edge of the Gaifman graph (induced by a binary relation of the encoding) is covered by some bag:
                \vspace{-0.5em}\begin{itemize}
                    \item $\inc\subseteq \mathcal{E}\times V$: if $v$ is an endpoint of $\eps=(e,t)$
                    then $t\in A(v)$ and we must have $\eps,v \in B_t$.
                    \item $\Sedgetime\subseteq \mathcal{E}\times \vimbagset$: if $\Sedgetime(\eps,\vimbag_t)$, then $\eps,\vimbag_t\in B_t$.
                    \item $\bag\subseteq V\times \vimbagset$: if $\bag(v,\vimbag_t)$, then $v,\vimbag_t \in B_t$.
                  \item $\bagbefore\subseteq \vimbagset \times \vimbagset$: for any $t\in[\Lambda)$, we have $\vimbag_{t},\vimbag_{t+1}\in B_t$ and $B_{t+1}$.
                \end{itemize}
                \item 
                For each element of $U$, the indices of bags containing it form a connected subtree of $T$:
                \vspace{-0.5em}\begin{itemize}
                  \item $v\in V$ occurs precisely in the bags $B_t$ where $t\in A(v)$. Since $A(v)$ is a contiguous interval, this must be a subpath of the decomposition.
                  \item $\eps = (e,t)\in \mathcal{E}$ occurs only in the bag $B_t$. This must trivially induce a connected subpath of the decomposition.
                  \item $\vimbag_t\in F$ occurs only in $B_t$ and $B_{t-1}$, which are adjacent in the decomposition.
                \end{itemize}
            \end{enumerate}
            Lastly, we bound the width.
            For any $t\in[\Lambda]$, we have $|\vimbag_t|\le\vim$.
            Every temporal edge in $B_t$ has both endpoints in $\vimbag_t$, so there are
            at most $|\vimbag_t|^2\le \vim^2$ such edges.
            Thus $|B_t| \le 1 + |\vimbag_t| + |\vimbag_t|^2 \le 1 + \vim + \vim^2$, and hence $\tw(\gaifman(\relGraphVim{\gcal})) \in \mathcal{O}(\vim(\gcal)^2)$.
        \end{proof}
        \fi

        \iflong
        Second, we show the statement for the \timText encoding. Recall that $\tim(\gcal) \ge \tw(\gcal)$.\fi
        \begin{lemma}
        \label{lem:tim_enc-preserves-tw}
            Let $\gcal$ be a temporal graph with \timText width $\tim(\gcal)$.
            Then 
            \ifshort
            $\tw\big(\gaifman(\relGraphTim{\gcal})\big)\ \le\ \tim(\gcal)^2+3\tim(\gcal) -1$ and, consequently, $\tw\big(\gaifman(\relGraphTim{\gcal})\big)\in\mathcal{O}(\tim(\gcal)^2)$.
            \else
            \[
            \tw\big(\gaifman(\relGraphTim{\gcal})\big)
            \ \le\ \tim(\gcal)^2+3\tim(\gcal) -1
            \quad \text{and, in particular,}\quad
            \tw\big(\gaifman(\relGraphTim{\gcal})\big)\in\mathcal{O}(\tim(\gcal)^2).
            \]
            \fi
        \end{lemma}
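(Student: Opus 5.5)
We follow the pattern of \Cref{lem:vim_enc-preserves-tw}, but index the new decomposition by the tree $T$ of an optimal \timText decomposition $(T,\vtimbagset,\tau)$ of \gcal (we ignore edge directions of $T$). We aim for a tree decomposition of $\gaifman(\relGraphTim{\gcal})$ whose bags have size at most $\tim^2+3\tim$. For each node $i\in V(T)$ we define
\[
B_i \ :=\ \vtimbag_i \ \cup\ \{(uv,\tau(i))\in\ecal : u,v\in\vtimbag_i\}\ \cup\ \{\vtimbag_i\}\ \cup\ \{\vtimbag_j : (i,j)\in E(T)\ \text{or}\ (j,i)\in E(T)\}.
\]
This is the \timText-bag's vertices, the temporal edges living in it, and bag-elements of $i$ together with its tree-neighbours in one time direction. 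The VIM proof used only the next bag; a tree node may have several successors, so this needs care, as discussed below.

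We first check the three tree-decomposition conditions.

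\emph{Coverage.} Vertices and temporal edges are covered by properties (i) and (ii) of \Cref{def:TIMW}. Bag-elements are trivially covered.

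\emph{Gaifman edges.} The relation \inc is covered in the bag $i$ witnessing (ii). The relations \bag and \pres are covered in $B_i$ itself. For \bagbefore, the pair $(\vtimbag_i,\vtimbag_j)$ must lie in some bag.

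\emph{Connectivity.} A vertex $v$ appears in exactly one tree node per time $t\in A(v)$. Between consecutive times these nodes share $v$, so by (iii) they are adjacent in $T$ and form a path. A temporal edge $(uv,t)$ lies in the unique node at time $t$ containing $u$ (and then also $v$), so it occupies a single node. The element $\vtimbag_j$ lies in $B_j$ and in the bags of its chosen neighbours, which forms a star around $j$ and is connected.

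The main obstacle is bounding the bag-element contribution, since a node can have many neighbours in $T$. The key observation is that each neighbour $j$ of $i$ shares a vertex with $\vtimbag_i$. Moreover, by uniqueness in (i), distinct neighbours at the same time are disjoint. Hence $i$ has at most $|\vtimbag_i|\le\tim$ successors and at most $\tim$ predecessors. Adding only the successors, as in the VIM proof, suffices to cover each \bagbefore edge. This gives at most $\tim+1$ bag-elements per bag. Temporal edges contribute at most $\tim^2$ (or $\binom{\tim}{2}$), and vertices contribute $\tim$. The total is therefore $|B_i|\le \tim^2+2\tim+1$, so the width is at most $\tim^2+3\tim-1$ even if both directions of neighbours are included. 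Including both directions gives at most $2\tim+1$ bag-elements and $|B_i|\le\tim^2+3\tim+1$. To meet the stated bound exactly we include only successors, or use $\binom{\tim}{2}$ for the edges.

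If $T$ is a forest rather than a tree, we connect its components arbitrarily. This preserves all three conditions, since no element spans two components.
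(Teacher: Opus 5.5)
Your proof is correct and follows essentially the paper's construction: index by the \timText tree $T$ and put into $B_i$ the vertices of $\pi_i$, the temporal edges at time $\tau(i)$ inside $\pi_i$, the element $\Pi_i$, and bag-elements of adjacent nodes, so that every $\bagbefore$ edge is covered and each $\Pi_j$ occupies a star around $j$. The only difference is that the paper roots $T$ and adds just the parent $p(\Pi_i)$, giving at most two bag-elements per bag and no need to count neighbours; your successor-only variant instead needs the observation that a node's neighbours at one adjacent time are pairwise disjoint and each meets $\pi_i$, hence number at most $\tim$, which is correct. Make your displayed definition of $B_i$ match the successor-only variant you finally use.
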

        \iflong
        \newcommand{\timbagel}{\ensuremath{\Pi}\xspace}
        \begin{proof}
            Let $(T,\vtimbagset=\{\vtimbag_i\colon i\in V(T)\},\tau)$ be a \timText decomposition of \gcal.
            We build a tree decomposition $(T,\{B_i:i\in V(T)\})$ of $\gaifman(\relGraphTim{\gcal})$ as follows.
            To avoid confusion, we will refer to the bags $\vtimbag_i$ of the \timText decomposition as \textit{\timbagTEXTs} and the bags $B_i$ of the tree decomposition as simply \textit{bags}.
            
            For each $i\in V(T)$, define the bag
            \begin{align*}
                B_i  := &
                \underbrace{\{v\in \vtimbag_i\}}_{\text{vertices in \timbagTEXT}}
                \cup\
                \underbrace{\{\eps=(e,t)\in \mathcal{E}: \tau(i)=t,\ \eps[0],\eps[1]\in \vtimbag_i\}}_{\text{temporal edges with endpoints in \timbagTEXT}}
                \ \cup
                \underbrace{\{\timbagel_i, p(\timbagel_i)\}}_{\text{\timbagTEXT and parent}} ,
            \end{align*}
            where $p(\vtimbag_i)$ is the parent \timbagTEXT of $\vtimbag_i$ in the \timText decomposition.
            
            We show that this forms a tree decomposition. Recall that the universe of \relGraphTim{\gcal} is $U=V\cup E\cup \vtimbagset$.
            \vspace{-0.4em}\begin{enumerate}[label=(\roman*)]
                \item First, we show $\bigcup_{i\in V(T)} B_i = U$:
                    \begin{itemize}
                        \item A vertex $v\in V$ lies in every $B_i$ with $v\in \vtimbag_i$. By definition of a \timText decomposition there exists at least one such \timbagTEXT for every vertex.
                        \item A temporal edge $\eps=(e,t)\in \mathcal{E}$ lies in $B_i$ where $\tau(i)=t$ and $\eps[0],\eps[1]\in \vtimbag_i$. By definition of a \timText decomposition, there exists at least one such \timbagTEXT for every temporal edge.
                        \item A \timbagTEXT $\vtimbag_i\in \vtimbagset$ lies in $B_i$. 
                    \end{itemize}
                \item  
                Every edge of the Gaifman graph (induced by a binary relation of the encoding) is covered by some bag:
                    \begin{itemize}
                        \item $\inc\subseteq \mathcal{E}\times V$: if $v$ is an endpoint of $\eps=(e,t)$, then there exists $i\in V(T)$ with $\tau(i)=t$ and $\eps[0],\eps[1]\in \vtimbag_i$; hence $\eps,v\in B_i$.
                      \item $\mathsf{bag}\subseteq V\times \vtimbagset$: for $v\in \vtimbag_i$ we have $v,\vtimbag_i\in B_i$.
                      \item $\Sedgetime\subseteq \mathcal{E}\times \vtimbagset$: if $\Sedgetime(\eps,\vtimbag_i)$, then $\eps,\vtimbag_i\in B_i$.
                      \item $\bagbefore\subseteq \vtimbagset\times \vtimbagset$: for $(i,j)\in E(T)$, we have that $\vtimbag_j$ is the parent of $\vtimbag_i$ in $T$ by definition of the \timText encoding. By construction of our bags, we have $\timbagel_i\in B_i$ and $p(\timbagel_i)=\timbagel_j\in B_i$. Hence $\timbagel_i$ and $\timbagel_j$ occur together in the bag $B_i$.
                    \end{itemize}
                \item 
                For each element of $U$, the indices of bags containing it form a connected subtree of $T$:
                    \begin{itemize}
                      \item $v\in V$ appears in the bag $B_i$ if and only if $v\in \vtimbag_i$. By the definition of a TIM decomposition, the subtree of $T$ induced by \timbagTEXTs containing $v$ must form a directed path. Since the \timText decomposition and this tree decomposition are indexed by the same tree, the bags containing $v$ form a subpath of $T$.
                      \item $\eps=(e,t)\in \mathcal{E}$ appears in the bag $B_i$ if and only if both endpoints of $\eps$ are in $\vtimbag_i$ and $\tau(i)=t$. This occurs in exactly one bag of the decomposition. Therefore, the subtree induced by the bags of $(T,\mathcal{B})$ containing $\eps$ is connected and non-empty.
                      \item $\vtimbag_i\in\vtimbagset$ appears in $B_i$ and all bags of its children, which induce a connected subtree of $T$. Since the tree decomposition and the TIM decomposition are indexed by the same tree, the claim follows.
                    \end{itemize}
                \end{enumerate}
            Lastly, we show the width bound. Since $|\vtimbag_i|\leq \tim$ for $i\in V(T)$, we have $|B_i|\ \le |\vtimbag_i|\ +\ |\mathcal{E}_i|\ +\ |N_T(\vtimbag_i)|\ \le\ 3\tim+\tim^2$, hence $\tw(\gaifman(\relGraphTim{\gcal}))\in\mathcal{O}(\tim(\gcal)^2)$.
        \end{proof}
        \fi
    
    \subsection{Temporal Encodings preserve Nowhere Denseness}
        FO model checking is fixed-parameter tractable on nowhere dense graph classes (\Cref{thm:FO-MC}), a broad family that includes, for example, planar graphs, graphs of bounded expansion, and graphs of bounded treewidth.
        To lift FO meta-theorems to temporal graphs, it therefore suffices to show that the Gaifman graphs arising from our relational encodings form nowhere dense classes.\ifshort
        ~Among the four encodings, the proof for the lifetime encoding requires the most care; we briefly sketch the main argument below.
        \else
        
        To the best our knowledge, FO meta-theorems have previously been considered for temporal graphs only in the work by Mans and Mathieson~\cite{mans_treewidth_2014} who studied a variation of the lifetime encoding. 
        At the time, tractability of FO model checking was known only for the more restricted class of bounded-expansion graphs, which forms a proper subclass of nowhere dense graph classes.
        While their approach aimed at establishing preservation of sparsity, a flaw in the encoding and the associated argument was later identified by Fomin et al.~\cite{fomin_as_2020}:
        Mans and Mathieson consider a generalisation of temporal graphs where the vertex set may vary over time. In their encoding of these \textit{dynamic} graphs, the universe contains an element for every vertex appearance (a temporal vertex), rather than one element per static vertex as in our encoding.
        They claimed one could identify whether two temporal vertices belong to the same underlying vertex using a constant-length formula. Fomin et al.~\cite{fomin_as_2020} later showed that this would imply the existence of a constant-length formula for 3-colourability on constant-treewidth graphs (NP-hard by Dailey~\cite{dailey_uniqueness_1980}), which would contradict the assumption that $P\neq NP$.

        In what follows, we show that all four encodings preserve nowhere denseness when their respective parameter is bounded.
        Among the four encodings, the lifetime encoding requires the most care. For the proof we require additional definitions and notation for depth-$r$ topological minors.

        \smallparagraph{Depth-$r$ topological minors via models.}
        Recall that for static graphs $G$ and $H$, we write $H\preceq_r G$, if some $r$-subdivision of $H$ is isomorphic to a subgraph of $G$. An $r$-subdivision is obtained by replacing a set of edges in $H$ by paths of length at most $r+1$ such that the \textit{internal vertices} (all but the endpoints) of all paths are disjoint.
        Equivalently, $H \preceq_r G$ holds if $G$ contains a \emph{depth-$r$ model} of $H$, which is a pair $\eta=(\eta_V,\eta_E)$ where
        \begin{enumerate}[leftmargin=*, itemsep=0pt, topsep=2pt]
            \item $\eta_V\colon V(H)\to V(G)$ is an injective mapping, and
            \item for every edge $ab\in E(H)$, $\eta_E(ab)$ is a simple path in $G$, called \textit{model-path}, of length at most $r+1$ which connects $\eta_V(a)$ and $\eta_V(b)$,
        \end{enumerate}
        such that for all distinct edges $ab,cd\in E(H)$, the paths $\eta_E(ab)$ and $\eta_E(cd)$ are internally vertex-disjoint, and no internal vertex of any path $\eta_E(ab)$ lies in $\eta_V(V(H))$.
        The vertices in $\eta_V(V(H))$ are called the \emph{branch-vertices} of the model.
        Intuitively, the model represents how the graph $H$ can be ``found'' inside $G$ after stretching each edge of $H$ into a short path of length at most $r+1$.
        The process of \textit{contracting} an edge $uv$ yields the graph $G/uv$ obtained by identifying $u$ and $v$ into a new vertex that is adjacent to every vertex that was adjacent to $u$ or $v$ in $G$, and then deleting self-loops and parallel edges (so that the result remains simple).
        \fi
        \begin{lemma}
        \label{lem:lifetime_enc-preserves-nwdense}
            Let $\mathcal{C}$ be a nowhere dense class of static graphs with denseness-function \densef, and consider the class of all temporal graphs $\mathcal{C'}$ with footprint in $\mathcal{C}$ and maximum lifetime \lifetime.
            Then the class of Gaifman graphs $\gaifman(\relGraphLifetime{\gcal})$ for each $\gcal\in \mathcal{C'}$ is nowhere dense. In particular,
            the function
            \iflong 
            \[ g(r) := \densef(\lceil r/2\rceil) + 2\lifetime \]
            \else
            \(g(r) := \densef(\lceil r/2\rceil) + 2\lifetime\)
            \fi
            is a denseness-function for this class.
        \end{lemma}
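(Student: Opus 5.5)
We argue by contradiction. Suppose that for some $r$ and some $\gcal\in\mathcal{C'}$ the Gaifman graph $H:=\gaifman(\relGraphLifetime{\gcal})$ contains a depth-$r$ model $\eta=(\eta_V,\eta_E)$ of $K_m$ with $m=g(r)=\densef(\lceil r/2\rceil)+2\lifetime$. The plan is to prune this model to a depth-$r$ model of a clique of size at least $\densef(\lceil r/2\rceil)$ that avoids the time nodes $L$ entirely. We then translate the pruned model into a depth-$\lceil r/2\rceil$ topological minor of the footprint $G\in\mathcal{C}$, which contradicts the choice of $\densef$.

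First we recall the structure of $H$. Its vertex set is $V\cup\ecal\cup L$. The time nodes $L$ form a clique of size at most $\lifetime$. Each temporal-edge node $(uv,t)$ is adjacent to exactly $u$, $v$ and the time node $t$. There are no edges between vertices of $V$. Consequently, $H-L$ is the footprint $G$ with every static edge $uv$ replaced by $|\lambda(uv)|$ parallel paths of length $2$, one through each temporal-edge node.

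\emph{Pruning step.} We first delete every branch vertex lying in $L$; there are at most $\lifetime$ of them. Next we consider model paths with an internal vertex in $L$. Internal vertices of distinct model paths are disjoint, so at most $|L|\le\lifetime$ model paths pass through $L$. For each such path we delete one of its two branch endpoints, which removes at most $\lifetime$ further branch vertices. What remains is a depth-$r$ model of $K_{m'}$ with $m'\ge\densef(\lceil r/2\rceil)$ that lies entirely inside $H-L$.

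Some remaining branch vertices might be temporal-edge nodes. In $H-L$ these have degree $2$, so they cannot be branch vertices of a clique on at least $4$ vertices. The degenerate case of small $\densef(\lceil r/2\rceil)$ has to be handled separately. One option is to assume without loss of generality that $\densef\ge 4$, since enlarging a denseness function keeps it valid. Alternatively, a degree-$2$ branch vertex can be replaced by an adjacent vertex of $V$, absorbing one path edge. With this, all branch vertices lie in $V$.

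\emph{Translation step.} Every model path in $H-L$ between two vertices of $V$ alternates between $V$ and $\ecal$. Mapping each subpath $u,(uv,t),v$ to the footprint edge $uv$ turns a model path of length $\ell\le r+1$ into a path of length $\ell/2\le\lceil r/2\rceil+1$ in $G$. This image path is simple, because the original path never revisits a vertex of $V$. Internal disjointness is preserved, since the internal $V$-vertices of the image paths are exactly the internal $V$-vertices of the original paths. No internal vertex is a branch vertex, and injectivity on branch vertices is inherited. Hence $K_{\densef(\lceil r/2\rceil)}\preceq_{\lceil r/2\rceil}G$, contradicting that $\densef$ is a denseness-function of $\mathcal{C}$.

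I expect the main obstacle to be the bookkeeping in the pruning step. One must check that cutting one branch vertex per $L$-touching path really leaves a clique model: all surviving edges have their original paths, which avoid $L$. One must also settle the low-degree branch-vertex case and confirm the exact depth arithmetic $\lfloor (r+1)/2\rfloor\le\lceil r/2\rceil$ for the rounding in $g$.
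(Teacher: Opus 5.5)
Your proposal is correct and follows the paper's argument. You prune the branch vertices in $L$ and the at most $\lifetime$ model paths through $L$, observe that degree-$2$ temporal-edge nodes cannot be branch vertices of a $K_q$ with $q\ge 4$, and contract them to halve path lengths in the footprint. Deleting only one endpoint per $L$-using path even yields the $2\lifetime$ loss more directly than the paper, which deletes both endpoints.

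One fix is needed in the degenerate case. Your replacement alternative fails: in the $4$-cycle $u,(uv,1),v,(uv,2)$ with branch set $\{u,(uv,1),v\}$, both neighbours of $(uv,1)$ are already branch vertices. ``WLOG $\densef\ge 4$'' only proves the bound with $\max(\densef,4)$ in place of $\densef$. Instead, note that internal vertices are not branch vertices, so the paths through $L$ use distinct non-branch time nodes. Hence your two deletion rounds together remove at most $|L|\le\lifetime$ branch vertices. This leaves at least $\densef(\lceil r/2\rceil)+\lifetime\ge 4$ branch vertices when $\densef(\lceil r/2\rceil)=3$ and $\lifetime\ge 1$. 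Smaller values of $\densef(\lceil r/2\rceil)$ force an edgeless footprint and are trivial.
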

        \newcommand{\gaifgraph}{\ensuremath{G_{\gaifman}}\xspace}
        \ifshort
        \begin{proof}[Proof sketch]
            The universe of the lifetime encoding contains static vertices, temporal edges, and time steps. Its Gaifman graph $\gaifgraph$ consists of a clique on the time steps $L$ and a bipartite incidence graph on $V\cup\ecal$, where every temporal edge $\eps=(uv,t)\in\ecal$ has degree~2 and is adjacent to $u$ and $v$. Hence, $\gaifgraph-L$ is a subdivided multi-edge version of the footprint.

            Since the footprint is nowhere dense, we can show that $\gaifgraph-L$ is nowhere dense too.
            We do this by showing any depth-$r$ clique topological minor in $\gaifgraph-L$ can be projected to a depth-$\lceil r/2\rceil$ clique topological minor in the footprint. Intuitively, $\gaifgraph-L$ already contains one subdivision vertex per footprint edge occurrence, so any subdivided edge of a clique minor alternates between $V$ and $\ecal$. Consequently, when projecting the model to the footprint, path lengths drop by a factor of~2 (up to ceilings).

            Adding $L$ can contribute at most $\lifetime$ vertices to the clique minor and at most $\lifetime$ subdivision vertices for clique edges. Thus the excluded clique size increases only by an additive $2\lifetime$.
        \end{proof}
        \fi
        \iflong
        \begin{proof}
            Let $\mathcal{C}$ be a nowhere dense static graph class with denseness-function \densef, and let $\mathcal{C'}$ be the class of all temporal graphs $\gcal=(G,\lambda)$ with lifetime at most~\lifetime and whose footprint $G$ lies in $\mathcal{C}$.
            We analyze the class of Gaifman graphs $\{\gaifman( \relGraphLifetime{\gcal} ) \colon \gcal\in \mathcal{C'}\}$.

            Fix $\gcal\in\mathcal{C'}$ and let $\gaifgraph := \gaifman(\relGraphLifetime{\gcal})$
            Recall that the vertices of $\gaifgraph$ are $V \cup \ecal \cup L$ and:
            \begin{itemize}
                \item edges between $V$ and $\ecal$ encode the incidence relation via $\inc$,
                \item edges inside $L$ form a clique (from the order $\timebefore$), and
                \item edges between $\ecal$ and $L$ encode the presence relation via $\pres$.
            \end{itemize}
            For an illustration of $\gaifgraph$ refer back to \Cref{fig:gaifman-lifetime_encoding}. We will refer to a vertex $a\in V \cup \ecal \cup L$ of \gaifgraph as \textit{vertex node}, \textit{temporal-edge node}, or \textit{time node}, respectively.

            \bigparagraph{Step 1: $\gaifgraph-L$ is nowhere dense with $\densef_{\gaifman-L}(r):=\densef(\lceil r/2\rceil)$.}\quad
            Consider $\gaifgraph - L$.
            This graph can be obtained from the footprint $G$ by subdividing every edge $uv\in E(G)$ once (resulting in a temporal-edge node $\eps$) and multiplying this $u-\eps-v$ path for every temporal edge $(uv,t)\in\ecal$. Conversely, $G$ can be obtained from $\gaifgraph-L$ by contracting every temporal-edge node with one of its neighbours.
            
            We show that if we can find an $r$-subdivision of a clique $K_q$ in $\gaifgraph-L$, then there is an $\lceil r/2\rceil$-subdivision of $K_q$ in $G$. 
            From this we conclude that $\densef_{\gaifman-L}(r):=\densef(\lceil r/2\rceil)$ is a denseness function of $\gaifgraph-L$.
              
            Fix $r\in\mathbb{N}$ and $q\ge 4$, and assume that $K_q \preceq_r (\gaifgraph-L)$. Let $\eta$ be a depth-$r$ model of $K_q$ in $\gaifgraph-L$. 
            In $\gaifgraph-L$, every temporal-edge node $\eps=(uv,t)\in\ecal$ has degree~2 and is adjacent exactly to the nodes $u$ and $v$. Consequently, no temporal-edge node in $\ecal$ can be a branch-vertex of a depth-$r$ clique model when $q\ge 4$, and hence all branch-vertices must lie in $V$.
            Additionally, each model-path of length $\ell$ in $\gaifgraph-L$ must alternate between $V$ and $\ecal$, and therefore becomes a path of length at most $\lceil \ell/2\rceil$ on $V$ in $G$ (after contracting all temporal-edge nodes with one of its neighbours).
            Internal vertex-disjointness of the model-paths is preserved under contraction.
            Thus, we obtain a depth-$\lceil r/2\rceil$ model of $K_q$ in $G$, and therefore $K_q \preceq_{\lceil r/2\rceil} G$.
            
            By definition of $\densef$ as a denseness-function for $\mathcal{C}$, we have $K_{\densef(\lceil r/2\rceil)} \not\preceq_{\lceil r/2\rceil} G$ for all $r$. By our arguments above, this implies $K_{\densef_{\gaifman-L} (r)} = K_{\densef(\lceil r/2\rceil)} \not\preceq_r (\gaifgraph-L)$ for all $r\in\mathbb{N}$, and hence
            $\densef_{\gaifman-L}(r)=\densef(\lceil r/2\rceil)$ is a denseness-function for the class of all such graphs $\gaifgraph-L$.
        
        \bigparagraph{Step 2: $\gaifgraph$ is nowhere dense with $\densef_{\gaifman}(r):=\densef_{\gaifman-L}(r)+2\lifetime$.}\quad
            Towards contradiction assume $K_\ell\preceq_r \gaifgraph$
            for $\ell := \densef_{\gaifman-L}(r) +2\lifetime$
            , and let $\eta$ be a depth-$r$ model of $K_\ell$ in \gaifgraph.
            
            Step~1 shows that $\gaifgraph-L$ cannot contain a depth-$r$ topological minor isomorphic to $K_{\densef_{\gaifman-L}(r)}$.
            Thus, the only way the model $\eta$ can ``beat'' this bound is by exploiting clique of time nodes $L$. Importantly, $L$ can help in two ways:
            (i) by hosting branch-vertices, and
            (ii) by serving as internal vertices on model-paths for clique edges. We show that both effects are limited by $|L|=\lifetime$.
            
            First, we consider the time nodes in $L$ that are branch-vertices. 
            Let $B$ be the set of branch-vertices of $\eta$ and denote by $B_L := B\cap L$ the branch-vertices in~$L$ and by $B_{\neg L}:=B\setminus L$ the branch-vertices in $\gaifgraph-L$.
            Since $|L|=\lifetime$, we have $|B_L|\le \lifetime$ and therefore $\lvert B_{\neg L}\rvert \geq \ell-\lifetime = \densef_{\gaifman-L}(r)+\lifetime$. 

            Next, we consider the time nodes in $L$ that are used by model-paths.
            Let $F$ be the set of clique edges $xy$ whose model path contains a time node of $L$ as an internal vertex. Since model-paths have to be internally disjoint, each time node in $L$ can be used by the model-path of at most one clique edge, and therefore $\lvert F\rvert\leq \lvert L\setminus B_L\rvert\le \lifetime$.
            Let $S$ be the set of endpoints (in $B_{\neg L}$) of the edges in $F$. Then $\lvert S\rvert\leq 2\lvert F\rvert\leq 2\lifetime$.
            
            Define $B' := B_{\neg L}\setminus S$ as the set of branch vertices outside $L$ whose model-paths do not use~$L$.
            It holds $\lvert B'\rvert \geq \lvert B_{\neg L}\rvert - 2\lifetime \geq \densef_{\gaifman-L}(r)$.
            We claim that restricting $\eta$ to the branch-vertices $B'$ yields a depth-$r$ model of $K_{\lvert B'\rvert}$ in $\gaifgraph-L$.
            To that end, consider any two distinct $x,y\in B'$.
            Then $x,y\notin L$ and $xy\notin F$ (otherwise $x\in S$ or $y\in S$). Hence the model-path for $xy$ does not go through $L$ and must lie entirely in $\gaifgraph-L$.
            Thus, $\gaifgraph-L$ contains $K_{|B'|}$ as a depth-$r$ topological minor, and in particular contains $K_{\densef_{\gaifman-L}(r)}$ as a depth-$r$ topological minor. This is a contradiction to $\densef_{\gaifman-L}$ being a denseness-function for $\gaifgraph-L$.

            Consequently, $K_{\densef_{\gaifman-L}(r) + 2\lifetime} \not \preceq_r \gaifgraph$ for all $r\in\mathbb{N}$, and hence $g(r)=\densef_{\gaifman-L}(r)+2\lifetime$ is a denseness-function for the class of all such graphs $\gaifgraph$.
        
        \bigparagraph{Step 3: Conclusion.} \quad
        In summary, for every $r \in \mathbb{N}$ and every graph $\gaifgraph = \gaifman( \relGraphLifetime{\gcal} )$ arising from a temporal graph $\gcal\in\mathcal{C}'$, we have $K_{\densef(\lceil r/2 \rceil) + 2\lifetime} \not\preceq_r \gaifgraph$.
        Equivalently, the function $g(r) := \densef(\lceil r/2\rceil) + 2\lifetime$ is a denseness-function for the class $\{\gaifman(\relGraphLifetime{\gcal}) : \gcal\in \mathcal{C'}\}$.
        \end{proof}
        \fi
        
        For the degree encoding, we show a stronger preservation result, namely that the maximum static degree of the Gaifman graph is bounded by the temporal degree of \gcal. Since graph classes of bounded maximum degree are nowhere dense, this immediately implies the desired result.
        \begin{lemma}
            \label{lem:tdegree_enc-preserves-nwdense}
            Let $\mathcal{C}$ be a nowhere dense class of static graphs with denseness-function \densef, and consider the class of all temporal graphs $\mathcal{C'}$ with footprint in $\mathcal{C}$ and maximum temporal degree~\tdegree.
            For $\gcal\in\mathcal{C'}$, let $H := \operatorname{Gf}(\relGraphDegree{\gcal})$ be the Gaifman graph of the degree encoding of \gcal. Then $\Delta(H) \le 2\tdegree$ and, consequently, the class of all such graphs $H$ is nowhere dense.
        \end{lemma}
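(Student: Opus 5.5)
The plan is to bound the degree of every node of $H$ directly by counting, and then use the standard fact that bounded-degree classes are nowhere dense. I read $\delta^t(v)$ as the total number of temporal edges incident to $v$, over all time steps, so $\tdegree$ bounds this count for every vertex. Note that the hypothesis that the footprint lies in the nowhere dense class $\mathcal{C}$ is not needed: the bound comes from $\tdegree$ alone. Recall that $H$ has vertex set $V\cup\ecal$, and its edges come only from the binary relations $\inc$ and $\psuc$.

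\emph{Vertex nodes.} A vertex node $v\in V$ is adjacent in $H$ only through $\inc$, since $\psuc$ relates temporal edges only. Its neighbours are therefore exactly the temporal edges $(e,t)$ with $v\in e$. Hence $\deg_H(v)=\delta^t(v)\le\tdegree$.

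\emph{Temporal-edge nodes.} Fix $\eps=(uv,t)\in\ecal$. Through $\inc$ it is adjacent to the two vertex nodes $u$ and $v$. Through $\psuc$ (in either direction, since the Gaifman graph is undirected) it is adjacent only to temporal edges $\eps'\neq\eps$ that share an endpoint with $\eps$. Each such $\eps'$ is incident to $u$ or to $v$. There are at most $\delta^t(u)-1$ temporal edges other than $\eps$ incident to $u$, and at most $\delta^t(v)-1$ incident to $v$. Taking the union of these two sets only helps: parallel temporal copies of the same static edge contain both $u$ and $v$ and are counted at most twice. This gives
\[
\deg_H(\eps)\le 2+(\tdegree-1)+(\tdegree-1)=2\tdegree .
\]
Together with the first case, $\Delta(H)\le 2\tdegree$, using $\tdegree\ge1$ whenever $H$ has edges. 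Since $\psuc$ requires $t_1<t_2$, no self-loops arise; they would be irrelevant in any case, because the Gaifman graph is simple.

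\emph{Nowhere denseness.} Suppose $K_q\preceq_r H$ for some $r$. Every branch vertex of a depth-$r$ model of $K_q$ is the endpoint of $q-1$ model-paths. These paths are internally vertex-disjoint and avoid the other branch vertices, so their first edges are distinct and the branch vertex has degree at least $q-1$. Hence $q\le 2\tdegree+1$. Therefore $g(r):=2\tdegree+2$, independent of $r$, is a denseness-function for the class of all such $H$.

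The only step needing care is the degree count for temporal-edge nodes. There I would make sure that $\psuc$-neighbours in both temporal directions are included, and that temporal edges sharing both endpoints are not double-counted in a way that breaks the bound. The union argument above handles both points.
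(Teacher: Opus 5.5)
Your proof is correct and follows the same route as the paper. A vertex node has at most $\tdegree$ neighbours, a temporal-edge node has at most $2+2(\tdegree-1)$, and the paper then uses the standard fact that bounded-degree classes are nowhere dense; you prove that last fact explicitly with the branch-vertex degree argument (giving $g(r)=2\tdegree+2$), whereas the paper simply invokes it.
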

        \iflong
        \begin{proof}
            Let $\gaifgraph:=\gaifman(\relGraphDegree{\gcal})$ be the Gaifman graph of the degree encoding $\relGraphDegree{\gcal}$.
            Recall from the degree encoding that the vertices of $\gaifgraph$ are $V \cup \mathcal{E}$, where
            $\mathcal{E}$ is the set of temporal edges.
            The binary relations are:
            \vspace{-0.4em}
            \begin{itemize}
                \item $\Sedgesource((e,t),v)$ and $\Sedgetarget((e,t),v)$, connecting every temporal edge to its endpoints; and
                \item $(e_1,t_1)\psuc(e_2,t_2)$, which holds iff the two temporal edges share an endpoint and $t_1 \prec t_2$.
            \end{itemize}
            \vspace{-0.4em}
            Thus, in the Gaifman graph $\gaifgraph$:
            \vspace{-0.4em}
            \begin{itemize}
                \item each temporal edge $(e,t)$ is adjacent to its two endpoints; and
                \item two temporal edges $(e_1,t_1)$ and $(e_2,t_2)$ are adjacent if and only if they share a vertex in $V$ and their times are comparable (which is always the case, since $\prec$ is a total order).
            \end{itemize}
            \vspace{-0.4em}
            We can now bound the maximum degree of $\gaifgraph$ in terms of the maximum temporal degree $\tdegree$.
            
            Let $v\in V$. By definition of $\tdegree$, there are at most $\tdegree$ temporal edges incident with $v$, so in $\gaifgraph$ the vertex $v$ has at most $\tdegree$ neighbours (all of them in $\mathcal{E}$). There are no edges in $\gaifgraph$ between two vertices of $V$.
            Hence $\deg_H(v) \le \tdegree$.
            
            Let $\eps = (e,t)\in\mathcal{E}$ where $e$ has endpoints $u,v\in V$ (for directed
            graphs, $u$ is the source and $v$ the target; the argument is identical).
            Then the neighbours of $\eps$ in $\gaifgraph$ are 
            (1) its endpoints $u$ and $v$ (two vertices in $V$); and 
            (2) all other temporal edges incident with $u$ or with $v$ (the $\psuc$-neighbours).
            By the definition of $\tdegree$, there are at most $\tdegree-1$ temporal edges at $u$ different from $(e,t)$, and at most $\tdegree-1$ such edges at $v$. 
            Therefore $\eps$ has at most $(\tdegree-1)+(\tdegree-1)$ neighbours in $\mathcal{E}$, plus its two endpoint vertices.
            In particular, $\deg_H(x) \le 2 + 2(\tdegree-1) \le 2\tdegree$.
            
            Combining both cases, the maximum degree of $\gaifgraph$ is bounded by $\Delta(\gaifgraph) \le 2\tdegree$.
        \end{proof}
        \fi
        
        For the \vimText encoding and the \timText encoding the nowhere dense preservation again follows from a stronger result: the parameters \vim and \tim upper bound the treewidth of the footprint of a temporal graph. Since graph classes with bounded treewidth are nowhere dense, the desired result follows directly from the treewidth bounds established in \Cref{lem:vim_enc-preserves-tw,lem:tim_enc-preserves-tw}.
        \begin{corollary}
        \label{lem:vim_enc-preserves-nwdense}
        \label{lem:tim_enc-preserves-nwdense}
            Let $\mathbf{C}$ be a class of temporal graph with \vimText width at most $\vim$. Then the class of all Gaifman graphs $\{\gaifman(\relGraphVim{\gcal})\,\colon\,\gcal\in\mathbf{C}\}$ is nowhere dense.
            Equivalently, let $\mathbf{C'}$ be a class of temporal graph with \timText width at most $\tim$. Then the class of all Gaifman graphs $\{\gaifman(\relGraphTim{\gcal})\,\colon\,\gcal\in\mathbf{C'}\}$ is nowhere dense.
        \end{corollary}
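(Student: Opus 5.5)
The plan is to reduce the corollary directly to the treewidth bounds already established and then invoke the standard fact that bounded treewidth implies nowhere denseness. For the \vimText part, fix the class $\mathbf{C}$ with \vimText width at most $\vim$. By \Cref{lem:vim_enc-preserves-tw}, every $\gcal\in\mathbf{C}$ satisfies $\tw\big(\gaifman(\relGraphVim{\gcal})\big)\le \vim+2\vim^2 =: k$. So the class of Gaifman graphs is contained in the class of graphs of treewidth at most $k$, a constant independent of $\gcal$. The \timText part is handled identically using \Cref{lem:tim_enc-preserves-tw}, with $k' := \tim^2+3\tim-1$.

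It remains to produce an explicit denseness-function for graphs of treewidth at most $k$. The key observation is that topological minors (of any depth) do not increase treewidth. More precisely, if $K_q\preceq_r H$, then some subdivision of $K_q$ is a subgraph of $H$. Treewidth is monotone under subgraphs, and subdividing edges does not change treewidth for graphs of treewidth at least $2$. Alternatively, contracting the subdivision paths shows that $K_q$ is a minor of $H$, and treewidth is minor-monotone. Hence $q-1=\tw(K_q)\le\tw(H)\le k$. Therefore $\densef(r):=k+2$ (constant in $r$) satisfies $K_{\densef(r)}\not\preceq_r H$ for every $r$ and every $H$ in the class, and the class is nowhere dense.

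I expect no real obstacle; the only point requiring care is justifying minor-monotonicity of treewidth. The cleanest route is to cite it as standard (\eg \cite{cygan_parameterized_2015}), or to give a one-line argument. Given a tree decomposition of $H$, replace every vertex of a contracted path by its branch-vertex in every bag. The bags then still cover all edges, the occurrence sets stay connected since the contracted sets are connected, and bag sizes do not grow. One should also note that the uniqueness of the \vimText decomposition, and the choice of an optimal \timText decomposition, make the encodings well defined, so that the lemma bounds apply to each member of the class.
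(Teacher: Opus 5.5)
Your proposal is correct and follows the same route as the paper, which also derives the corollary directly from the Gaifman-graph treewidth bounds of Lemmas~\ref{lem:vim_enc-preserves-tw} and~\ref{lem:tim_enc-preserves-tw}, combined with the fact that classes of bounded treewidth are nowhere dense. You go slightly further than the paper: you give an explicit constant denseness function $r \mapsto k+2$, justified by minor-monotonicity of treewidth, and you note that the TIM bound presupposes an optimal TIM decomposition; the paper leaves both points implicit.
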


    \iflong
    \subsection{Meta-Theorems for FO and MSO}
        We can now state the main algorithmic consequences for temporal model checking.
        They imply that when a temporal property (such as existence of a connected component, of a matching, or of vertex-disjoint paths) can be expressed in an FO, respectively MSO, formula using the $\mathsf{X}$-encoding, then the corresponding computational problem is \FPT on nowhere dense, respectively bounded treewidth, graphs when parameterised by $\mathsf{X}$.
       
        \begin{theorem}
        \label{thm:temporal-mso-meta}
            \MCMSO on a \tg $\gcal=(G,\lambda)$ is \FPT when parameterised by
            (i)~$\twlifetime$, (ii)~$\twdegree$, (iii)~$\vim$, or (iv)~$\tim$.
        \end{theorem}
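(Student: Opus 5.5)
The plan is to reduce each of the four cases to Courcelle's theorem (\Cref{thm:MSO-MC}) applied to the Gaifman graph of the appropriate encoding. The standing observation is this: an MSO formula $\varphi$ over a relational structure $\scal$ whose relations have arity at most two is evaluated on $\scal$. A tree decomposition of $\gaifman(\scal)$ is also a tree decomposition of $\scal$ in the sense required by Courcelle's theorem for relational structures, because every tuple of every relation lies in a common bag. Hence model checking $\varphi$ on $\scal$ runs in time $f(|\varphi|,\tw(\gaifman(\scal)))\cdot|\scal|$.

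So for each parameter $\mathsf{X}\in\{\lifetime,\tdegree,\vim,\tim\}$ I would proceed in three steps.

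\textbf{Step 1: build the encoding.} Construct $\relGraph{\gcal}_{\mathsf{X}}$ from $\gcal$ in polynomial time. For the lifetime and degree encodings this is immediate. The size is $|V|+|\ecal|+\lifetime$ for the lifetime encoding, and $|V|+|\ecal|$ plus at most $|\ecal|\tdegree$ $\psuc$-pairs for the degree encoding. For the \vimText encoding, the decomposition is unique and computable by scanning the activity intervals. For \timText, a decomposition of minimum width is needed. Here I would use that the bags at time $t$ can be taken to be unions of connected components of the subgraph of alive vertices and edges. The finest such choice is canonical and optimal, so it is computable in polynomial time. I expect this to require a short argument.

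\textbf{Step 2: bound the treewidth of the Gaifman graph.} Invoke \Cref{lem:lifetime_enc-preserves-tw}, \Cref{lem:tdegree_enc-preserves-tw}, \Cref{lem:vim_enc-preserves-tw} and \Cref{lem:tim_enc-preserves-tw}. These give $\tw(\gaifman(\cdot))$ bounded by a computable function of $\tw+\lifetime$, of $\tw+\tdegree$, of $\vim$, and of $\tim$, respectively.

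\textbf{Step 3: apply Courcelle.} Apply \Cref{thm:MSO-MC}, first computing a tree decomposition of $\gaifman(\cdot)$ via Bodlaender's algorithm in FPT time with respect to its width. The total running time is $f'(|\varphi|,k)\cdot\poly(|\gcal|)$, where $k$ is the respective parameter.

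I expect the main obstacle to be the bookkeeping around the size of the input structure. In the lifetime encoding, the universe includes $L$, and $\lifetime$ might exceed $|\ecal|$. This is harmless, since $\lifetime$ is part of the parameter in (i). For (iii) and (iv), I would also note that bags at times with no alive vertex can be discarded or kept, because the structure stays polynomial in $|\gcal|+\lifetime$. I would also make explicit that the formula $\varphi$ is understood over the corresponding encoding, as stated in the surrounding text. With those remarks, the theorem follows case by case.
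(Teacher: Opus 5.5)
Your Steps 2 and 3 are exactly the paper's proof, which consists only of combining Courcelle's theorem with the four treewidth-preservation lemmas. The surrounding bookkeeping is sound: a tree decomposition of the Gaifman graph is one of the structure, Bodlaender's algorithm computes it, and the encodings have the sizes you state.

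The step that fails as written is the TIM part of Step 1. Your first observation is correct: bags can be taken to be unions of components of the alive graph. An alive edge puts its endpoints in a common bag at its first and last appearance, and uniqueness of paths in $T$ keeps them together in between. But the finest such choice is in general not a TIM decomposition, because it can violate the requirement that $T$ be a tree.

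Take temporal edges $(uw,1)$, $(wv,1)$, $(uv,3)$. Then $A(u)=A(v)=[1,3]$ and $A(w)=[1,1]$, and no edge is alive at time $2$. The finest choice has bags $\{u,v,w\}$ at time $1$, $\{u\}$ and $\{v\}$ at time $2$, and $\{u,v\}$ at time $3$. Condition (iii) then produces a $4$-cycle on these four bags, so any valid decomposition must merge $\{u\}$ and $\{v\}$ at time $2$.

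Such forced merges are also not canonical. With $(ad,1)$, $(bc,1)$, $(ab,3)$, $(cd,3)$, the finest choice yields an $8$-cycle. It can be repaired in three ways: merge all four vertices at time $2$; or merge $\{a,b\}$ and $\{c,d\}$ at time $2$ together with all four at time $1$; or do the symmetric thing at time $3$. These repairs are pairwise incomparable, and no valid decomposition refines all of them. So the claim ``canonical, optimal, hence polynomial'' is unjustified.

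The paper does not need this step. The TIM encoding is defined relative to a given decomposition $(T,\Pi,\tau)$, and the proof simply applies the TIM lemma to it. So either treat the decomposition as part of the input, as the paper implicitly does, or replace your claim by a genuine algorithm or citation. That algorithm only needs to produce a TIM decomposition whose width is bounded by a function of the TIM width. Optimality is not required, since the lemma gives Gaifman treewidth $O(w^2)$ for any decomposition of width $w$.
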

        \begin{proof}
            The statement follows by combining the MSO model-checking meta-theorem on bounded-treewidth Gaifman graphs (\Cref{thm:MSO-MC})
            with \Cref{lem:lifetime_enc-preserves-tw,lem:tdegree_enc-preserves-tw,lem:vim_enc-preserves-tw,lem:tim_enc-preserves-tw}.
        \end{proof}
        \begin{theorem}
            \MCFO on a \tg $\gcal=(G,\lambda)$ is \FPT when parameterised by
            (i)~$\lifetime$~if the footprint $G$ is nowhere dense, (ii)~$\tdegree$, (iii)~$\vim$, or (iv)~$\tim$.
        \end{theorem}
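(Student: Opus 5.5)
The proof mirrors that of \Cref{thm:temporal-mso-meta}: reduce to the static FO meta-theorem of Grohe, Kreutzer and Siebertz (\Cref{thm:FO-MC}) applied to the Gaifman graph of the appropriate encoding. Given $\gcal$ and an FO sentence $\varphi$ over the vocabulary of encoding $\mathsf{X}\in\{\lifetime,\tdegree,\vim,\tim\}$, we first build $\relGraph{\gcal}_{\mathsf{X}}$ in polynomial time. Its size is $|V|+|\ecal|+(\text{number of bags or time steps})$, which is polynomial in the input size. For the \timText case we also need a \timText decomposition. Either it is assumed to be given, or we use the canonical one whose bags at time $t$ are the connected components of the alive vertices' snapshot-induced structure; this one is computable in polynomial time and is optimal, or within the parameter. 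FO model checking on a relational structure reduces to FO model checking on its Gaifman graph enriched with relation labels. \Cref{thm:FO-MC} applies to such coloured or relational structures whose Gaifman graph lies in a nowhere dense class, since relations of bounded arity, all binary here, can be encoded by colouring subdivision vertices without affecting nowhere denseness beyond a change of radius.

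Next we check, case by case, that the class of Gaifman graphs is nowhere dense with a denseness function depending only on the parameter and on $r$. For (i), \Cref{lem:lifetime_enc-preserves-nwdense} gives $g(r)=\densef(\lceil r/2\rceil)+2\lifetime$. For (ii), \Cref{lem:tdegree_enc-preserves-nwdense} gives maximum degree at most $2\tdegree$. A graph of maximum degree $\Delta$ excludes $K_{\Delta+2}$ as a topological minor at every depth, so $g(r)=2\tdegree+2$. For (iii) and (iv), \Cref{lem:vim_enc-preserves-tw,lem:tim_enc-preserves-tw} bound the treewidth by $k=\mathcal{O}(\vim^2)$, resp.\ $\mathcal{O}(\tim^2)$. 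Since a topological minor of $K_q$ forces treewidth at least $q-1$, we may take $g(r)=k+2$ (this is \Cref{lem:vim_enc-preserves-nwdense}).

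Plugging $g$ into \Cref{thm:FO-MC} with a fixed $\varepsilon$ (say $\varepsilon=1$) gives running time $F(|\varphi|,1,g(r(\varphi)))\cdot|U|^{2}$. Since $g(r(\varphi))$ is a computable function of $|\varphi|$ and the parameter (and of $\densef$ in case (i), which is fixed with the class), this is of the form $f(|\varphi|,k)\cdot\poly(n)$, i.e.\ \FPT in the combined parameter.

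The main obstacle is uniformity in case (i). \Cref{thm:FO-MC} is stated for a \emph{fixed} class, whereas our parameter $\lifetime$ varies. We handle this by working with the class $\mathcal{C}'_{\lifetime}$ for each value of $\lifetime$ and observing that the dependence enters only through $g$, which is explicitly computable from $\lifetime$ and $\densef$. This is exactly the parenthetical remark after \Cref{thm:FO-MC}. The same remark handles the bounded-treewidth and bounded-degree cases. A minor secondary point is ensuring that the \timText decomposition used to build the encoding has width bounded in $\tim$.
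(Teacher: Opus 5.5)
Your proposal is correct and follows the paper's proof, which is exactly the combination of \Cref{thm:FO-MC} with the nowhere-denseness results \Cref{lem:lifetime_enc-preserves-nwdense,lem:tdegree_enc-preserves-nwdense,lem:vim_enc-preserves-nwdense}. Your explicit denseness functions for (ii)--(iv), the reduction from relational structures to Gaifman graphs, and the uniformity remark for $\lifetime$ only spell out what the paper leaves implicit. One caution on your fallback for (iv): taking the bags at each time to be the connected components of the snapshot is not in general a valid \timText decomposition. If $a,b$ are adjacent at times $1$ and $3$ but lie in different components at time $2$, the bag graph contains a $4$-cycle. So rely on a decomposition being given, as the paper implicitly does, rather than on that construction.
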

        \begin{proof}
            The statement follows by combining the FO model-checking meta-theorem on nowhere dense graph classes (\Cref{thm:FO-MC})
            with \Cref{lem:lifetime_enc-preserves-nwdense,lem:tdegree_enc-preserves-nwdense,lem:vim_enc-preserves-nwdense,lem:tim_enc-preserves-nwdense}.
        \end{proof}
        Note that for the parameters $\tdegree$, $\vim$, and $\tim$, no additional assumption on the footprint is required: bounded temporal degree directly yields a Gaifman graph of bounded maximum degree (which is nowhere dense), while bounded $\vim$ or $\tim$ implies bounded treewidth of the footprint (hence nowhere dense).
        In contrast, bounded lifetime alone does not imply nowhere denseness of the footprint: a clique in which all edges appear at time $1$ has lifetime~$1$ but is not nowhere dense.
         \fi       
\iflong
\section{Logic Cookbook: A Collection of Logical Formulas for Temporal Graph Properties}\sectionmark{Logic Cookbook}
\label{sec:cookbook}
    \newcommand{\LOGstatus}[1]{\textcolor{purple}{ -- Status: #1/3}}
    \newcommand{\LOGstatusthree}{\textcolor{LimeGreen}{ -- Status: to be read}}
    In this section we provide a lexicon of logical formulas in First or Second Order Logic for the four encodings. For any existing formulas, we provide references and unify using our framework.
    First, we give an overview of notation and conventions we employ.
    \subsection*{Notation and Conventions}
        \begin{description}
            \item[Variables.] For convenience, We use the following conventions for variables where possible:
                \begin{description}[font= \normalfont\sffamily, leftmargin=!, labelwidth=0pt]
                    \item[vertices:] $x,y,z,\dots,x_1,x_2,\dots$
                    \item[temporal edges:] $\varepsilon,\varepsilon_1,\varepsilon_2,\dots$
                    \item[vertex sets:] $X,Y,Z,\dots$
                    \item[temporal edge sets:] $\ecal',\ecal_1,\ecal_2,\dots$
                    \item[\vim bags:] $\vtimbag,\vtimbag_1,\vtimbag_2,\dots$
                    \item[\tim bags:] $\vtimbag,\vtimbag_1,\vtimbag_2,\dots$
                    \item[\vim bag sets:] $\vtimbagset,\vtimbagset_1,\vtimbagset_2,\dots$
                    \item[\tim bag sets:] $\vtimbagset,\vtimbagset_1,\vtimbagset_2,\dots$
                \end{description}
            \item[Shorthand quantification.]
                We use $\forall x\in V\,(\dots)$ as shorthand for
                $\forall x\,(V(x)\rightarrow \dots)$, where $V(x)$ is the unary relation for vertices in the universe. We apply this shortcut for each of the unary relations.
            \item[Static edges.]
                We use $uv\in E$ or $\edge(u,v)$ as shorthand for
                $\exists \eps\in \mathcal{E}:\,\inc(\eps,u)\wedge \inc(\eps,v)$. Similarly, to quantify over static edges, we may write $\exists E'\subseteq E, \,\forall \eps_1,\eps_2\in E', \,v,u\in V: \inc(\eps_1,v) \wedge \inc(\eps_2,v)\wedge \inc(\eps_1,u) \wedge \inc(\eps_2,u) \implies \eps_1=\eps_2$ since this forces at most one temporal edge per static edge.
            \item[Formula length and logic used.]
                An asterisk indicates that a formula has length $\ell$, but may contain longer subformula(e). Similarly, an asterisk on the fragment of logic used indicates that a formula is written in FO, but it could contain subformula(e) in MSO.
            \end{description}
    \setcounter{tocdepth}{3} 
    \localtableofcontents \label{toc:formulas}

    \subsection{Useful shortcuts}
    \label{subsec:usefule-shortcuts}
    \MSOSetCategory{Shortcuts}
    This section collects useful shortcut formulas. These formulas do not encode computational problems themselves, but serve as building blocks and help unify notation across encodings. For instance, although adjacency is expressed differently for individual encodings (see \Cref{subsec:sc-adjacency}), it can be referred to uniformly as~$\varphi_{\text{adj}}$ once expressability is established.

    \subsubsection{Static edges 
    } \label{subsec:static-edges}
         We use $uv\in E$ or $\edge(u,v)$ as shorthand for
                $\exists \eps\in \mathcal{E}:\,\inc(\eps,u)\wedge \inc(\eps,v)$. 
        Similarly, to quantify over static edges, we may write the following formula:
        
        \MSOFormula{edgeset}{quantifying static edges}
            {there exists a set of edges $E'$ in the underlying graph}
            {}
            {\lifetime, \tdegree, \vim, \tim}{constant}
            {\exists E'\subseteq E, \,\forall \eps_1,\eps_2\in E', \,v,u\in V: \\
            &\inc(\eps_1,v) \wedge \inc(\eps_2,v)\wedge \inc(\eps_1,u) \wedge \inc(\eps_2,u) \implies \eps_1=\eps_2}
            {MSO}{folklore}
        \noindent since this forces at most one temporal edge per static edge.
        
        \MSOFormula{sharededge}{shared static edge}
            {two temporal edges share an underlying edge}
            {\eps_1,\eps_2}
            {\lifetime, \tdegree, \vim, \tim}{constant}
            {\forall v\in V : \inc(\eps_1,v) \implies \inc(\eps_2,v)}
            {FO}{folklore}
        
    \subsubsection{Possible successor 
    } \label{subsec:sc-possuc}
         \input{Formulas/shortcuts/subsec--sc-pos_suc}

        \subsubsection{Temporal Adjacency 
        } \label{subsec:sc-adjacency}

\input{Formulas/shortcuts/subsec--sc-tem_adjacency}

        \subsubsection{Cardinalities
        }\label{subsec:sc-cardinalities}

\input{Formulas/shortcuts/subsec--sc-card}

        \subsubsection{Degree Requirements 
        } \label{subsec:sc-degree}

\input{Formulas/shortcuts/subsec--sc-degree_req}

        \subsubsection{Interval Demands 
        }\label{subsec:sc-interval}

\input{Formulas/shortcuts/subsec--sc-interval}

        \subsubsection{Static Connected Components 
        }\label{subsec:sc-staticCC}

\input{Formulas/shortcuts/subsec--sc-static_cc}

    \subsection{Temporal Paths, Walks, and Reachability  
    } \label{subsec:paths}
    \MSOSetCategory{Paths/Reachability}

\input{Formulas/shortcuts/subsec--sc-paths_walks}
        
    \subsection{Reach-based Problems}   
    This subsection summarizes temporal graph problems that can be expressed as refinements or restrictions of temporal reachability, using the path predicates introduced above.
            
            
            
        \subsubsection{Edge- and vertex-disjoint paths 
        }
        \label{subsec:disjoint}

\input{Formulas/ReachProblems/subsec--RP-disjoint}

        \subsubsection{Restless/bounded-waiting variants of reachability
        }
        \label{subsec:restless}

\input{Formulas/ReachProblems/subsec--RP-restless}

        \subsubsection{Temporally connected components 
        }
        \label{subsec:components}

\input{Formulas/ReachProblems/subsec--RP-TCC}

        \subsubsection{Temporal separators and cuts 
        }
        \label{subsec:separators}

\input{Formulas/ReachProblems/subsec--RP-separators} 
    
        \subsubsection{Temporal spanners 
        }
        \label{subsec:Spanners}

\input{Formulas/ReachProblems/subsec--RP-Spanner}  

        \subsubsection{Exploration (\lifetime) 
        }

\input{Formulas/ReachProblems/subsec--RP-Exploration}

    
    \subsection{Temporal Clique and Independent Set 
    }

\input{Formulas/subsec--Clique-IS}

    \subsection{Feedback Edge/Vertex Set 
    }
    \label{subsec:feedbackEdgecover}

\input{Formulas/subsec--FES_EdgeCover}

    \subsection{Temporal Colouring 
    }

\input{Formulas/subsec--Coloring}

    \subsection{Temporal Matchings 
    }

\input{Formulas/subsec--Matching}

    \subsection{Temporal Edge Cover, Vertex Cover, and Dominating Set 
    }
    \label{subsec:Cover}

\input{Formulas/subsec--Cover}

    
    %


    \subsubsection{A note on realisation and modification problems}

    We refer to problems in which we are given a static graph to label with times as a \emph{realisation} problem, and a temporal graph which we are required to add/remove (temporal) vertices or edges as a \emph{modification} problem. Realisation and modification problems have been expressed in MSO by capturing a property of the underlying graph which forces/forbids a temporal graph to have some (un)desired property~\cite{deligkas_minimizing_2023,enright_deleting_2021,kutner_temporal_2023,mertzios_temporal_2025}. 
            
            Note that existence of a labelling function can be encoded using a formula of length lifetime where the edges in the footprint graph are the union of $\lifetime$ many sets.
            \[\exists E_1,\ldots,E_{\Lambda}\subseteq E, \forall e\in E, \exists E_i : e\in E_i 
            \]
            We can also force these sets to form a partition if we require a simple temporal graph (a temporal graph such that each edge is active exactly once). This can be achieved by appending the following formula to the formula above.
            \[\big(e\in E_i \implies \bigwedge_{j\in 1,\ldots,\lifetime}(E_j=E_i \vee e\notin E_j)\big)\]
            
            Since modification problems in the literature require formulas which describe very specific properties of the footprint graph, we have not included them here. However, the temporal properties they aim to modify for may be found earlier in this section.

\else
    \section{Logic Cookbook (see full version)}
        \label{sec:cookbook-short}
    A major motivation of our framework is reusability: in the full version, we provide a modular lexicon of FO/MSO formulas for temporal-graph properties across all four encodings (and, by design, extendable to future encodings).
    The cookbook covers over a dozen problems considered in the temporal graph literature; a rough overview is given in the following table: 
    \begin{center}
    \small
    \begin{tabular}{@{}p{0.25\linewidth}p{0.7\linewidth}@{}}
    \textbf{Family} & \textbf{Formulas provided in the full version}\\ \hline
    Reachability notions & temporal paths, reachability, restless paths \\
    Reachability-based & separators/cuts, connected components, temporal spanner \\
    Covering & matchings, edge/vertex cover, dominating set \\
    Classic subgraphs & temporal clique and independent set, colouring \\
    Temporal Parameters & feedback edge number, temporal path number\\
    \end{tabular}
    \end{center}

    For each problem, we provide 
    (i) a formal definition;
    (ii) a modular logic formula; 
    (iii) metadata such as the supported encodings, the logical fragment (FO/MSO), and formula length.
    As an illustration, we include two entries below.

    \setcounter{subsection}{3}
    \setcounter{msoformula}{12}
    \textsc{Temporal Spanners} (see \cite{casteigts_temporal_2021,kempe_connectivity_2002}) are sparse subgraphs of a temporal graph that preserve temporal reachability between all pairs of vertices. They are the temporal analogue of spanning trees in static graphs.
\MSOFormula{tspanner}{temporal spanner}
    {$X$ preserves temporal reachability between all pairs of vertices}
    {X}
    {\lifetime, \tdegree, \vim, \tim}{constant}
    {
        \forall u,v\in V \colon
        \big(
            \msovarphi{path}(u,v)
            \rightarrow
            \msovarphi{Vpath}(u,v,X)
        \big)
    }
    {MSO}{novel}
    \noindent where $\msovarphi{path}(u,v)$ checks if there is a temporal path from $u$ to $v$ and $\msovarphi{Vpath}(u,v,X)$ checks if there is a temporal path from $u$ to $v$ only traversing vertices in $X$.

    \setcounter{subsection}{8}
    \setcounter{msoformula}{0}
    
    \MSOFormula{TPcoverTEE}{temporal path exact edge cover}
        {there exists a family of temporal edge sets $P_1,\dots,P_k$ such that each $P_i$ is a temporal path and for every $\eps\in\ecal$ exists exactly one $i\in[k]$ such that $\eps\in P_i$}
        {}
        {\lifetime, \tdegree, \vim, \tim}
        {$k^2$}
        {
        \exists P_1,\dots,P_k\subseteq \ecal\colon\Big(\ 
            \bigwedge_{i\in[k]} \big(\exists u,v\in V\colon \msovarphi{PEpath}(u,v,P_i) \big)\ \wedge \\&
            \forall \varepsilon \in \ecal \colon \Big(
            \bigvee_{i\in[k]} \big(\varepsilon \in P_i \wedge \bigwedge_{j\in[k], j\neq i} \varepsilon\notin P_j\big)\Big)
        \ \Big)}
        {MSO}
        {novel}
    \noindent where $\msovarphi{PEpath}(v,u,P_i)$ checks if the set of temporal edges $P_i$ forms a temporal path from $v$ to $u$.

    
\fi

\section{Future Work}
\label{sec:conclusion}
    There are several directions for future research on the use of logic in the study of temporal graphs.

   With respect to temporal encodings, a natural direction is to identify further temporal parameters for which similar preservation and meta-theorems can be obtained. Possible candidates include the temporal feedback vertex number \cite{haag_feedback_2022}, the temporal cliquewidth \cite{enright_structural_2024}, or a combination of underlying cliquewidth with lifetime or temporal degree.

        In the work introducing the \timText width (\tim), Enright et al.~\cite{enright_families_2025} present two meta-algorithms that characterise the classes of computational problems that are fixed-parameter tractable with respect to \vim and \tim. For \vim, tractability is tied to a notion of \textit{temporal locality}, while for \tim this locality is further structurally restricted.
        This perspective is reminiscent of first-order logic which is inherently limited to see only local properties of a graph. An interesting question is whether these classes of problems admit a precise logical characterisation, \eg by a suitable fragment of FO or MSO, and how such a characterisation relates to the \vimText and \timText encodings.

    Finally, a broader direction is to investigate logical tools beyond FO and MSO meta-theorems, such as logical transductions, game-based techniques, and methods inspired by the Weisfeiler--Leman framework.
    Such tools may provide alternative ways to compare temporal graph classes and to reason about their structural and algorithmic properties.
    Recent work has begun to study isomorphisms for temporal graphs under reachability notions~\cite{casteigts_simple_2024,doring_simple_2025}; an intriguing next step is to explore whether transductions or related techniques can be 
    extended,
     used to further compare temporal graph classes,
    in particular with respect to structural parameters and the computational complexity of temporal graph problems.


\bibliography{msoTG, references}
\end{document}

%% file: probenv.tex
\newenvironment{tightcenter}
 {\parskip=0pt\par\nopagebreak\centering}
 {\par\noindent\ignorespacesafterend}

\usepackage{tikz}
\usetikzlibrary{calc}
\usepackage{xargs}
\usepackage{xifthen}
\usepackage{framed}

\usepackage{ctable}

\newlength{\RoundedBoxWidth}
\newsavebox{\GrayRoundedBox}
\newenvironment{GrayBox}[1]%
   {\setlength{\RoundedBoxWidth}{\textwidth-10.5ex}
    \def\boxheading{#1}
    \begin{lrbox}{\GrayRoundedBox}
       \begin{minipage}{\RoundedBoxWidth}%
   }{%
       \end{minipage}
    \end{lrbox}%
    \begin{tightcenter}%
    \begin{tikzpicture}%
       \node(Text)[draw=black!90,fill=white,rounded corners,%
             inner sep=2ex,text width=\RoundedBoxWidth]%
             {\usebox{\GrayRoundedBox}};
        \coordinate(x) at (current bounding box.north west);
        \node [draw=white,rectangle,inner sep=3pt,anchor=north west,fill=white] 
        at ($(x)+(6pt,.75em)$) {\boxheading};
    \end{tikzpicture}
    \end{tightcenter}\vspace{0pt}%
    \ignorespacesafterend
}    

\newenvironment{problem}[2][]{\noindent\ignorespaces%
                                \FrameSep=8pt%
                                \parindent=0pt%
                \vspace*{-.5em}
                \ifthenelse{\isempty{#1}}{%
                  \begin{GrayBox}{\textsc{#2}}%
                }{%
                }
                \newcommand\Prob{{Problem:}}%
                \newcommand\Input{{Input:}}%
                          
                \begin{tabular*}{\textwidth}{@{\hspace{.1em}} >{\itshape} p{1.2cm} p{0.85\textwidth} @{}}%
            }{
                \end{tabular*}%
                \end{GrayBox}%
                \vspace*{-.5em}
                \ignorespacesafterend
            } 

%% file: _intro-mi-2.tex

Many networks of interest, such as
public-transport systems or communication and interaction networks, consist of time-sensitive connections.
A standard abstraction of such networks is a \emph{temporal graph} $\gcal=(V,E,\lambda)$, that is, a static graph $G=(V,E)$ as \textit{footprint} together with an edge labelling $\lambda$ assigning to each edge the time steps at which it is available.
This model strictly generalises static graphs, which are recovered as the special case in which all edges are available at all times.
While temporal graphs are more expressive, this comes at a computational costs:
many problems that are polynomial-time solvable on static graphs are computationally hard on temporal graphs, motivating an extensive line of work in parameterised complexity. 

The core question in parameterised complexity is whether a problem is fixed-parameter tractable (\FPT) when parameterised by some measure that captures relevant properties of the input. 
For static graphs, one of the most studied 
parameters is \emph{treewidth} (\tw).
Two standard approaches to obtain \FPT algorithms parameterised by treewidth are:
    (i) developing a dynamic program (see \cite[Section 7.3]{cygan_parameterized_2015}), or (ii) expressing the problem in a suitable \textit{logic} and applying a logical \emph{meta-theorem}.

\textit{First-order (FO)} and \textit{monadic-second order (MSO)} logic provide formal languages for expressing combinatorial problems on graphs.
Logical meta-theorems unify results by establishing tractability not only for individual problems, but for entire classes of problems on entire classes of graphs, often based on the fragment of logic required.
The most prominent example is Courcelle’s theorem~\cite{courcelle_graph_1990}, which implies that every property definable in MSO logic can be decided in linear time on graphs of bounded treewidth. On sparse graphs, FO logic admits analogous guarantees: Frick and Grohe~\cite{frick_deciding_2001} established tractability of FO model checking on planar graphs, and Grohe, Kreutzer, and Siebertz~\cite{grohe_deciding_2017} extended this to the much broader nowhere dense graph class which includes planar graphs and graphs of bounded treewidth, degree, or expansion.
For a detailed background on logic and algorithmic meta-theorems we refer to the surveys~\cite{grohe_logic_2008,kreutzer_algorithmic_2011}.

    For temporal graphs, applications of logical meta-theorems so far exclusively use MSO formulations to obtain \FPT results under bounded treewidth combined with lifetime (largest time label) or temporal degree (number of temporal edges incident to one vertex) \cite{davot2025parameterised,deligkas_parameterized_2025,enright_deleting_2021,enright_counting_2025,fomin_as_2020,haag_feedback_2022,kutner_temporal_2023,zschoche_complexity_2020}.
    In contrast, the FO meta-theorem route on planar or nowhere dense graphs has not yet been considered for temporal graphs.
    Moreover, it is still common to obtain \FPT results via problem-specific dynamic programs with long technical proofs (see, e.g., \cite{chakraborty_algorithms_2024,herrmann_timeline_2025}), even in settings where a logical approach could yield a simpler and more modular proof.
    Here, we provide a unifying framework for encoding temporal graphs along with example associated formulas. We also extend existing applications of logical meta-theorems to temporal graphs beyond bounded treewidth by enabling applications of FO meta-theorems for nowhere-dense footprints.

    Beyond proof techniques, a central question is which parameters meaningfully capture temporal and structural complexity. 
    Recently, two new parameters have been introduced that are sensitive to the \emph{order} of snapshots and thus capture temporal structure more intrinsically: vertex-interval-membership width (\vim) and tree-interval-membership width (\tim). 
    Enright et al.~\cite{enright_families_2025} study algorithmic tractability under these parameters, and give a procedural meta-algorithm for generating dynamic programs on a restricted class of problems: this is not a meta-theorem in the logical sense. 
    However, since bounded \vim\ and \tim\ bound the treewidth of the footprint, question arises whether it may be possible to prove a logical meta-theorem for these 
    measures.
    We answer this affirmatively: we show that \vim\ and \tim\ admit encodings within our framework that preserve the structure needed for FO/MSO meta-theorems.

%% file: _sec--related-work.tex

The first work in logic that bears resemblance to temporal graph encodings is by Arnborg and Lagergren~\cite{arnborg_easy_1991}, who provided an alternative proof of Courcelle’s MSO meta-theorem and, in particular, considered edge-labelled static graphs. 
This perspective was later adopted by Zschoche et al.~\cite{zschoche_complexity_2020} to obtain an \FPT algorithm for \textsc{Temporal $(s,z)$-Separator} parameterised by \twlifetime, introducing what we call the lifetime encoding.

Independently, Mans and Mathieson~\cite{mans_treewidth_2014} 
studied different models of dynamic graphs (including temporal graphs) of bounded lifetime and proposed an encoding that was claimed to extend the MSO and FO meta-theorems of Courcelle and Frick--Grohe to this setting.  However, their encoding contains a fundamental flaw, which was later identified and shown to be irreparable in the survey by Fomin et al.~\cite{fomin_as_2020}. That survey gives a comprehensive exploration of possible notions of treewidth for temporal graphs (since no singular temporal treewidth definition exists), discussing both the lifetime and degree encodings in detail. 
The degree encoding was introduced by Enright et al.~\cite{enright_deleting_2018}; they used it to show that \textsc{Temporal Reachability Edge Deletion} is in \FPT with respect to \twdegree and solution size combined.


Subsequent work also uses an MSO approach: 
Haag et al.~\cite{haag_feedback_2022} prove \FPT by \twlifetime for \textsc{Temporal Feedback Edge/Connection Set};
Kutner and Larios-Jones~\cite{kutner_temporal_2023} show \FPT by \twlifetime for \textsc{Temporal Reachability Dominating Set};
Enright et al.~\cite{enright_counting_2025} show \FPT by \twdegree for \textsc{Counting Temporal Paths} via a counting variant of MSO;
Deligkas et al.~\cite{deligkas_parameterized_2025} show \FPT by \twlifetime, and \twdegree for \textsc{Open} and \textsc{Closed Temporal Connected Component}; and 
Mertzios et al.~\cite{mertzios_temporal_2025} use Courcelle-type reasoning over a \emph{static} MSO formulation on the footprint for \textsc{Temporal Graph Realization with Fixed Stretch}.


%% file: Formulas/shortcuts/subsec--sc-pos_suc.tex

Two edges can be consecutive in a temporal path only if they share a common endpoint and the first appears earlier in time than the second. In the degree encoding, this notion is directly represented by a binary relation capturing overlapping \textit{temporal} edges whose time labels are ordered; the formula \msovarphi{psucDEG} is provided for uniformity of subformulas.
For the lifetime, VIM, and TIM encoding, an analogous notion can be expressed by first-order or monadic second-order formulas, depending on whether time is explicitly available in the universe (lifetime) or only implicitly encoded via bags (VIM, TIM).

\MSOFormula{<bagVIM}{$\vtimbag_1$ is strictly earlier than $\vtimbag_2$ (\vim
)}
    {the time step of $\vtimbag_1$ is smaller than the time step of $\vtimbag_2$}
    {\vtimbag_1,\vtimbag_2}
    {\vim
    }
    {constant}
    {
        \exists P\subseteq\vtimbagset\colon \Big(
        \vtimbag_1\in P \wedge \vtimbag_2\in P\ \wedge \\&
        \big( \forall\vtimbag\in P\setminus\{\vtimbag_1\}, \exists \vtimbag'\in P\colon \bagbefore(\vtimbag',\vtimbag) \big) \wedge\big( \forall\vtimbag\in P\setminus\{\vtimbag_2\}, \exists \vtimbag'\in P\colon \bagbefore(\vtimbag,\vtimbag') \big)
        \Big)
    }
    {MSO}{folklore}

We cannot use this formula as is for the TIM encoding as the path between two bags at the same time in the decomposition may not be strictly increasing or decreasing. However, if we mandate that the two bags contain a shared vertex, the bags containing them must form a strictly increasing or decreasing path in the TIM decomposition.

\MSOFormula{<bagTIM}{$\vtimbag_1$ is strictly earlier than $\vtimbag_2$ (\tim
)}
    {the time step of $\vtimbag_1$ is smaller than the time step of $\vtimbag_2$, and the bags share a vertex}
    {\vtimbag_1,\vtimbag_2}
    {\tim
    }
    {constant}
    {   \exists v\in V, 
        \exists P\subseteq\vtimbagset\colon \Big(
        \vtimbag_1\in P \wedge \vtimbag_2\in P \wedge \bag(v,\vtimbag_1) \wedge \bag(v,\vtimbag_2)\ \wedge \\&
        \big( \forall\vtimbag\in P\setminus\{\vtimbag_1\}, \exists \vtimbag'\in P\colon \bagbefore(\vtimbag',\vtimbag) \big) \wedge\big( \forall\vtimbag\in P\setminus\{\vtimbag_2\}, \exists \vtimbag'\in P\colon \bagbefore(\vtimbag,\vtimbag') \big)
        \Big)
    }
    {MSO}{folklore}
\msoalias{<bagIM}{<bagVIM}
    
Note that we can modify these formulas to allow nonstrict paths by allowing that either $\pi$ and $\pi'$ are the same bag or $\bagbefore(\pi,\pi')$.

\MSOFormula{psucLIFE}{possible successor (\lifetime)}
    {an ordered pair of temporal edges could be consecutive in a temporal path}
    {\eps_1, \eps_2}
    {\lifetime}{constant}
    {\exists v\in V, \exists t_1,t_2\in L\colon \Big(\pres(\eps_1,t_1) \wedge \pres(\eps_2,t_2) \wedge
    \inc(\eps_1,v) \wedge \inc(\eps_2,v)\ \wedge\\&
    t_1 \timebefore t_2 \Big)}
    {FO}{folklore}
    
\MSOFormula{psucIM}{possible successor (\vim
)}
    {an ordered pair of temporal edges could be consecutive in a temporal path}
    {\eps_1, \eps_2}
    {\vim, \tim}{constant}
    {\exists v\in V, \exists \vtimbag_1, \vtimbag_2\in\vtimbagset\colon \Big( \pres(\eps_1,\vtimbag_1) \wedge \pres(\eps_2,\vtimbag_2) \wedge \inc(\eps_1,v)\wedge \inc(\eps_2,v)\ \wedge \\&
        \msovarphi{<bagIM}(\vtimbag_1,\vtimbag_2)
    \Big)
    }
    {FO$^*$}{folklore}

    Here $\msovarphi{<bagIM}$ is a substitute for either $\msovarphi{<bagVIM}$ or $\msovarphi{<bagTIM}$.
    

\MSOFormula{psucDEG}{possible temporal successor (\tdegree)}
    {an ordered pair of temporal edges could be consecutive in a temporal path}
    {\eps_1, \eps_2}
    {\tdegree}{constant}
    {\psuc(\eps_1,\eps_2)}
    {FO}{folklore}

\msoalias{psuc}{psucIM}
\msoalias{psuc}{psucDEG}
\msoalias{psuc}{psucLIFE}

%% file: Formulas/shortcuts/subsec--sc-tem_adjacency.tex
Temporal adjacency captures the existence of a temporal edge from~$x$ to~$y$ (at some time $t$). Its formulation depends on the available information of the encoding.

\MSOFormula{TadjLIFE}{temporal adjacency (\lifetime)}
    {there exists an edge between $x$ and $y$ at time $t$}
    {x,y,t}
    {\lifetime}{constant}
    {\exists \eps\in \mathcal{E}\colon \big(\inc(\eps,x) \wedge \inc(\eps,y) \wedge \Sedgetime(\eps,t)\big)}
    {FO}{folklore}
\noindent It is immediate to write a formula which is true if and only if $x$ and $y$ are adjacent at any time $\varphi_\text{adjLIFE}(x,y)=\exists t\in T\colon \msovarphi{TadjLIFE}(x,y,t)$.

For the degree encoding, there is no notion of specific times, so we can only express that there exists some temporal edge connecting $x$ to $y$.
\MSOFormula{TadjDEG}{temporal adjacency (\tdegree)}
    {there exists a time $t$ such that there is an edge between $x$ and $y$ at time $t$}
    {x,y}
    {\tdegree}{constant}
    {\exists\eps\in\ecal\colon\big(\inc(\eps,x) \wedge \inc(\eps,y)\big)}
    {FO}{folklore}

\MSOFormula{TadjVIM}{temporal adjacency (\vim)}
    {there exists and edge between $x$ and $y$ at time $t$}
    {x,y, \vtimbag_t}
    {\vim}{constant}
    {\exists \eps\in \mathcal{E}\colon\big(\inc(\eps,x) \wedge \inc(\eps,y)\wedge\Sedgetime(\eps,\vtimbag_t)\big)}
    {FO}{folklore}
\noindent It is immediate to write a formula which is true if and only if $x$ and $y$ are adjacent at any time $\varphi_\text{adjVIM}(x,y)=\exists \vtimbag_t\in \vtimbagset\colon \msovarphi{TadjVIM}(x,y,\vtimbag_t)$.
    
For the \tim encoding, there is no notion of specific times, in particular since we cannot check if two bags belong to the same time. So we can only express that there exists some temporal edge connecting $x$ to $y$.
\MSOFormula{TadjTIM}{temporal adjacency (\tim)}
    {there exists an edge between $x$ and $y$ at some time}
    {x,y}
    {\tim}{constant}
    {\exists \eps\in \mathcal{E}, \vtimbag\in \vtimbagset\colon\big(\inc(\eps,x) \wedge \inc(\eps,y)\wedge\Sedgetime(\eps,\vtimbag)\big)}
    {FO}{folklore}



%% file: Formulas/shortcuts/subsec--sc-card.tex

Cardinality measures the size of a set and allows for comparisons between set sizes.
Often we don't need the below formula (and the dependence on $k$) because an extension of Courcelle's theorem allows us to encode the corresponding optimisation problem without dependence on solution size.


\MSOFormulaD{card}
    {cardinality $k$ \cite{kutner_temporal_2023}}
    {the set $X$ has cardinality at least $k$}
    {k}
    {\lifetime, \tdegree,\vim,\tim}{k}
    {(X)=\exists x_1,\ldots,x_k \in X\colon \bigwedge_{1\leq i<j\leq k}x_i\neq x_j}
    {FO}{}


%% file: Formulas/shortcuts/subsec--sc-degree_req.tex
Degree requirements specify constraints on the number of temporal edges incident to a given vertex within a specified set of temporal edges.
Such predicates are frequently used to enforce local structural properties of edge sets, for instance when characterizing temporal paths.

\MSOFormula{deg=0}{degree 0}
    {vertex $x$ has no incident edge in the set  $P$ of temporal edges}
    {x,P}
    {\lifetime, \tdegree, \vim, \tim}{constant}
    {\forall \eps\in P \colon \neg\inc(\eps,x)}
    {FO}{folklore}

\MSOFormulaD{deg}{degree $k$}
    {vertex $x$ has exactly $k$ incident edges in the set $P$ of temporal edges}
    {k}
    {\lifetime, \tdegree, \vim, \tim}{$k$}
    {(x,P) = 
        \exists \eps_1, \dots,\eps_k\in P, \forall \eps'\in P \colon \Big(\bigwedge_{1\leq i<j\leq k}\eps_i\neq \eps_j \, \wedge\\& \bigwedge_{i\in[k]}\inc(\eps_i,x) \wedge \big(\inc(\eps',x)\implies \bigvee_{i\in[k]}\eps' =\eps_i\big)\Big)
    }
    {FO}{folklore}
\msoalias{deg=1}{deg}
\msoalias{deg=2}{deg}

%% file: Formulas/shortcuts/subsec--sc-interval.tex
Interval demands constrain the temporal distance between time steps or edges.

\MSOFormulaD{LintLIFE}{two time steps at most $\ell$ apart (\lifetime)}
    {time steps $t_1$ and $t_2$ are at most $\ell$ apart}
    {\ell}{\lifetime}
    {$\ell$}
    {(t_a,t_b) = \exists t_1,\ldots,t_\ell \in L\colon \Big( t_1=t_a \wedge t_\ell=t_b\ \wedge\\&   \forall t\in L\colon \big( t_1\timebefore t \wedge t\timebefore t_\ell)\implies \bigvee_{1\leq i\leq \ell} t_i=t
    \big)\Big)}
    {FO}{folklore}

\MSOFormulaD{TEintLIFE}{two temporal edges at most $\ell$ time steps apart (\lifetime)}
    {temporal edges $\eps_a$ and $\eps_b$ are at most $\ell$ time steps apart}
    {\ell}{\lifetime}
    {constant$^*$}
    {(\eps_a,\eps_b) = 
    \forall t_a,t_b\in L\colon \big( (\Sedgetime(\eps_a,t_a) \wedge \Sedgetime(\eps_b,t_b) )\implies {\msovarphi{LintLIFE}}_\ell(t_a,t_b) \big)}
    {FO}{folklore}


\MSOFormulaD{BintIM}{two bags at most $\ell$ time steps apart (\vim
)}
    {bags $\vtimbag_a$ and $\vtimbag_b$ are at most $\ell$ time steps apart}
    {\ell}
    {\vim
    }
    {$\ell$}
    {(\vtimbag_a,\vtimbag_b) = 
        \exists \vtimbag_0,\ldots,\vtimbag_{\ell} \in \vtimbagset \colon \Big(
        \vtimbag_0=\vtimbag_a \wedge \vtimbag_\ell=\vtimbag_b \wedge \\&
        \bigwedge_{0\leq i\leq \ell-1} (\bagbefore(\vtimbag_{i},\vtimbag_{i+1})\vee \vtimbag_{i} = \vtimbag_{i+1}))
        \Big)
    }
    {FO}{novel}

\MSOFormulaD{TEintIM}{two temporal edges at most $\ell$ apart (\vim
)}
    {temporal edges $\eps_a$ and $\eps_b$ are at most $\ell$ time steps apart}
    {\ell}
    {\vim
    }
    {constant$^*$}
    {(\eps_a,\eps_b) = 
        \forall \vtimbag_a,\vtimbag_b\in\vtimbagset\colon \big( (\Sedgetime(\eps_a,\vtimbag_a) \wedge \Sedgetime(\eps_b,\vtimbag_b)) \implies {\msovarphi{BintIM}}_{\ell}(\vtimbag_a,\vtimbag_b)\big)
    }
    {FO}{folklore}

\msoalias{TEint}{TEintIM}
\msoalias{TEint}{TEintLIFE}


\MSOFormulaD{TEintTIM}{two temporal edges share an endpoint and are within a given interval of one another (\tim)}
    {$\eps_1$, $\eps_2$ occur within $\ell$ of each other}
    {\ell}
    {\tim
    }
    {$\ell$}
    {(\eps_1,\eps_2) = 
        \exists x\in V, \exists \vtimbag_1,\ldots,\vtimbag_l \in \vtimbagset,  \forall \vtimbag\in \vtimbagset:
    \\&
    \Big(\ \inc(\eps_1,x)\wedge\inc(\eps_2,x)\wedge \bagbefore(\vtimbag_1,\vtimbag) \wedge \bagbefore(\vtimbag,\vtimbag_l)
      \implies \bigvee_{1\leq i\leq \ell} \vtimbag_i=\vtimbag \ 
     \wedge \\& \big(\Sedgetime(e_1,\vtimbag_1) \wedge \Sedgetime(e_2,\vtimbag_l)\big)
    \Big)
    }
    {FO}{folklore}
\msoalias{TEint}{TEintTIM}

%% file: Formulas/shortcuts/subsec--sc-static_cc.tex
Static connected components capture connectivity properties within individual snapshots or within the static footprint of a temporal graph.
In contrast to temporal connected components, static components are defined by ignoring temporal order and considering only the presence of edges at a fixed time or across all times.

\paragraph{Static connected components over vertices in a snapshot}
Given a snapshot $G_t$, a vertex set induces a static connected component if it is connected via edges present at time~$t$ and is maximal with respect to this property.
The following formulas express static connected components both as vertex sets and as a binary relation between vertices.

\MSOFormula{XccLIFE}{set of vertices induces static connected component at a given time (\lifetime) \cite{haag_feedback_2022,kutner_temporal_2023}}
    {$X$ induces a static connected component in $G_t$}
    {X,t}
    {\lifetime}{constant}
    { \forall Y \subset X \colon \Big( Y\neq\emptyset \implies \big(\exists y\in Y,\exists x\in X\setminus Y, \exists e\in E \colon \\&( \inc(e,x) \wedge \inc(e,y)\, \wedge \pres(e,t) ) \big) \Big)
    }
    {MSO}{\cite{kutner_temporal_2023}}

\MSOFormula{ccLIFE}{$u$ and $v$ are in the same static connected component at a given time (\lifetime) \cite{haag_feedback_2022,kutner_temporal_2023}}
    {$u,v$ are in the same static connected component in $G_t$}
    {u,v,t}
    {\lifetime}{constant}
    { \exists X\subset V \colon \big( u\in X\wedge v\in X \wedge \msovarphi{XccLIFE}(X,t) \big)
    }
    {MSO}{\cite{kutner_temporal_2023}}


\MSOFormula{XccVIM}{set of vertices induces static connected component at a given time (\vim)}
    {$X$ induces a static connected component in $G_t$}
    {X,\vtimbag_t}
    {\vim}{constant}
    { \forall Y \subset X\colon \Big( Y\neq\emptyset \implies \big( \exists y\in Y,\exists x\in X\setminus Y, \exists e\in E \colon \\
    &( \inc(e,x)\wedge\inc(e,y)\wedge \pres(e,\vtimbag_t) ) \big) \Big)
    }
    {MSO}{novel}
    
\MSOFormula{ccVIM}{$u$ and $v$ are in the same connected component  at a given time (\vim)}
    {$u,v$ are in the same static connected component in $G_t$}
    {u,v,\vtimbag_t}
    {\vim}{constant}
    { \exists X\subset V \colon \big( u\in X\wedge v\in X\wedge \msovarphi{XccVIM}(X,\vtimbag_t) \big)
    }
    {MSO}{novel}

\MSOFormula{XccTIM}{there exists a snapshot such that set of vertices induces static connected component in that snapshot (\tim)}
    {there exists a time $t$ such that $X$ induces a static connected component in $G_t$}
    {X}
    {\vim}{constant}
    { \exists \vtimbag\in \vtimbagset, \forall Y \subset X \colon \Big(Y\neq\emptyset \implies \exists y\in Y,\exists x\in X\setminus Y, \exists \eps\in \mathcal{E}\\
    &\big( \inc(\eps,x)\wedge\inc(\eps,y)\wedge \pres(\eps,\vtimbag)\big)
    }
    {MSO}{novel}
    
\MSOFormula{ccTIM}{there exists a snapshot such that $u$ and $v$ are in the same connected component of the snapshot (\tim)}
    {there exists a time $t$ such that $u,v$ are in the same connected component in $G_t$}
    {u,v}
    {\tim}{constant}
    { \exists X\subset V \colon \big( u\in X\wedge v\in X\wedge \msovarphi{XccTIM}(X) \big)
    }
    {MSO}{novel}

\paragraph{Static connected components induced by temporal edges in the footprint}

Every vertex set $X$ that contains the root $r$ and is closed under stepping along edges of~$P_\ecal$ must also contain $x$. This is the standard MSO characterization of reachability.
\MSOFormula{FPreach}{$r$ and $x$ are in the same static connected component in the footprint induced by $P_\ecal$~\cite{haag_feedback_2022}}
    {$x$ is reachable from $r$ in the footprint subgraph induced by $P_\ecal$}
    {r,x,P_\ecal}
    {\lifetime, \tdegree, \vim, \tim}{constant}
    {
        \forall X\subseteq V\colon \\&
        \Big(r\in X \wedge\ 
        \forall v\in X, \forall \eps\in P_\ecal\colon \big( \inc(\eps,v)\implies \exists w\in X\setminus\{v\}\colon \inc(\eps,w) \big) \Big)
        \implies x\in X
        \Big)
    }
    {MSO}{folklore}
    
Given a set of temporal edges, a static connected component is induced if the vertices incident to these edges are connected in the footprint graph obtained by ignoring temporal information.

\MSOFormula{TEcc}{set of temporal edges induces a static connected component in the footprint~\cite{haag_feedback_2022}}
    {the temporal edge set $P_\ecal\subseteq\ecal$ induces a connected subgraph in the static footprint}
    {P_\ecal}
    {\lifetime, \tdegree, \vim, \tim}{constant}
    {
        \exists r\in V, \exists \eps\in P_\ecal\colon \Big( \inc(\eps,r)\ \wedge\ 
            \forall x\in V\colon
                \big(\exists \eps\in P_\ecal\colon \inc(\eps,x)\big)
                \implies \msovarphi{FPreach}(r,x, P_\ecal)
        \Big)
    }
    {MSO}{folklore}

%% file: Formulas/shortcuts/subsec--sc-paths_walks.tex
Reachability is a fundamentally global graph property. While first-order logic (FO) can express the existence of a path of fixed length $k$ between two vertices, the length of the corresponding FO formula grows with $k$. It is a classical result that FO is not able to express reachability, \ie the existence of a path of arbitrary length between two vertices; see for example \cite[Chapter 3.6]{libkin_elements_2004}.

In contrast, monadic second-order logic (MSO) can express reachability by quantifying over vertex or edge sets and enforcing the connectivity properties of a path in a single formula of constant size. This strict separation between FO and MSO with respect to reachability underlies the use of MSO as a logical framework for path-based properties in graph algorithms and meta-theorems. 

For the four temporal encodings the reachability can be expressed using MSO, shown in \Cref{subsubsec:MSO-reachability}. 
For lifetime encoding and non-strict temporal reachability the FO impossibility directly translates since otherwise one could express static reachability of a static graph $G$ via a temporal graph $(G,{e,1 \colon e\in E})$ and then using the temporal formula. For strict temporal reachability, this is not directly true and for lifetime encoding one can even express in FO. 

\subsubsection{MSO formulas for temporal reachability 
} \label{subsubsec:MSO-reachability}
Temporal reachability is expressed by quantifying a set $P$ of temporal edges and enforcing local degree conditions: the designated endpoints $u$ and $v$ are required to have degree~$1$ in $P$, every other vertex incident at an edge in $P$ must have degree~$2$ in $P$, and all remaining vertices have degree~$0$. These conditions guarantee that $P$ forms a simple $u$--$v$ path in the static footprint of the temporal graph. This static path is lifted to a temporal one, by an additional condition that the temporal edges in $P$ can be traversed in temporal order using the possible successor relation/formula. 

Note that the following formulas work for every encoding by using the appropriate helper formulas from \Cref{subsec:sc-degree} (degree requirements) and \Cref{subsec:sc-possuc} (possible successor).
\MSOFormula{PEpath}{temporal edges $P_\ecal$ form temporal path from $u$ to $v$}
    {set of temporal edges $P_\ecal\subseteq \ecal$ forms a temporal path from $u$ to $v$}
    {P_\ecal,u,v}
    {\lifetime, \tdegree, \vim, \tim}{constant}
    {
        \exists \varepsilon_s,\varepsilon_z\in P_\ecal\colon \Big(
        \inc(\varepsilon_s,u) \wedge \inc(\varepsilon_z,v) \wedge 
        \msovarphi{deg=1}(u,P_\ecal) \wedge \msovarphi{deg=1}(v,P_\ecal) \ \wedge \\&  
        \forall \varepsilon \in P_\ecal\setminus\{\varepsilon_s,\varepsilon_z\}, \forall w\in V\colon \big( \inc(\varepsilon,w) \implies \msovarphi{deg=2}(w,P_\ecal) \big) \,\wedge \msovarphi{TEcc}(P_\ecal)\\& 
        \forall \varepsilon \in P_\ecal,\, \exists \varepsilon_j\in P_\ecal \colon  \msovarphi{psuc}(\varepsilon_i,\varepsilon_j)
         \Big)
     }
    {FO}{folklore}

\MSOFormula{PVpath}{vertices $P_V$ form temporal path from $u$ to $v$}
    {set of vertices $P_V\subseteq V$ forms a temporal path from $u$ to $v$}
    {P_V,u,v}
    {\lifetime, \tdegree, \vim, \tim}{constant}
    {
        \exists P_\ecal\subseteq\ecal\colon \Big(\msovarphi{PEpath}(u,v,P_\ecal) \wedge\\& \forall \varepsilon\in P_\ecal, \exists v_1,v_2\in P_V \colon \big( v_1\neq v_2 \wedge \inc(\varepsilon,v_1) \wedge \inc(\eps,v_2) \big)
        \Big)
     }
    {MSO}{folklore}
    
\MSOFormula{path}{existence of temporal path from $u$ to $v$}
    {there exists a temporal path from $u$ to $v$}
    {u,v}
    {\lifetime, \tdegree, \vim, \tim}{constant}
    {
        \exists P_V\subseteq V \colon \msovarphi{PVpath}(P_V,u,v)}
    {MSO}{}

\paragraph{Restricted temporal paths}
Restricted variants of temporal reachability are obtained by additionally constraining the allowed vertices or edges to be used in the path.
\MSOFormula{pathV}{existence of temporal path restricted to vertices $X$}
    {there exists a temporal path from $u$ to $v$ using only vertices from $X$}
    {u,v,X}
    {\lifetime, \tdegree, \vim, \tim}{constant}
    {\exists P_V\subseteq X \colon  \msovarphi{PVpath}(u,v,P_V)}
    {MSO}{folklore}

\MSOFormula{pathSE}{existence of temporal path restricted to static edges $X_E$}
    {there exists a temporal path from $u$ to $v$ using only static edges from $X_E$}
    {u,v,X_E}
    {\lifetime, \tdegree, \vim, \tim}{constant}
    {\exists P_E\subseteq \mathcal{E}, \forall \eps\in P_E, \exists \eps'\in X_E \colon \msovarphi{sharededge}(\eps,\eps') \wedge \msovarphi{PEpath}(u,v,P_E)}
    {MSO}{folklore}

    Note that here we use a set of temporal edges as a stand-in for static edges by the shortcuts \msovarphi{edgeset} and \msovarphi{sharededge}.

\MSOFormula{pathTE}{existence of temporal path restricted to temporal edges $X_{\ecal}$}
    {there exists a temporal path from $u$ to $v$ using only temporal edges from $X_{\ecal}$}
    {u,v,X_{\ecal}}
    {\lifetime, \tdegree, \vim, \tim}{constant}
    {\exists P_{\ecal}\subseteq X_{\ecal} \colon \msovarphi{PEpath}(u,v,P_{\ecal})}
    {MSO}{folklore}

\paragraph{Special case: FO reachability for \lifetime in strict \tgs 
}
    Every temporal path in a strict \tg can use at most \lifetime\ edges as it can take at most one edge per time step. In the lifetime encoding this permits an explicit “unrolling” of paths and walks in a formula of length \lifetime which is possible in  FO.
    \MSOFormula{VwalkLIFE}{temporal walk (\lifetime)}
        {multiset of vertices $x_1,\ldots,x_{\Lambda}$ forms a temporal walk}
        {x_1,\ldots,x_{\lifetime}}
        {\lifetime}{$\Lambda$}
        {
            \bigwedge_{t=0}^{\lifetime-1}
            \big( x_t=x_{t+1} \vee \msovarphi{TadjLIFE}(x_t,x_{t+1},t)\big)
            }
        {FO}{novel}
    
    
    \MSOFormula{SVpathLIFE}{strict temporal path vertex set (\lifetime)}
        {set of vertices $x_1,\ldots,x_{\Lambda}$ forms a strict temporal path}
        {x_1,\ldots,x_{\lifetime}}
        {\lifetime}{$\Lambda$}
        {
            \bigwedge_{t=0}^{\lifetime-1}
            \big( x_t=x_{t+1} \vee (\msovarphi{TadjLIFE}(x_t,x_{t+1},t)\wedge \bigwedge_{i=0}^{t} x_i\neq x_{t+1})\big)
        }
        {FO}{folklore}
    
    \MSOFormula{SpathLIFE}{existence of strict temporal path between two vertices (\lifetime) \cite{kutner_temporal_2023}}
        {there exists a strict temporal path from $u$ to $v$}
        {u,v}
        {\lifetime}{$\Lambda$}
        {\exists x_0,\dots,x_{\lifetime}\in V \colon
            \big(x_0=u \wedge x_{\lifetime}=v \wedge
            \msovarphi{SVpathLIFE}(x_1,\ldots,x_{\Lambda})\big)}
        {FO}{folklore}
    

%% file: Formulas/ReachProblems/subsec--RP-disjoint.tex
Disjointness constraints strengthen temporal reachability by requiring multiple vertex- or edge-disjoint temporal paths between two vertices.


\MSOFormula{Edisjointpath}{existence edge-disjoint path}
    {there exist two temporal-edge-disjoint temporal paths from $u$ to $v$}
    {u,v}
    {\lifetime, \tdegree, \vim, \tim}{constant}
    {
        \exists P_1,P_2\subseteq \ecal \colon \Big(     \msovarphi{PEpath}(u,v,P_1) \wedge \msovarphi{PEpath}(u,v,P_2)\, \wedge \\&
        \forall \eps\in P_1\colon \eps\notin P_2 \wedge \forall \eps\in P_2\colon \eps\notin P_1 \Big)}
    {MSO}{novel}

\MSOFormula{Vdisjointpath}{existence vertex-disjoint path}
    {there exist two vertex-disjoint temporal paths from $u$ to $v$}
    {u,v}
    {\lifetime, \tdegree, \vim, \tim}{constant}
    {
        \exists P_1,P_2\subseteq V \colon \Big( \msovarphi{PVpath}(u,v,P_1) \wedge \msovarphi{PVpath}(u,v,P_2)\, \wedge \\&
        \forall v\in P_1\colon v\notin P_2 \wedge \forall v\in P_2\colon v\notin P_1 \Big)
        }
    {MSO}{}

%% file: Formulas/ReachProblems/subsec--RP-restless.tex
Restless temporal paths additionally bound the waiting time between consecutive temporal edges by a parameter~$\delta$. This can be captured by adjusting the possible successor formulas from \Cref{subsec:sc-possuc} by a $\delta$-guarded variant that only relates two temporal edges if they are consecutive along the path and the time gap between them is at most~$\delta$.

\MSOFormulaD{deltapsucLIFE}{possible $\delta$-successor (\lifetime)}
    {an ordered pair of temporal edges could be consecutive in a temporal path with waiting time at most $\delta$}
    {\delta}
    {\lifetime}{constant$^*$}
    {(\eps_1,\eps_2) = \msovarphi{psuc}(\eps_1,\eps_2)\ \wedge\ {\msovarphi{TEintLIFE}}_\delta(\eps_1,\eps_2)\ \wedge\ \neq {\msovarphi{TEintLIFE}}_\delta(\eps_2,\eps_1)
    }
    {FO}{folklore}
    
\MSOFormulaD{deltapsucIM}{possible $\delta$-successor (\vim, \tim)}
    {an ordered pair of temporal edges could be consecutive in a temporal path with waiting time at most $\delta$}
    {\delta}
    {\vim, \tim}{constant$^*$}
    {(\eps_1,\eps_2) = \msovarphi{psuc}(\eps_1,\eps_2)\ \wedge\ {\msovarphi{TEintIM}}_\delta(\eps_1,\eps_2)\ \wedge\ \neq {\msovarphi{TEintIM}}_\delta(\eps_2,\eps_1)
    }
    {FO$^*$}{folklore}

\msoalias{deltapsuc}{deltapsucLIFE}
\msoalias{deltapsuc}{deltapsucIM}

\MSOFormulaD{restlesspath}{restless temporal path}
    {there exists a temporal path from $u$ to $v$ whose waiting time between consecutive edges is bounded by $\delta$}
    {\delta}
    {\lifetime, \tdegree, \vim, \tim}{constant$^*$}
    {(u,v) = 
        \exists P_\ecal\subseteq \ecal\ \exists \varepsilon_s,\varepsilon_f\in P_\ecal\colon \Big(
            \inc(\varepsilon_s,u)\ \wedge\ \inc(\varepsilon_f,v)\ \wedge
            \msovarphi{deg=1}(u,P_\ecal)\ \wedge\\& \msovarphi{deg=1}(v,P_\ecal)\ \wedge\ 
            \forall \varepsilon\in P_\ecal\setminus\{\varepsilon_s,\varepsilon_f\}, \forall w\in V\colon
                \big(\inc(\varepsilon,w)\rightarrow \msovarphi{deg=2}(w,P_\ecal)\big)\ \wedge\\&
            \forall \varepsilon\in P_\ecal\setminus\{\varepsilon_f\}\ \exists \varepsilon'\in P_\ecal\colon
                \msovarphi{deltapsuc}_\delta(\varepsilon,\varepsilon')
        \Big)
    }
    {MSO}{folklore}

%% file: Formulas/ReachProblems/subsec--RP-TCC.tex
Temporal connected components generalize the notion of static connected components to temporal graphs by requiring mutual temporal reachability between vertices. 

\textsc{Open Temporal Connected Components} by~\cite{bhadra_complexity_2003} require that for every pair of vertices $u,v$ in the component there exists a temporal path from $u$ to $v$ and vice versa, without restricting paths to remain inside the component.
\MSOFormula{opentcc}{open temporal connected component \cite{deligkas_parameterized_2025}}
    {between every pair of vertices $u,v$ in $X$ there exists a temporal path from $u$ to $v$ and from $v$ to $u$, and $X$ is maximal}
    {X}
    {\lifetime, \tdegree, \vim, \tim}{constant}
    {
        \forall u,v\in X \colon \msovarphi{path}(u,v) \; \wedge\\
        & \neg\exists y\in V\colon \Big( y\notin X \wedge \forall u\in X \colon\big( \msovarphi{path}(u,y) \wedge \msovarphi{path}(y,u) \big)\Big)
    }
    {FO$^*$}{\cite{deligkas_parameterized_2025}}

\textsc{Closed Temporal Connected Components} by~\cite{bhadra_complexity_2003} strengthen the requirement of open components by additionally demanding that all witnessing temporal paths use only vertices from the component itself. As a consequence, closed temporal connected components are not hereditary: a subset of a closed component does not necessarily form a closed component. This leads to a different formulation of maximality.
\MSOFormula{closedtcc}{closed temporal connected component \cite{deligkas_parameterized_2025}}
    {between every pair of vertices $u,v$ in $X$ there exists a temporal path from $u$ to $v$ and from $v$ to $u$ using only vertices in $X$, and $X$ is maximal}
    {X}
    {\lifetime, \tdegree, \vim, \tim}{constant}
    {
        \forall u,v\in X \colon \msovarphi{pathV}(u,v,X) \wedge \neg\exists Y\subseteq V \colon \Big( X\subseteq Y\; \wedge\\&
        \forall u,v\in Y \big( \msovarphi{pathV}(u,v,Y) \wedge \msovarphi{pathV}(v,u,Y) \big)\Big)
    }
    {MSO}{\cite{deligkas_parameterized_2025}}
\noindent Removing the maximality conditions from  \msovarphi{opentcc} and \msovarphi{closedtcc} yields formulas expressing open and closed temporally connected sets, respectively.

\textsc{Unilateral Open} and \textsc{Unilateral Closed Temporal Connected Components} by~\cite{costa_computing_2023} only require reachability in one direction for each pair of vertices.
\MSOFormula{Uopentcc}{unilateral open temporal connected component}
    {between every pair of vertices $u,v$ in $X$ exists a temporal path from $u$ to $v$ or from $v$ to $u$ using vertices in $V$, and $X$ is maximal}
    {X}
    {\lifetime, \tdegree, \vim, \tim}{constant}
    {
        \forall u,v\in X\colon \big(\msovarphi{path}(u,v) \vee \msovarphi{path}(v,u) \big) \; \wedge\\&
        \neg\exists y\in V \colon \Big( y\notin X \wedge \forall u\in X \colon \big( \msovarphi{path}(u,y) \vee \msovarphi{path}(y,u) \big)\Big)
    }
    {FO$^*$}{novel}

\MSOFormula{Uclosedtcc}{unilateral closed temporal connected component}
    {between every pair of vertices $u,v$ in $X$ exists a temporal path from $u$ to $v$ or from $v$ to $u$ using vertices in $X$, and $X$ is maximal.}
    {X}
    {\lifetime, \tdegree, \vim, \tim}{constant}
    {
        \forall u,v\in X\colon \big(\msovarphi{pathV}(u,v,X) \vee \msovarphi{pathV}(v,u,X) \big) \; \wedge\\&
        \neg\exists Y\subseteq V \colon \Big( X\subseteq Y \wedge \forall u,v\in Y \colon\big( \msovarphi{pathV}(u,v,Y) \vee \msovarphi{pathV}(v,u,Y) \big)\Big)
    }
    {MSO}{novel}

%% file: Formulas/ReachProblems/subsec--RP-separators.tex
\textsc{Temporal Separators} and \textsc{Temporal Cuts} (see \cite{berman_vulnerability_1996,kempe_connectivity_2002,fluschnik_temporal_2020}) generalize static separation concepts to the temporal setting by removing elements in order to remove temporal reachability.
Depending on whether vertices, static edges, or temporal edges are removed, one obtains different notions of temporal separators.

\textsc{Temporal $(s,z)$-Separation} by~\cite{berman_vulnerability_1996,kempe_connectivity_2002} separates $s$ from $z$ by removing vertices from the graph. \cite{zschoche_complexity_2020} provided an FO formula for \textsc{Temporal $(s,z)$-Separation} on strict temporal graphs in the lifetime encoding.
\MSOFormula{Vsep}{temporal vertex separator \cite{zschoche_complexity_2020}}
    {removing the vertices in $X$ removes all temporal paths from $s$ to $z$}
    {X,s,z}
    {\lifetime, \tdegree, \vim, \tim}{constant}
    {
        \neg \msovarphi{pathV}(s,t,V\setminus X)
    }
    {FO$^*$}{\cite{zschoche_complexity_2020},novel}

\MSOFormula{SEsep}{temporal static-edge separator}
    {removing the static edges in $E'$ removes all temporal paths from $s$ to $z$}
    {E',s,z}
    {\lifetime, \tdegree, \vim, \tim}{constant}
    {
        \neg \msovarphi{pathSE}(s,t,E\setminus E')
    }
    {FO$^*$}{novel}

A temporal separator which separates $s$ from $z$ by removing temporal edges from the graph was studied by~\cite{berman_vulnerability_1996} in the context of computing temporal-edge-disjoint paths and (dis)proving temporal analogues of Menger's Theorem.
\MSOFormula{TEsep}{temporal temporal-edge separator}
    {removing the temporal edges in $\mathcal{E}'$ removes all temporal paths from $s$ to $z$}
    {\mathcal{E}',s,z}
    {\lifetime, \tdegree, \vim, \tim}{constant}
    {
        \neg \msovarphi{pathTE}(s,t,\mathcal{E}\setminus\mathcal{E}')
    }
    {FO$^*$}{novel}

%% file: Formulas/ReachProblems/subsec--RP-Spanner.tex
\textsc{Temporal Spanners} (see \cite{casteigts_temporal_2021,kempe_connectivity_2002}) are sparse subgraphs of a temporal graph that preserve temporal reachability between all pairs of vertices. They are the temporal analogue of spanning trees in static graphs.
\MSOFormula{tspanner}{temporal spanner}
    {$X$ preserves temporal reachability between all pairs of vertices}
    {X}
    {\lifetime, \tdegree, \vim, \tim}{constant}
    {
        \forall u,v\in V \colon
        \big(
            \msovarphi{path}(u,v)
            \rightarrow
            \msovarphi{pathV}(u,v,X)
        \big)
    }
    {FO$^*$}{novel}

%% file: Formulas/ReachProblems/subsec--RP-Exploration.tex
\textsc{Temporal Exploration} (see \cite{aceto_how_2008,erlebach_temporal_2021}) is defined by the existence of a temporal walk that traverses all vertices or all static edges of the temporal graph.
\MSOFormula{SEexplorLIFE}{static edge exploration (\lifetime)}
    {there exists a temporal walk traversing all static edges}
    {}
    {\lifetime}
    {\lifetime}
    {\exists x_0,\ldots x_{\lifetime}\in V\colon \Big( \ 
    \msovarphi{VwalkLIFE}(x_0,\ldots,x_{\lifetime})\ \wedge\\&
    \forall e\in E \colon \bigvee_{0\leq i<\lifetime} \big( x_i\neq x_{i+1} \wedge \inc(e,x_i) \wedge \inc(e,x_{i+1})
    \big)\ \Big)
    }
    {FO$^*$}{novel}

\MSOFormula{VexplorLIFE}{vertex exploration (\lifetime)}
    {there exists a walk traversing all vertices}
    {}
    {\lifetime}
    {\lifetime}
    {\exists x_0,\ldots x_{\lifetime}\in V\colon \Big(\ 
    \msovarphi{VwalkLIFE}(x_0,\ldots,x_{\lifetime})\ \wedge \\&
    \forall v\in V \colon \bigvee_{0\leq i\leq\lifetime} v=x_i
    \ \Big)
    }
    {FO$^*$}{novel}




%% file: Formulas/subsec--Clique-IS.tex
\textsc{Temporal Clique} was defined by \cite{viard_computing_2016}.
The problem asks for a vertex set $X_V$ such that, for every pair of distinct vertices $u,v\in X_V$ and for every time window of length $\Delta$, there is a temporal edge between $u$ and $v$ within that window.
\MSOFormulaD{cliqueLIFE}{$\Delta$-temporal clique (\lifetime)}
    {there exists a set $X_V$ of vertices which is a $\Delta$ temporal clique.}
    {\Delta}
    {\lifetime}
    {constant$^*$}
    {() = \exists X_V\subseteq V\colon\Big(\ 
    \forall u \in X_V, \forall v \in X_V\setminus\{u\} \colon
    \\\Big(&
        \forall t \in L, \exists \eps\in\ecal, \exists t' \in L \colon
        \big(
            {\msovarphi{LintLIFE}}_{\Delta}(t,t') \wedge \inc(\eps,u) \wedge \inc(\eps,v) \wedge \pres(\eps,t')
        \big)
    \Big)\ \Big)
    }
    {MSO}
    {novel}

    \MSOFormulaD{cliqueVIM}{$\Delta$-temporal clique (\vim)}
    {there exists a set $X_V$ of vertices which is a $\Delta$ temporal clique.}
    {\Delta}
    {\vim}
    {constant$^*$}
    {() = \exists X_V\subseteq V\colon\Big(\ 
    \forall u \in X_V, \forall v \in X_V\setminus\{u\} \colon
    \\\Big(&
        \forall \vtimbag \in \vtimbagset, \exists \eps\in\ecal, \exists \vtimbag' \in \vtimbagset \colon
        \big(
            {\msovarphi{BintIM}}_{\Delta}(\vtimbag,\vtimbag') \wedge \inc(\eps,u) \wedge \inc(\eps,v) \wedge \pres(\eps,\vtimbag')
        \big)
    \Big)\ \Big)
    }
    {MSO}
    {novel}

The corresponding temporal analogue of \textsc{Independent Set} was defined by \cite{hermelin_temporal_2022}, where for every pair of distinct vertices $u,v\in X_V$ and for every time window of length $\Delta$, there is no temporal edge between $u$ and $v$ within that window.

\MSOFormulaD{independentL}{$\Delta$-temporal independent set (\lifetime)}
    {there exists a set $X_V$ of vertices is a $\Delta$ temporal independent set.}
    {\Delta}
    {\lifetime}
    {constant$^*$}
    {() = \exists X_V\subseteq V\colon \Big(\ 
    \forall u \in X_V, \forall v \in X_V\setminus\{u\} \colon
    \\\Big(&
        \forall t, \in L, \exists \eps\in\ecal, \forall t'\in L \colon
        \big(
            {\msovarphi{LintLIFE}}_{\Delta}(t,t') \wedge
            \inc(\eps,u) \wedge \inc(\eps,v)
            \implies \neg \pres(\eps,t')
        \big)
    \Big)\ \Big)
    }
    {MSO}
    {novel}

    \MSOFormulaD{independentVIM}{$\Delta$-temporal independent set (\vim)}
    {there exists a set $X_V$ of vertices is a $\Delta$ temporal independent set.}
    {\Delta}
    {\vim}
    {constant$^*$}
    {() = \exists X_V\subseteq V\colon \Big(\ 
    \forall u \in X_V, \forall v \in X_V\setminus\{u\} \colon
    \\\Big(&
        \forall \vtimbag, \in \vtimbagset, \exists \eps\in\ecal, \forall \vtimbag'\in \vtimbagset \colon
        \big(
            {\msovarphi{BintIM}}_{\Delta}(\vtimbag,\vtimbag') \wedge
            \inc(\eps,u) \wedge \inc(\eps,v)
            \implies \neg \pres(\eps,\vtimbag')
        \big)
    \Big)\ \Big)
    }
    {MSO}
    {novel}

%% file: Formulas/subsec--FES_EdgeCover.tex
\textsc{Temporal Feedback Edge Set} by~\cite[Definition 3]{haag_feedback_2022} is a temporal analogue of the classical \textsc{Feedback Edge Set} problem. Note that the following formulae hold becuase the temporal path formula nested within them requires that the set of temporal edges forming the path is non-empty, thus it is a non-trivial path.
\MSOFormula{tfTEs}{temporal feedback temporal edge set \cite[Thm. 23]{haag_feedback_2022}}
    {temporal edge set $X_\ecal\subseteq \ecal$ such that after removing $X$, there exists no temporal cycle: temporal walk over $v_1,\dots,v_\ell$ with $v_1,\dots,v_{\ell-1}$ pairwise distinct and $v_1=v_\ell$} 
    {X_\ecal}
    {\lifetime, \tdegree, \vim, \tim}
    {constant}
    {
    \forall v \in V\colon \neg\msovarphi{pathTE}(v,v,\ecal\setminus X_\ecal)
    }
    {FO$^\star$}
    {\cite[Thm. 23]{haag_feedback_2022}}

\textsc{Temporal Feedback Connection Set} is the static edge variant of \textsc{Temporal Feedback Edge Set}.
\MSOFormula{tfSEs}{temporal feedback static edge set \cite[Thm. 23]{haag_feedback_2022}}
    {static edge set $X_E\subseteq E$ such that after removing $X$, there exists no temporal cycle: temporal walk over $v_1,\dots,v_{\ell}$ with $v_1,\dots,v_{\ell-1}$ pairwise distinct and $v_1=v_\ell$}
    {X_E}
    {\lifetime, \tdegree, \vim, \tim}
    {constant}{
        \forall v \in V \colon \neg \msovarphi{pathSE}(v,v,\ecal \setminus X_E)
    }
    {FO$^\star$}
    {\cite[Thm. 23]{haag_feedback_2022}}

%% file: Formulas/subsec--Coloring.tex
\textsc{Temporal $k$-Colouring} by~\cite{mertzios_sliding_2019} assigns to every vertex and every time step one of $k$ colours such that, at each time step, no temporal edge connects two vertices of the same colour. In contrast to static colouring, colours may change over time.
\MSOFormulaD{colour}{temporal $k$-colouring (\lifetime)}
    {there exists a temporal $k$-colouring of the input temporal graph}
    {k}
    {\lifetime}
    {$k\cdot\lifetime$}
    {() = 
    \exists \{Y^t_c \subseteq V \mid t\in[\lifetime],\ c\in[k]\} \colon \\
    \Big(
        &\forall t \in L, \forall v \in V \colon
        \big(
            \bigvee_{c\in[k]} \big( v \in Y^t_c \wedge \bigwedge_{c'\in[k]}
                ( v \in Y^t_{c'} \implies c=c')\big)\ 
        \big)
        \ \wedge\\&
        \forall t \in L, \forall e \in E, \forall u,v \in V\colon \bigwedge_{c \in [k]}
        \big(
            \inc(e,u)\wedge \inc(e,v)\wedge \pres(e,t)\wedge u\in Y^t_c
            \implies v\notin Y^t_c
        \big) 
    \ \Big)
    }
    {MSO}
    {novel}


%% file: Formulas/subsec--Matching.tex
A \textsc{$\Delta$-Temporal Matching} as defined by \cite{mertzios_computing_2020} is a set of temporal edges $M_\ecal\subseteq\ecal$ such that for every vertex $v\in V$ the times at which $v$ is incident to edges in $M_\ecal$ differ pairwise by at least $\Delta$. The problem \textsc{0-1-timed Matching} from \cite{mandal_maximum_2022} asks for a $\Delta$-temporal matching with $\Delta=2$.


\MSOFormulaD{deltamatchingLIFE}{$\Delta$-temporal matching (\lifetime)}
    {for all $v\in V$ and every pair of distinct temporal edges $(e,t),(e',t')\in M_\ecal$ incident to $v$ holds $\lvert t - t'\rvert\geq \Delta$}
    {\Delta}
    {\lifetime
    }
    {constant$^*$}
    {() = \exists M_\ecal\subseteq\ecal\colon\Big(\ 
    \forall v\in V, \forall \varepsilon_1 \in M_\ecal, \varepsilon_2\in M_\ecal\setminus\{\varepsilon_1\} \colon \\& 
    \big(\ \inc(\varepsilon_1,v)\wedge \inc(\varepsilon_2,v) \implies 
            ( \neg {\msovarphi{TEint}}_{\Delta}(\varepsilon_1,\varepsilon_2)\ \wedge\ \neg {\msovarphi{TEint}}_{\Delta}(\varepsilon_2,\varepsilon_1) )\ 
    \big)
    \ \Big)
    }
    {MSO}{novel}

    \MSOFormulaD{deltamatchingVIM}{$\Delta$-temporal matching (\vim)}
    {for all $v\in V$ and every pair of distinct temporal edges $(e,t),(e',t')\in M_\ecal$ incident to $v$ holds $\lvert t - t'\rvert\geq \Delta$}
    {\Delta}
    {\vim
    }
    {constant$^*$}
    {() = \exists M_\ecal\subseteq\ecal\colon\Big(\ 
    \forall v\in V, \forall \varepsilon_1 \in M_\ecal, \varepsilon_2\in M_\ecal\setminus\{\varepsilon_1\}, \exists \vtimbag_1,\vtimbag_2\in \vtimbagset \colon \\& 
    \big(\ (\pres(\eps_1,\vtimbag_1) \wedge \pres(\eps_2,\vtimbag_2)) \wedge \inc(\varepsilon_1,v)\wedge \inc(\varepsilon_2,v) \implies 
            ( \neg {\msovarphi{TEint}}_{\Delta}(\vtimbag,\vtimbag)\ \wedge\ \neg {\msovarphi{TEint}}_{\Delta}(\vtimbag_2,\vtimbag_1) )\ 
    \big)
    \ \Big)
    }
    {MSO}{novel}

\textsc{Temporal Matching} by \cite{cioni_matching_2025} is defined as a set of temporal edges such that no two temporal edges in that matching may be consecutive in a temporal path.
\MSOFormula{tmatching}{temporal matching}
    {for every pair of distinct temporal edges $\varepsilon_1,\varepsilon_2\in M_\ecal$, $\varepsilon_2$ cannot follow $\varepsilon_1$ in a temporal path}
    {M_\ecal}
    {\lifetime, \tdegree, \vim, \tim}
    {constant$^*$}
    {
    \forall \varepsilon_1 \in M_\ecal, \forall \varepsilon_2 \in M_\ecal\setminus\{\varepsilon_1\} \colon
    \big(
        \neg \msovarphi{psuc}(\varepsilon_1,\varepsilon_2)
        \ \wedge\
        \neg \msovarphi{psuc}(\varepsilon_2,\varepsilon_1)
    \big)
    }
    {FO$^\star$}
    {novel}

%% file: Formulas/subsec--Cover.tex
Covering problems in temporal graphs can be classified by the type of objects being covered and by the temporal semantics under which coverage is required, such as snapshot-wise, over time, interval-based, or with bounded changes.
The objects covering/being covered are the static edges~$E$, the temporal edges~\ecal, and the vertices~$V$ or temporal vertex instances~$V\times[\Lambda]$. In this subsection the elements in $V$ are referred to as \textit{static vertices}, and in $V\times[\Lambda]$ as \textit{temporal vertices}.

\subsubsection{Covering temporal edges}
All problems in this subsection require that every temporal edge $(e,t)\in\ecal$ is covered by an incident object at the time~$t$ when it appears. The presented variants range from path-based coverings to snapshot-wise and interval-based vertex covers.

\input{Formulas/VCandDS/f-EECpath}

\input{Formulas/VCandDS/f-VCmultistage}

\input{Formulas/VCandDS/f-VCtimeline}

\subsubsection{Covering static edges}
All problems in this subsection require that every static edge $e\in E$ is covered by an incident object at some time when the edge is present. The presented variants range from vertex covers with coverage over time, sliding windows, and bounded changes between snapshots.

\input{Formulas/VCandDS/f-VCtemporal}

\input{Formulas/VCandDS/f-VCtemporalDelta}

\subsubsection{Covering temporal vertices}
The problem in this subsection requires that every temporal vertex $(v,t)\in V\times[\Lambda]$ that is incident to at least one temporal edge is covered by any incident object at time~$t$.

\input{Formulas/VCandDS/f-ECtemporal}

\subsubsection{Covering static vertices}
All problems in this subsection require that every static vertex $v\in V$ is covered by a selected object according to a specified temporal semantics. The presented variants range from snapshot-wise domination to domination over time, interval-based activation, reachability-based coverage, and bounded-change constraints.

\input{Formulas/VCandDS/f-DSsnapshot}

\input{Formulas/VCandDS/f-DStimeline}

\input{Formulas/VCandDS/f-DSover_time}

\input{Formulas/VCandDS/f-DSpermanent}

\input{Formulas/VCandDS/f-DSreach}

%% file: Formulas/VCandDS/f-EECpath.tex
\textsc{Temporal Path Exact Edge Cover} by~\cite{deligkas_how_2025} covers \textit{temporal edges} using \textit{temporal paths}.
    \MSOFormula{TPcoverTEE}{temporal path exact edge cover}
        {there exists a family of temporal edge sets $P_1,\dots,P_k$ such that each $P_i$ is a temporal path and for every $\eps\in\ecal$ exists exactly one $i\in[k]$ such that $\eps\in P_i$}
        {}
        {\lifetime, \tdegree, \vim, \tim}
        {$k^2$}
        {
        \exists P_1,\dots,P_k\subseteq \ecal\colon\Big(\ 
            \bigwedge_{i\in[k]} \big(\exists u,v\in V\colon \msovarphi{PEpath}(u,v,P_i) \big)\ \wedge \\&
            \forall \varepsilon \in \ecal \colon \Big(
            \bigvee_{i\in[k]} \big(\varepsilon \in P_i \wedge \bigwedge_{j\in[k], j\neq i} \varepsilon\notin P_j\big)\Big)
        \ \Big)}
        {MSO}
        {novel}

%% file: Formulas/VCandDS/f-VCmultistage.tex
\textsc{Multistage Vertex Cover} by~\cite{fluschnik2022multistage} covers \textit{temporal edges} using \textit{temporal vertices}. 
    It asks for one vertex cover $X_t$ for each snapshot $G_t$, and the sets are only allowed to change by at most $l$ vertices when moving to the next snapshot.
    \MSOFormula{multistageTVcoverTE}{multistage vertex cover}
        {there exists a family of vertex sets $X_1,\ldots,X_{\lifetime}\subseteq V$ such that 
        for every $t\in[\lifetime]$, $X_t$ is a vertex cover of $G_t$, and for every $t\in[\lifetime-1]$ it holds that $\lvert (X_t \setminus X_{t+1}) \cup (X_{t+1} \setminus X_t)\rvert \le \ell$}
        {}
        {\lifetime}
        {$\lifetime^*$}
        {\exists X_1,\ldots,X_{\lifetime} \subseteq V\bigwedge_{t\in[\lifetime]} \forall \eps\in\ecal \colon \big( \pres(\eps,t) \implies \exists v\in V\colon (\inc(v,\eps) \wedge v\in X_t \big)\ \wedge \\&
        \bigwedge_{t\in [\lifetime-1]} \exists S\subseteq V\colon \Big( \neg{\msovarphi{card}}_{\ell+1}(S) \wedge 
        \forall v\in V \colon  v\in S \biimplies \big( (v\in X_t \wedge v\notin X_{t+1}) \vee  (v\notin X_t \wedge v\in X_{t+1}) \big) \Big)
        }{MSO}{novel}

%% file: Formulas/VCandDS/f-VCtimeline.tex
\textsc{Timeline Vertex Cover} by~\cite{rozenshtein_network-untangling_2021} covers \textit{temporal edges} using \textit{temporal vertices over intervals}.
    Rather than selecting a vertex separately at each time step, choosing a vertex at time $t$ activates it for the entire interval $[t,t+\ell-1]$. A temporal edge must be covered at the time it appears by an endpoint that is active in an interval containing this time.
    \MSOFormula{kltimelineTVcoverSE}{timeline vertex cover}
        {there exists a family of vertex sets $X_1,\ldots,X_{\lifetime-\ell+1}\subseteq V$ such that 
        for every temporal edge $(\eps,t)$ there exists a time $s\in[\lifetime-\ell+1]$ and a vertex $v\in V$ such that $t\in[s,s+\ell-1]$, $\inc(\eps,v)$, and $v\in X_s$}
        {}
        {\lifetime}
        {$\lifetime^3$}
        {\exists X_1,\ldots,X_{\lifetime-\ell} : 
        \Big(
        \forall \eps\in \ecal, \forall t\in[\lifetime]\colon\\&
            \pres(\eps,t) \implies \big(
            \exists v\in V\colon \inc(\eps,v) \wedge \bigvee_{s\in[\lifetime-\ell+1]} 
                \big(
                    v\in X_s  \wedge
                    {\msovarphi{LintLIFE}}_{\ell}(t,s)
                \big)
            \big)
        \Big) \ \wedge\\&
        \Big(
        \forall v\in V\colon
            \neg\bigvee_{t_1,\ldots,t_{k+1}\in[\lifetime-\ell+1]} \big(
                \bigwedge_{i,j\in[k+1], i\neq j}(t_i\neq t_j)\ \wedge\
                \bigwedge_{i=1}^{k+1}(v\in X_{t_i}) \big)
        \Big)
        }
        {MSO}{novel}

        Note here that we use \lifetime to bound $k$, since we can assume that each vertex is in at most \lifetime many different intervals.

%% file: Formulas/VCandDS/f-VCtemporal.tex
\textsc{Temporal Vertex Cover} by~\cite{akrida_temporal_2020} covers \textit{static edges} using \textit{temporal vertex instances}.
        \MSOFormula{SVcoverSE}{temporal vertex cover (\lifetime)}
        {there exists a family of vertex sets $X_1,\ldots,X_{\lifetime}$ such that for every static edge $e=uv\in E$ there exists a time $t\in[\lifetime]$ such that $(e,t)\in\ecal$, and $u\in X_t$ or $v\in X_t$}
        {}
        {\lifetime}
        {\lifetime}
        {\exists X_1,\ldots,X_{\lifetime}\subseteq V\colon \Big(\ 
            \forall e\in E, \exists t\in [\lifetime]\colon \big(
            \pres(e,t) \wedge \exists v\in V\colon(\inc(e,v) \wedge v\in X_t)
            \big)
        \ \Big)}
        {MSO}{novel}

%% file: Formulas/VCandDS/f-VCtemporalDelta.tex
\textsc{$\Delta$-Temporal Vertex Cover} by~\cite{akrida_temporal_2020} covers \textit{static edges} using \textit{temporal vertex instances}.
    It is the sliding window variant of \textsc{Temporal Vertex Cover} with the additional criterion that every static edge is covered by a vertex-time pair in every interval of length $\Delta$. 
    \MSOFormulaD{DeltaTVcoverSE}{$\Delta$-temporal vertex cover (\lifetime)}
        {there exists a family of vertex sets $X_1,\ldots,X_{\lifetime}$ such that for every static edge $e=uv\in E$ and every time window beginning $t\in[\lifetime]$, there exists a time $t'\in[t,t+\Delta]$ such that $(e,t')\in\ecal$, and $u\in X_{t'}$ or $v\in X_{t'}$}
        {\Delta}
        {\lifetime}
        {$\lifetime^*$}
        {() = \exists X_1,\ldots,X_{\lifetime}\subseteq V\colon\Big(\ 
            \forall e\in E, \forall t\in[\lifetime], \exists t'\in [\lifetime]\colon \big( \\&
            t \timebefore t'\ \wedge {\msovarphi{LintLIFE}}_\Delta(t',t) \wedge \pres(e,t') \wedge \exists v\in V\colon(\inc(e,v) \wedge v\in X_{t'})
            \big)
        \ \Big)}
        {MSO}{novel}

%% file: Formulas/VCandDS/f-ECtemporal.tex
\textsc{Temporal Edge Cover} by~\cite{cioni_matching_2025} covers \emph{temporal vertices} by \textit{static edges}. A temporal edge cover is a set of temporal edges such that every vertex--time pair $(v,t)$ that participates in at least one temporal edge is incident to at least one selected temporal edge at time~$t$.
    \MSOFormula{SEcoverTV}{temporal edge cover}
        {there exists a set of static edges $X_E\subseteq E$ such that every non-isolated vertex in every snapshot is incident to an active edge from $X_E$}
        {}
        {\lifetime}
        {constant}
        {
        \exists X_E\subseteq E, \forall t\in[\lifetime], \forall v\in V\colon\\&
        \Big(
            \big(\exists \eps\in \ecal\colon (\inc(\eps,v)\wedge \pres(\eps,t))\big)
             \implies
            \exists e'\in X_E\colon (\inc(e',v)\wedge \pres(e',t))
        \Big)
        }
        {MSO}{novel}

%% file: Formulas/VCandDS/f-DSsnapshot.tex
\textsc{Snapshot Dominating Set} (also called \textit{temporal} or \textit{evolving dominating set})
by~\cite{casteigts_journey_2018} covers \textit{static vertices} using \textit{temporal vertices} - it requires \textit{coverage in each snapshot}.
\MSOFormula{snapshotTVcoverSV}{snapshot dominating set (\lifetime)}
    {there exists a family of static vertex sets $X_1,\ldots,X_{\lifetime}$ such that for every $t\in[\lifetime]$, $X_t$ is a dominating set of $G_t$}
    {}
    {\lifetime}
    {\lifetime}
    {
        \exists X_1,\ldots,X_{\lifetime}\subseteq V\colon
        \bigwedge_{t\in[\lifetime]}\ \forall v\in V\colon
        \Big(
            v\in X_t\ \vee\\&
            \exists u\in X_t, \exists \eps\in\ecal\colon
            \big(\pres(\eps,t)\wedge \inc(\eps,u)\wedge \inc(\eps,v)\big)
        \Big)
    }
    {MSO}{novel}

%% file: Formulas/VCandDS/f-DStimeline.tex
Like \textsc{Timeline Vertex Cover}, \textsc{Timeline Dominating Set} by~\cite{herrmann_timeline_2025} covers \textit{temporal vertices} using \textit{temporal vertices over time intervals}.
Rather than selecting a vertex separately at each time step, choosing a vertex at time $t$ activates it for the entire interval $[t,t+\ell-1]$. Each vertex must be dominated by being in in the dominating set, or being the endpoint of an edge which is active in an interval in which the other endpoint is selected.
    \MSOFormulaD{timelineTVcoverTV}{timeline dominating set}
        {there exists vertex sets $X_1,\ldots,X_{\lifetime-\ell+1}$ such that every vertex is dominated in every snapshot by a vertex active in an interval of length $\ell$, and each vertex is active in at most $k$ such intervals}
        {k,l}
        {\lifetime}
        {$\lifetime^*$}
        {() = \exists X_1,\ldots,X_{\lifetime-\ell+1}\subseteq V, \forall v\in V \colon \\&
        \Bigg(\ 
        \neg\Big(
            \bigvee_{t_1,\ldots,t_{k+1}\in[\lifetime-\ell+1]} \big(
                \bigwedge_{i,j\in[k+1], i\neq j}(t_i\neq t_j)\ \wedge\
                \bigwedge_{i=1}^{k+1}(v\in X_{t_i})
            \big)
        \Big) \ \wedge\ 
        \Big(
        \forall t\in[\lifetime], \exists s\in[\lifetime-\ell+1]\colon\\&
            \Big(\ {\msovarphi{LintLIFE}}_\ell(t,s)\ \wedge
            \big(
                v\in X_s\ \vee\
                \exists u\in V,\ \exists e\in E\colon
                    \big(u\in X_s \wedge \inc(e,u)\wedge \inc(e,v)\wedge \pres(e,t)\big)
            \big)\  \Big) 
        \Big)
        \ \Bigg)
        }
        {MSO}{novel}

%% file: Formulas/VCandDS/f-DSover_time.tex
\textsc{Temporal Dominating Set over time} (also called \emph{temporal domination}) by \cite{casteigts_journey_2018} covers \textit{static vertices} using \textit{static vertices}. It asks for a vertex set $S\subseteq V$ such that every vertex outside $S$ is dominated at least once over time. This is equivalent to solving \textsc{Dominating Set} on the footprint graph.

\MSOFormula{overtimeSVcoverSV}{temporal dominating set over time}
    {there exists a vertex set $S\subseteq V$ such that every vertex $v\notin S$ has a neighbor in $S$ at least once over time}
    {}
    {\lifetime}{constant}
    {
        \exists S\subseteq V, \forall v\in V\colon
        \Big(
            v\in S\ \vee\
            \exists u\in S, \exists t\in[\lifetime], \exists \eps\in\ecal\colon
                \big(\pres(\eps,t)\ \wedge\\& \inc(\eps,u)\wedge \inc(\eps,v)\big)
        \Big)
    }
    {MSO}{novel}

%% file: Formulas/VCandDS/f-DSpermanent.tex
\textsc{Permanent Dominating Set} by~\cite{casteigts_journey_2018} covers
\textit{static vertices} using \textit{static vertices}.
It asks for a vertex set $S\subseteq V$ that dominates every snapshot $G_t$.
\MSOFormula{permanentSVcoverSV}{permanent dominating set}
    {there exists a vertex set $S\subseteq V$ such that for every $t\in[\lifetime]$, $S$ is a dominating set of $G_t$}
    {}
    {\lifetime}{constant}
    {
        \exists S\subseteq V, \forall t\in[\lifetime], \forall v\in V\colon
        \Big(
            v\in S\ \vee\
            \exists u\in S, \exists \eps\in\ecal\colon
            \big(\pres(\eps,t)\ \wedge\\& \inc(\eps,u)\wedge \inc(\eps,v)\big)
        \Big)
    }
    {MSO}{}

%% file: Formulas/VCandDS/f-DSreach.tex
\textsc{Temporal Reachability Dominating Set} by~\cite{kutner_temporal_2023}
covers \textit{static vertices} using \textit{static vertices}, where coverage is achieved by temporal reachability rather than by incident edges.
\cite{kutner_temporal_2023} provide an MSO formulation for the \lifetime\ encoding;
this can be extended to all encodings as follows.
\MSOFormula{tempreachSVcoverSV}{temporal reachability dominating set \cite{kutner_temporal_2023}}
    {there exists a vertex set $S$ such that every vertex $v\in V$ is either in $S$ or reachable by a temporal path from some vertex in $S$}
    {}
    {\lifetime, \tdegree, \vim, \tim}
    {constant}
    {
        \exists S\subseteq V, \forall v\in V\colon
        \Big(
            v\in S \ \vee\ \exists u\in S \colon \msovarphi{path}(u,v)
        \Big)
    }
    {MSO}{\cite{kutner_temporal_2023}}